\documentclass{article} 
\usepackage{iclr2026_conference,times}

\usepackage{amsmath,amsfonts,bm}

\def\eqref#1{equation~\ref{#1}}

\def\1{\bm{1}}

\def\eps{{\epsilon}}

\DeclareMathAlphabet{\mathsfit}{\encodingdefault}{\sfdefault}{m}{sl}
\SetMathAlphabet{\mathsfit}{bold}{\encodingdefault}{\sfdefault}{bx}{n}

\DeclareMathOperator*{\argmax}{arg\,max}

\usepackage{hyperref}
\usepackage{url}
\input{iclr2026/vinod_macros_iclr2026}

\title{Missing Mass for Differentially Private \\ Domain Discovery}

\author{Travis Dick \\
Google Research\\
\And
Matthew Joseph \\ 
Google Research\\
\And
Vinod Raman \thanks{Work done while interning at Google Research.} \\
Department of Statistics \\
University of Michigan, Ann Arbor
}

\usepackage{subcaption}

\iclrfinalcopy 
\begin{document}

\maketitle

\begin{abstract}
    We study several problems in differentially private domain discovery, where each user holds a subset of items from a shared but unknown domain, and the goal is to output an informative subset of items. For set union, we  show that the simple baseline Weighted Gaussian Mechanism (WGM) has a near-optimal $\ell_1$ missing mass guarantee on Zipfian data as well as a distribution-free $\ell_\infty$ missing mass guarantee. We then apply the WGM as a domain-discovery precursor for existing known-domain algorithms for private top-$k$ and $k$-hitting set and obtain new utility guarantees for their unknown domain variants. Finally, experiments demonstrate that all of our WGM-based methods are competitive with or outperform existing baselines for all three problems.
\end{abstract}

\section{Introduction}
Modern data analysis often requires working in data domains like queries, reviews, and purchase histories that are \emph{a priori} unknown or impractically large (e.g., the set of all strings up to a fixed length). For these datasets, domain discovery is a critical first step for efficient downstream applications. Differential privacy~\citep{DMNS06} (DP) enables privacy-preserving analysis of sensitive data, but it complicates domain discovery.

In the basic problem of set union, for example, each user has a set of items, and the goal is simply to output as many of these items as possible. This is a necessary step before further analysis, so set union (also known as key selection or partition selection) is a core component of several industrial~\citep{WZL20, RSP21, AGJKV23} and open source~\citep{O25} DP frameworks. For similar reasons, there are by now many DP set union algorithms in the literature. However, there are almost no provable utility guarantees (see Section~\ref{subsec:related_work}). This makes it difficult to understand how well existing algorithms work, or how much they can be improved.

\textbf{Our Contributions.} We prove utility guarantees for several problems in DP domain discovery. First, by reframing DP set union in terms of mass instead of cardinality (i.e., the fraction of all items recovered, rather than the number of unique items), we prove utility guarantees for the simple and scalable Weighted Gaussian Mechanism~\citep{gopi2020differentially} (WGM). We first show that the WGM has near-optimal $\ell_1$ missing mass on  Zipfian data (Theorem \ref{thm:wgmup}). We then prove a similar but distribution-free $\ell_\infty$ missing mass guarantee (Theorem \ref{thm:wgmupmminf}).

Next, we build on these results by considering unknown domain variants of top-$k$ and $k$-hitting set and obtain further utility guarantees for simple algorithms that run WGM to compute a baseline domain and then run a standard known-domain algorithm afterward. Relying on Theorem \ref{thm:wgmupmminf} enables us to prove utility guarantees for both top-$k$ (Theorem \ref{thm:wgmthentopk}) and $k$-hitting set (Theorem \ref{thm:wgmthenhit}).

Finally, we evaluate our algorithms against the existing state of the art on six real-world datasets from varied domains (Section \ref{sec:exp}). These experiments demonstrate that, in addition to their theoretical guarantees, our WGM-based methods obtain strong empirical utility.

\subsection{Related Work}
\label{subsec:related_work}

\textbf{DP Set Union.} Early work by~\cite{korolova2009releasing} introduced the core idea of collecting a bounded number of items per user, constructing a histogram of item counts, and releasing items whose noisy counts exceed a carefully chosen threshold.~\citet{desfontaines2020differentially} developed an optimal algorithm for the restricted setting where each user contributes a single item.~\citet{gopi2020differentially} adapted noisy thresholding in the Weighted Gaussian Mechanism (WGM) by scaling contributions to unit $\ell_2$ norm.~\citet{swanberg2023dp} investigated a repeated version of WGM, and \citet{chen2025scalable} further built on these ideas by incorporating adaptive weighting to determine user contributions, and proved that the resulting algorithm dominates the WGM (albeit by a small margin, empirically). A separate line of work has studied sequential algorithms that attempt to choose user contributions adaptively, obtaining better empirical utility at the cost of scalability~\citep{gopi2020differentially, carvalho2022incorporating}.

We note that, as the utility results of~\citet{desfontaines2020differentially} and~\cite{chen2025scalable} are stated relative to other algorithms, our work is, to the best of our knowledge, the first to prove absolute utility guarantees for DP set union.

\textbf{DP Top-$k$.} While several algorithms have been proposed for retrieving a dataset's $k$ most frequent items given a known domain~\citep{bhaskar2010discovering, mckenna2020permute, qiao2021oneshot, gillenwater2022joint}, to the best of our knowledge only~\citet{durfee2019practical} provide an algorithm for the unknown domain setting (see discussion in Section~\ref{subsec:exp_topk}). They also provide a utility guarantee in terms of what~\citet{gillenwater2022joint} call $k$-relative error, which bounds the gap between the smallest-count output item and the $k^{th}$ highest-count item. In contrast, we prove a utility result for the more stringent notion of missing mass (see discussion in Section~\ref{sec:topk}).

\textbf{DP $k$-Hitting Set.}  
In $k$-hitting set, the objective is to output a set of $k$ items that maximizes the number of users whose subset intersects with it. This problem can also be viewed as an instance of cardinality-constrained submodular maximization. Previous works on private submodular maximization by \cite{mitrovic2017differentially} and \cite{chaturvedi2021differentially} establish approximation guarantees in the known-domain setting. However, they are not directly applicable when the domain is unknown. 
\section{Preliminaries}
\subsection{Notation}
Let $\Xcal$ denote a countable universe of items. A dataset $W$ of size $n$ is a collection of subsets $\{W_{i}\}_{i\in[n]}$ where $W_i \subset \Xcal$ and $|W_i| < \infty.$ We will use  $N = \sum_i |W_i|$ to denote the total number of items across all users in the dataset and $M := |\bigcup_i W_i|$ to denote the number of unique items across $W$. For an element $x \in \bigcup_i W_i$, we let $N(x) := \sum_i \mathbf{1}\{x \in W_i\}$ denote its frequency. For a number $r \in [M]$, we use $N_{(r)}$ to denote the $r$'th largest frequency after sorting $\{N(x)\}_{x \in \bigcup_i W_i}$ in decreasing order. We will use $\mathcal{N}(0, \sigma^2)$ to denote a mean-zero Gaussian distribution with standard deviation $\sigma.$ Finally, we use the notation $\tilde{O}_k\left( \cdot \right)$ to suppress poly-logarithmic factors in $k$ and likewise for $\tilde{\Omega}_k$ and $\tilde{\Theta}_k.$

\subsection{Differential Privacy}
We say that a pair of datasets $W, W'$ are neighboring if $W'$ is the result of adding or removing a single user from $W$. 
 In this work, we  consider randomized algorithms $\Acal: (2^{\Xcal})^{\star} \rightarrow 2^{\Xcal}$ which map a dataset $W$ to a random subset $S \subseteq \Xcal$. We say that $\Acal$ is $(\eps, \delta)$-differentially private if its distribution over outputs for two neighboring datasets are ``close."

\begin{definition}[\cite{DMNS06}] A randomized algorithm $\Acal$ is \emph{$(\eps, \delta)$-differentially private} if for all pairs of neighboring datasets $W, W'$, and all events $Y \subseteq 2^{\Xcal}$,
$$\mathbb{P}_{S \sim \Acal(W)}\left[S \in Y \right] \leq e^{\eps}\mathbb{P}_{S' \sim \Acal(W')}\left[ S' \in Y\right] + \delta.$$
\end{definition}

We only consider approximate differential privacy ($\delta > 0$) since we will often require that $\Acal(W) \subseteq \bigcup_i W_i$, which precludes pure differential privacy ($\delta = 0$).

\subsection{Private Domain Discovery and Missing Mass}

In private domain discovery, we are given a dataset $W$ of $n$ users, each of which holds a subset of items $W_i \subseteq \Xcal$ such that $|W_i| < \infty$ and $\Xcal$ is unknown. Given $W$, our is goal is to extract an \emph{informative} subset $S \subseteq \bigcup_i W_i$ that captures the ``domain" of $W$ while preserving differential privacy. In this paper, we often measure quality in terms of the \emph{missing mass} of $S$.
\begin{definition}
    Given dataset $W$ and output set $S$, the \emph{missing mass} of $S$ with respect to $W$ is
    \[
        \operatorname{MM}(W, S) := \sum_{x \in \bigcup_i W_i \setminus S} \frac{N(x)}{N}.
    \]
\end{definition}
Smaller values of $\operatorname{MM}(W, S)$ indicate the that $S$ better captures the high-frequency items in $\bigcup_i W_i.$  A useful perspective to the MM is that it is the $\ell_1$ norm of the vector $(N(x)/N)_{x \in \bigcup_i W_i \setminus S}.$ This view yields a generalization of the MM objective by taking the $p$'th norm of the vector of missing frequencies. That is, for $p \geq 0$, define
\begin{equation}\label{eq:MMp}
\operatorname{MM}_p(W, S) := \left\lVert \left(\frac{N(x)}{N}\right)_{x \in \bigcup_i W_i \setminus S} \right\rVert_p
\end{equation}
where $||\cdot||_p: \mathbb{R}^{\star} \rightarrow \mathbb{R}$  denotes the $\ell_p$ norm. The usual missing mass objective corresponds to setting $p=1$. However, it is also meaningful to set $p \neq 1$. For example, when $p = \infty$, the objective corresponds to minimizing the maximum missing mass. When $p = 0$,  we recover the cardinality-based objective studied by existing work (see Related Work).  

\section{Private Set Union} \label{sec:minmm}
In general, we consider algorithms that satisfy the following ``soundness'' property: an item only appears in the output if it also appears in the input dataset.

\begin{assumption} \label{assump:subset} For every algorithm $\Acal: (2^{\Xcal})^{\star} \rightarrow 2^{\Xcal}$ and dataset $W$, we require $\Acal(W) \subseteq \bigcup_i W_i.$
 \end{assumption}
 
 Assumption~\ref{assump:subset} is standard across works in the unknown domain setting. However, even with this assumption, it is difficult to obtain a meaningful trade-off between privacy and missing mass without assumptions on $W$.
 
 To see why, fix some $n \in \mathbb{N}$, and consider the singleton dataset $W$ where each user has a single, unique item,  such that $W_i = \{x_i\}$ and $x_i \neq x_j$ for $i \neq j.$ Fix $j \in [n]$ and consider the neighboring dataset $W'$ obtained by removing $W_j$ from $W.$ Since we require that $\Acal(W') \subseteq \bigcup_{i} W'_i$, we have that $\mathbb{P}\left[x_j \in \Acal(W')\right] = 0.$ Since $\Acal$ is $(\eps, \delta)$-differentially private, we know that $\mathbb{P}\left[x_j \in \Acal(W) \right] \leq \delta.$ Since $j \in [n]$ was picked arbitrarily, we know that this is true for all $j \in [n]$ and hence, $\mathbb{E}_{S \sim \Acal(W)}\left[ \operatorname{MM}(\Acal, S)\right]  \geq 1 - \delta.$ As $\delta$ is usually picked to be $o\left(\frac{1}{n}\right)$, it is not possible to significantly minimize MM for these datasets. 

Fortunately, in practice, these sorts of pathological datasets are rare. Instead, datasets often exhibit what is known as \emph{Zipf's} or \emph{Power} law \citep{Z49, G99, AH02, P14}.  This means that the frequency of items in a dataset exhibit a polynomial decay. Hence, one natural way of measuring the complexity of $W$ is by how ``Zipfian" it is. 

\begin{definition}[$(C, s)$-Zipfian] \label{def:zip} Let $C \geq 1$ and $s \geq 0$. A dataset $W$ is \emph{$(C, s)$-Zipfian} if $\frac{N_{(r)}}{N} \leq \frac{C}{r^{s}}$
for all $r \in [M]$, where $N_{(r)}$ is the $r$'th largest frequency and $N$ is the total number of items in $W$.
\end{definition}

In light of the hardness above, we first restrict our attention to datasets that are $(C, s)$-Zipfian for $s > 1$. When $s \leq 1$, the hard dataset previously outlined becomes a valid Zipfian dataset. We note that this restriction only impacts how we define the utility guarantee of the algorithm, and not its privacy guarantee; differential privacy is still measured with respect to the \emph{worst-case} pair of neighboring datasets.

As $s$ increases, the empirical mass gets concentrated more and more at the highest frequency item. Accordingly, any upper bound on missing mass should ideally decay as $s$ increases. Another important property of $(C, s)$-Zipfian datasets $W$ is that they restrict the size of any individual set $W_i.$ Lemma \ref{lem:Zipcard}, whose proof is in Appendix \ref{app:proofzipcard},  makes this precise. 

\begin{lemma} \label{lem:Zipcard} \ Let $W$ be any $(C, s)$-Zipfian dataset  . Then, $\max_i |W_i| \leq  (CN)^{1/s}.$
\end{lemma}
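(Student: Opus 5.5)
Fix a user $i$ and let $m := |W_i|$. The plan is to show that the Zipfian condition forces $m^s \le CN$. The key observation is that every item in $W_i$ appears in user $i$'s set, so each of these $m$ distinct items has frequency at least $1$. Consequently, the sorted frequency sequence $N_{(1)} \ge N_{(2)} \ge \cdots \ge N_{(M)}$ has at least $m$ entries that are at least $1$. In particular $M \ge m$, and since frequencies are positive integers for every item in $\bigcup_i W_i$, we simply have $N_{(r)} \ge 1$ for all $r \in [M]$.

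Next, apply Definition~\ref{def:zip} at rank $r = m \in [M]$:
\[
\frac{1}{N} \le \frac{N_{(m)}}{N} \le \frac{C}{m^{s}}.
\]
Rearranging gives $m^s \le CN$. Taking $s$-th roots, which is valid and monotone for $s > 0$, gives $m \le (CN)^{1/s}$. Since $i$ was arbitrary, $\max_i |W_i| \le (CN)^{1/s}$.

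There is no real obstacle here. The only care needed is in edge cases. If $W_i = \emptyset$, the bound holds trivially. If $s = 0$, the expression $(CN)^{1/s}$ is degenerate. However, the section restricts attention to $s > 1$, and the argument only needs $s > 0$, so I would note that assumption explicitly. One might hope to use the sharper fact that the ranks of user $i$'s items could be larger than $m$, but the crude bound $N_{(m)} \ge 1$ already suffices for the stated inequality.
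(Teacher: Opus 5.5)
Your proof is correct and matches the paper's argument: both observe that $r = \max_i |W_i|$ is a valid rank with $N_{(r)} \geq 1$, then apply the Zipfian bound $N_{(r)} \leq CN/r^s$ to get $r^s \leq CN$. Your explicit check that $M \geq |W_i|$, and your remark that the argument needs $s > 0$ (Definition~\ref{def:zip} allows $s \geq 0$), are small clarifications the paper leaves implicit.
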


The rest of this section uses these two properties of Zipfian-datasets to obtain high-probability upper bounds on the missing mass. Our main focus will be on a simple mechanism used in practice known as the Weighted Gaussian Mechanism (WGM) \citep{gopi2020differentially}.
\subsection{The Weighted Gaussian Mechanism} \label{sec:wgm} The WGM is parameterized by a noise-level $\sigma > 0$, threshold $T \geq 1$, and user contribution bound $\Delta_0 \geq 1$. Given a dataset $W$, the WGM operates in three stages. In the first stage, the WGM constructs a random dataset by subsampling without replacement from each user's itemset to ensure that each user has at most $\Delta_0$ items. In the second, stage the WGM constructs a weighted histogram over the items in the random dataset. In the third stage, the WGM computes a noisy weighted histogram by adding mean-zero Gaussian noise with standard deviation $\sigma$ to each weighted count. Finally, the WGM returns those items whose noisy weighted counts are above the threshold $T$.
Pseudocode appears in Algorithm~\ref{alg:wgm}.

\begin{algorithm}
\caption{Weighted Gaussian Mechanism}\label{alg:wgm}
\KwIn{Dataset $W$, noise level $\sigma$, threshold $T$, and user contribution bound $\Delta_0$.}

Construct random dataset $\widetilde{W}$  such that for every $i \in [n]$, $\widetilde{W}_i \subseteq W_i$ is a random sample (without replacement) of size $\min\{\Delta_0, |W_i|\}$ from $W_i$. 

Compute weighted histogram $\widetilde{H}: \bigcup_{i} \widetilde{W}_i \rightarrow \mathbb{R}$ such that, for each $x \in \bigcup_{i} \widetilde{W}_i$,
$$\widetilde{H}(x) = \sum_{i=1}^n \left(\frac{1}{|\widetilde{W}_i|}\right)^{1/2} \mathbf{1}\{x \in \widetilde{W}_i\}.$$

For each $x \in \bigcup_{i} \widetilde{W}_i$, sample $Z_x \sim \mathcal{N}(0,\sigma^2)$ and compute noisy $\widetilde{H}'(x) := \widetilde{H}(x) + Z_x.$

Keep items with large noisy weighted counts $S = \Bigl\{x \in \bigcup_i \widetilde{W}_i: \widetilde{H}'(x) \geq T\Bigl\}$.

\KwOut{$S$}
\end{algorithm}

The following theorem from \cite{gopi2020differentially} verifies the approximate DP guarantee for the WGM. 

\begin{theorem} [Theorem 5.1 \citep{gopi2020differentially}]\label{thm:privwgm} For every $\Delta_0 \geq 1$, $\eps > 0$ and $\delta \in (0, 1)$, if $\sigma, T > 0$ are chosen such that 
$$\Phi\left(\frac{1}{2\sigma} - \eps \sigma \right) - e^{\eps} \Phi\left(-\frac{1}{2\sigma} - \eps \sigma \right) \leq \frac{\delta}{2}\quad \emph{\text{ and }} \quad T \geq \max_{1 \leq t \leq \Delta_0}\left(\frac{1}{\sqrt{t}} + \sigma \Phi^{-1}\left(\left(1-\frac{\delta}{2}\right)^{\frac{1}{t}} \right) \right)$$
then the $\operatorname{WGM}$ run with $(\sigma, T)$ and input $\Delta_0$ is $(\eps, \delta)$-differentially private.  
\end{theorem}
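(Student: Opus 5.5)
We prove Theorem~\ref{thm:privwgm} by splitting the privacy loss into two events. Fix neighboring datasets $W$ and $W' = W \cup \{W_{n+1}\}$ (the removal case is symmetric). Couple the subsampling step: every user $i \le n$ draws the same $\widetilde{W}_i$ in both runs, and the extra user draws $\widetilde{W}_{n+1}$ with $t := |\widetilde{W}_{n+1}| \le \Delta_0$. Once we condition on this randomness, WGM is postprocessing (thresholding at $T$) of a noisy histogram. Because the WGM is a mixture over the shared subsampling randomness, and $(\eps,\delta)$-DP is preserved under mixtures, it suffices to prove the bound for each fixed realization $\widetilde{W}_1,\dots,\widetilde{W}_{n+1}$.

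\textbf{Step 1: decompose the items.} Fix a realization and let $U = \bigcup_{i\le n}\widetilde{W}_i$. Split $\widetilde{W}_{n+1}$ into two parts. The first is $A = \widetilde{W}_{n+1}\cap U$, the items already in the domain, whose weighted counts shift by exactly $t^{-1/2}$. The second is $B = \widetilde{W}_{n+1}\setminus U$, the new items, which exist only in $W'$ and have weighted count exactly $t^{-1/2}$. The change on the common domain is a vector supported on $A$, with $\ell_2$ norm $\sqrt{|A|/t}\le 1$.

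\textbf{Step 2: the Gaussian mechanism on the common domain.} For the coordinates in $U$ we use the analytic Gaussian mechanism of Balle and Wang. It states that adding $\mathcal{N}(0,\sigma^2 I)$ to a query with $\ell_2$ sensitivity at most $1$ is $(\eps,\delta')$-DP exactly when $\Phi\left(\frac{1}{2\sigma}-\eps\sigma\right) - e^{\eps}\Phi\left(-\frac{1}{2\sigma}-\eps\sigma\right)\le \delta'$. The first hypothesis gives this with $\delta' = \delta/2$. Thresholding the coordinates in $U$ is postprocessing, so it preserves the guarantee.

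\textbf{Step 3: the new items.} Define the bad event $E$ that some $x\in B$ is output under $W'$. This requires $t^{-1/2} + Z_x \ge T$ for some $x \in B$. The $Z_x$ are independent and $|B|\le t$, so
\[
\Pr[E] = 1 - \Phi\left(\frac{T - t^{-1/2}}{\sigma}\right)^{|B|} \le 1 - \Phi\left(\frac{T - t^{-1/2}}{\sigma}\right)^{t}.
\]
The second hypothesis on $T$, applied at this $t \le \Delta_0$, gives $\Phi\left(\frac{T-t^{-1/2}}{\sigma}\right) \ge (1-\delta/2)^{1/t}$, and therefore $\Pr[E]\le \delta/2$. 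On $E^c$, the output under $W'$ is a function of the noisy counts on $U$ alone, with the same noise distribution as under $W$. The noise variables $Z_x$ for $x \in B$ are independent of everything else, so conditioning on $E^c$ does not change the distribution of the counts on $U$.

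\textbf{Step 4: combine.} For any event $Y$,
\[
\Pr[\Acal(W')\in Y] \le \Pr[\Acal(W')\in Y,\,E^c] + \delta/2 \le e^{\eps}\Pr[\Acal(W)\in Y] + \delta.
\]
Here the second inequality applies Step 2 to the postprocessed output restricted to $U$, and uses that on $E^c$ the output under $W'$ equals that restricted output. For the other direction, $\Acal(W)$ never outputs items in $B$, so $\Pr[\Acal(W)\in Y] = \Pr[\Acal(W) \in Y \cap \{S\subseteq U\}]$. Step 2 in reverse then gives $\le e^{\eps}\Pr[S'\cap U \text{-part} \in Y] + \delta/2$, and we handle the mismatch on $B$ again via $E$. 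This is cleanest if we write the output of $W'$ as a pair (output on $U$, output on $B$) and use $\Pr[E]\le\delta/2$.

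\textbf{Main obstacle.} The main obstacle is Step 3 combined with the conditioning in Step 4. We must make sure the bad event is charged only once and that the bound holds uniformly over the random $t$. Taking the maximum over $1\le t\le\Delta_0$ in the hypothesis on $T$ is exactly what handles the uniformity in $t$.
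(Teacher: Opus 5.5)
The paper gives no proof of this statement; it quotes it from \citet{gopi2020differentially}. Your decomposition is the standard argument for it: an $(\eps,\delta/2)$ Gaussian-mechanism bound on the common support $U$, plus a $\delta/2$ bound on the event that a brand-new item crosses the threshold. Your coupling and mixture reduction, Steps 1--3, and the direction $\Pr[\Acal(W')\in Y]\le e^{\eps}\Pr[\Acal(W)\in Y]+\delta$ are all correct.

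The gap is in the other direction. That direction is exactly the removal case you call symmetric, so it cannot be skipped. Write $g$ for thresholding on $U$, $V,V'$ for the noisy counts on $U$ under $W,W'$, and $R$ for the set of items of $B$ output under $W'$. Let $p=\Pr[g(V)\in Y]$, $q=\Pr[g(V')\in Y]$, and $\eta=\Pr[R\neq\emptyset]\le\delta/2$. Step 2 gives $p\le e^{\eps}q+\delta/2$. Suppose you then ``handle the mismatch on $B$ via $E$'' by writing $q\le \Pr[\Acal(W')\in Y]+\eta$. The second charge gets multiplied by $e^{\eps}$, and you only obtain $\delta(1+e^{\eps})/2>\delta$.

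The missing step uses the independence of $R$ and $V'$: $\Pr[\Acal(W')\in Y]\ge\Pr[g(V')\in Y,\,R=\emptyset]=(1-\eta)q$. Now split into two cases. If $e^{\eps}q\le 1$, then $e^{\eps}(1-\eta)q+\delta\ge e^{\eps}q-\eta+\delta\ge e^{\eps}q+\delta/2\ge p$. If $e^{\eps}q>1$, then $e^{\eps}(1-\eta)q+\delta>1-\eta+\delta\ge 1\ge p$.

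A cleaner equivalent handles both directions at once. The output is the post-processing $(S_U,S_B)\mapsto S_U\cup S_B$ of two independent coordinates. The $U$-coordinates are $(\eps,\delta/2)$-indistinguishable by Step 2. The $B$-coordinates ($\emptyset$ versus $R$) are $(0,\delta/2)$-indistinguishable, since their total variation distance is $\eta$. Basic composition for product distributions then gives $(\eps,\delta)$.

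A minor remark: $t=\min\{\Delta_0,|W_{n+1}|\}$ is deterministic. The maximum over $t$ in the hypothesis therefore buys uniformity over the choice of neighbor, not over any randomness. The case $W_{n+1}=\emptyset$ is trivial.
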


In Appendix \ref{app:proofofsigmaT}, we prove that the smallest choice of $\sigma$ and $T$ to satisfy the constraints in  Theorem~\ref{thm:privwgm} gives that $\sigma= \Theta\left(\tfrac{1}{\eps}\sqrt{\log(1/\delta)}\right)$ and $T= \Theta\left(\max\Big\{\sigma \sqrt{\log\left(\frac{\Delta_0}{\delta} \right)}, 1\Bigl\}\right) = \tilde{\Theta}_{ \delta, \Delta_0}(\max\{\sigma, 1\})$. This result will be useful for deriving asymptotic utility guarantees involving the WGM.


\subsection{Upper bounds on Missing Mass} \label{sec:wgmup}
Our main result in this section is Theorem \ref{thm:wgmup}, which provides a high-probability upper bound on the missing mass for the WGM in terms of the Zipfian parameters of the input dataset. 

\begin{theorem} \label{thm:wgmup}  For every $s > 1$, $C \geq 1$ and  $(C, s)$-Zipfian dataset $W$, if the $\operatorname{WGM}$ is run with noise parameter $\sigma > 0$,  threshold $T \geq 1$, and user contribution bound $\Delta_0 \geq 1$, then with probability at least $1-\beta$ over $S \sim \operatorname{WGM}(W, \Delta_0)$, we have that 
$$
\operatorname{MM}(W, S) =\tilde{O}_{\beta, C, N}\left(\frac{ \, C^{\frac1s}}{s-1} \left(\frac{\max_i |W_i|}{N \sqrt{q^{\star}}} \right)^{\frac{s-1}{s}} \left(T + \sigma \right)^{\frac{s-1}{s}} \right). 
$$
where $q^{\star} := \min\{\max_{i} |W_i|, \Delta_0\}.$ 
\end{theorem}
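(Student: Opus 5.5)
Let $K := \max_i |W_i|$, so that $q^\star = \min\{K,\Delta_0\}$. The plan has two parts. First, show that every item whose frequency exceeds a certain level $\tau$ is output with high probability. Second, bound the total mass of items below $\tau$ using the Zipfian tail, and then optimize the resulting expression over the rank cutoff.

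\textbf{Step 1 (expected weighted count).} Fix an item $x$ and a user $i$ with $x \in W_i$. Subsampling without replacement keeps $x$ with probability $\min\{\Delta_0,|W_i|\}/|W_i|$, and if kept, user $i$ contributes weight $|\widetilde W_i|^{-1/2} = \min\{\Delta_0,|W_i|\}^{-1/2}$. So the expected contribution is $\sqrt{\min\{\Delta_0,|W_i|\}}/|W_i|$. If $|W_i|\le\Delta_0$ this equals $|W_i|^{-1/2}\ge K^{-1/2}$. Otherwise it equals $\sqrt{\Delta_0}/|W_i| \ge \sqrt{\Delta_0}/K$. In both cases the contribution is at least $\sqrt{q^\star}/K$, which gives
\[
\mathbb{E}[\widetilde H(x)] \ge N(x)\sqrt{q^\star}/K .
\]

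\textbf{Step 2 (concentration and union bound).} The contributions to $\widetilde H(x)$ are independent across users and lie in $[0,1]$. A multiplicative Chernoff bound then shows $\widetilde H(x)\ge \tfrac12\mathbb{E}[\widetilde H(x)]$ except with probability $\exp(-\Omega(\mathbb{E}\widetilde H(x)))$. Separately, a Gaussian tail bound gives $Z_x \ge -\sigma\sqrt{2\log(M/\beta)}$ with high probability. Items of rank $r$ larger than the cutoff $r^\ast$ chosen below are not needed. We union bound only over the top $r^\ast \le M \le N$ items, which costs $\log(N/\beta)$ factors that the $\tilde O_{\beta,C,N}$ absorbs.

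Now set
\[
\tau := \frac{cK\bigl(T+\sigma\sqrt{\log(N/\beta)}+\log(N/\beta)\bigr)}{\sqrt{q^\star}} .
\]
Since $T\ge1$, this is $\tilde O\bigl(K(T+\sigma)/\sqrt{q^\star}\bigr)$. Every $x$ with $N(x)\ge\tau$ then has $\widetilde H'(x)\ge T$ with probability at least $1-\beta$, so $x \in S$. Two points need care here. The Chernoff failure probability has to be small, which requires $\mathbb{E}\widetilde H(x)\gtrsim \log(N/\beta)$; this is exactly why $\tau$ includes the additive $\log(N/\beta)$ term. Also, items with $x\notin\bigcup_i\widetilde W_i$ are missing outright, but these are covered by the same concentration bound. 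I expect this step to be the main technical obstacle.

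\textbf{Step 3 (tail mass).} Let $r^\ast$ be the largest rank with $N_{(r)}\ge\tau$. Every item of rank at most $r^\ast$ is output, so the missing mass is at most $\sum_{r>r^\ast}N_{(r)}/N$. For any cutoff $R$, split this sum at $R$. Ranks between $r^\ast$ and $R$ contribute at most $R\tau/N$, since each such frequency is below $\tau$. Ranks beyond $R$ contribute at most
\[
\sum_{r>R} C r^{-s} \le \frac{C R^{1-s}}{s-1}.
\]
Hence
\[
\operatorname{MM}\le \frac{R\tau}{N}+\frac{C R^{1-s}}{s-1}.
\]
Choose $R=(CN/\tau)^{1/s}$. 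Both terms then become $O\bigl(\tfrac{1}{s-1}C^{1/s}(\tau/N)^{(s-1)/s}\bigr)$. Substituting $\tau$ gives exactly
\[
\frac{C^{1/s}}{s-1}\left(\frac{K}{N\sqrt{q^\star}}\right)^{\frac{s-1}{s}}(T+\sigma)^{\frac{s-1}{s}},
\]
up to logarithmic factors. Lemma~\ref{lem:Zipcard} is not strictly needed for this bound; it only helps to interpret the size of $K$.
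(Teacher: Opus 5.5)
Your proof is essentially correct, but it is organized differently from the paper's. The paper splits $\operatorname{MM}(W,S)$ into mass lost to subsampling plus mass lost at the threshold, and controls three separate events. Heavy items survive subsampling (Lemma~\ref{lem:wgmuphelper}). A Chernoff bound on the \emph{unweighted} subsampled count gives $\widetilde N(x)\ge\frac{p^\star}{2}N(x)$ for $N(x)\ge\tau_2$ (Lemma~\ref{lem:wgmhelper2}). Gaussian concentration gives $\widetilde H(x)\le T_0$ for every sampled but unreleased item (Lemma~\ref{lem:wgmhelper3}). The paper chains these through $\widetilde N(x)\le\sqrt{q^\star}\,\widetilde H(x)$ and $\sqrt{q^\star}/p^\star=\max_i|W_i|/\sqrt{q^\star}$, so every missed sampled item has $N(x)\le\max\{\tau_1,\tau_2\}$, and it applies the Zipfian tail to each part separately. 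You instead lower-bound $\mathbb{E}[\widetilde H(x)]$ directly and apply one Chernoff bound to the weighted count. Since $\widetilde H(x)>0$ forces $x\in\bigcup_i\widetilde W_i$, this handles subsampling and thresholding at once, and the decomposition and Lemma~\ref{lem:wgmuphelper} become unnecessary. The price is a slightly cruder additive log term ($K\log(N/\beta)/\sqrt{q^\star}$ versus $\log(\cdot)/p^\star$), which $T\ge 1$ and the $\tilde O$ absorb. A side benefit is that your Steps 1--2 never use the Zipfian assumption, so they yield $\operatorname{MM}_\infty(W,S)<\tau/N$ directly, which is Theorem~\ref{thm:wgmupmminf}.

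Your Step 3 is also more careful than the paper's tail estimate. The paper asserts that every missed item has rank at least $r_0=(CN/\tau)^{1/s}$, which does not follow because the Zipfian condition only upper-bounds frequencies. It then uses $\frac{C}{s-1}(r_0-1)^{1-s}\le\frac{C}{s-1}r_0^{1-s}$, which is reversed for $s>1$. Your split at $R$ repairs this, but your final sentence overclaims. $R\tau/N$ equals $C^{1/s}(\tau/N)^{(s-1)/s}$ exactly, with no factor $\frac{1}{s-1}$. Similarly, for non-integer $R$ one only has $\sum_{r>R}r^{-s}\le\frac{s}{s-1}R^{1-s}$. So what you have proved carries the prefactor $\frac{s}{s-1}$ in place of $\frac{1}{s-1}$. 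This agrees with the statement up to a factor $2$ for $s\le 2$, and up to a constant on any bounded range of $s$.

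No argument can do better uniformly in $s$. A single user holding a single item is $(1,s)$-Zipfian for every $s$. With $T=2$ and tiny $\sigma$, the WGM releases nothing with probability essentially one, so $\operatorname{MM}=1$. Meanwhile the stated bound $\frac{1}{s-1}(T+\sigma)^{(s-1)/s}\to 0$ as $s\to\infty$ with $N=C=1$ and $\beta$ fixed. You should therefore state your conclusion with $\frac{s}{s-1}$, or restrict to bounded $s$, rather than assert the $\frac{1}{s-1}$ form.
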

Note that in Theorem \ref{thm:wgmup} the missing mass decays as the total number of items $N$ grows. Moreover, as $C$ decreases or $s$ increases, the upper bound on missing mass decreases when $N$ is sufficiently large compared to $\sigma$ and $T$. This matches our intuition, as decreasing $C$ and increasing $s$ results in datasets that exhibit faster decays in item frequencies so relatively more of the mass is contained in high-mass items. 

The proof of Theorem \ref{thm:wgmup} relies on three helper lemmas. Lemma \ref{lem:wgmuphelper} provides an upper bound on the missing mass due to the subsampling stage. Lemma \ref{lem:wgmhelper2} guarantees that a high-frequency item in the original dataset will remain high-frequency in the subsampled dataset. Finally, Lemma \ref{lem:wgmhelper3} provides a high-probability upper bound on the frequency of items that are missed by the WGM during the thresholding step.  We provide the full proof in Appendix \ref{app:proofwgmup}.

Theorem \ref{thm:wgmup} bounds the overall missing mass of the WGM mechanism. As corollary, note that if $\Delta_0 \geq \max_i |W_i|$ then the missing mass contributed by the subsampling step vanishes.  By Theorem \ref{thm:privwgm} and Lemma \ref{lem:wgmprivexact}, for every user contribution bound $\Delta_0 \geq 1$,  we need to pick $\sigma= \Theta\left(\tfrac{1}{\eps}\sqrt{\log(1/\delta)}\right)$ and $T =  \tilde{\Theta}_{\Delta_0, \delta}(\max\{\sigma, 1\})$ to to achieve $(\eps, \delta)$-differential privacy. Substituting these values into Theorem \ref{thm:wgmup} gives the following corollary. 

\begin{corollary} \label{cor:wgmupmmfull} 
In the setting of Theorem \ref{thm:wgmup}, if we choose the minimum $\sigma$ and $T$ to ensure $(\eps, \delta)$-DP, then with probability at least $1-\beta$, 
we have that 
$$
\operatorname{MM}(W, S) \leq \tilde{O}_{\beta, \delta, \Delta_0, C, N}\left(\frac{ C^{\frac1s}}{s-1} \left(\frac{\max_i |W_i|}{\eps \, N \sqrt{q^{\star}}}\right)^{\frac{s-1}{s}}\right),
$$
where $q^{\star} = \min\{\Delta_0, \max_i |W_i|\}.$ 
\end{corollary}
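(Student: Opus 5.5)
The corollary follows by substituting the privacy-calibrated parameters into Theorem~\ref{thm:wgmup}. Fix $\eps>0$ and $\delta\in(0,1)$, and take the smallest $\sigma$ and $T$ that satisfy the two constraints of Theorem~\ref{thm:privwgm}. Theorem~\ref{thm:privwgm} then makes the WGM $(\eps,\delta)$-DP. By the result in Appendix~\ref{app:proofofsigmaT} (Lemma~\ref{lem:wgmprivexact}), these minimal choices satisfy
\[
\sigma=\Theta\!\left(\tfrac{1}{\eps}\sqrt{\log(1/\delta)}\right), \qquad T=\Theta\!\left(\max\Bigl\{\sigma\sqrt{\log(\Delta_0/\delta)},\,1\Bigr\}\right).
\]
Since Theorem~\ref{thm:wgmup} requires $T\ge 1$, I first check that this choice satisfies it, and it does because of the maximum with $1$. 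Theorem~\ref{thm:wgmup} therefore applies directly: with probability at least $1-\beta$, the missing mass is $\tilde O_{\beta,C,N}$ of the stated expression with the factor $(T+\sigma)^{(s-1)/s}$.

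The key step is bounding $T+\sigma$. We have $T+\sigma=O\bigl(\max\{\sigma,1\}\sqrt{\log(\Delta_0/\delta)}\bigr)$. When $\sigma\ge 1$, which is the regime $\eps\lesssim\sqrt{\log(1/\delta)}$, this is $O\bigl(\tfrac1\eps\sqrt{\log(1/\delta)\log(\Delta_0/\delta)}\bigr)=\tilde O_{\delta,\Delta_0}(1/\eps)$.

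The main subtlety is large $\eps$, where $\sigma<1$ and $T=\Theta(1)$ does not scale like $1/\eps$. The clean way around this is to note that $\max\{\sigma,1\}\le \sigma+1$ and that, for $\delta<1$ bounded away from $1$, the lower bound $\sigma\gtrsim\sqrt{\log(1/\delta)}/\eps$ gives $1\lesssim\sqrt{\log(1/\delta)}/\eps$ whenever $\eps\lesssim\sqrt{\log(1/\delta)}$. This is the meaningful privacy regime implicitly assumed. Outside it, a constant factor can be absorbed: one would either state the assumption $\eps=O(\sqrt{\log(1/\delta)})$ or note that the big-$O$ allows replacing $1/\eps$ by $\max\{1/\eps,1\}$. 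I would state the assumption explicitly.

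Finally, I raise $T+\sigma=\tilde O_{\delta,\Delta_0}(1/\eps)$ to the power $(s-1)/s\in(0,1)$. This turns the polylogarithmic factors in $\delta$ and $\Delta_0$ into polylogarithmic factors of lower degree, which are absorbed into $\tilde O_{\beta,\delta,\Delta_0,C,N}$, and leaves $\eps^{-(s-1)/s}$. Combining it with $\bigl(\max_i|W_i|/(N\sqrt{q^\star})\bigr)^{(s-1)/s}$ inside a single power gives the displayed bound. All other factors, $C^{1/s}/(s-1)$ and the definition of $q^\star$, carry over unchanged from Theorem~\ref{thm:wgmup}.
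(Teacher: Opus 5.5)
Your proposal is correct and matches the paper's own one-line argument: plug the privacy-calibrated $\sigma=\Theta(\tfrac1\eps\sqrt{\log(1/\delta)})$ and $T=\tilde\Theta_{\Delta_0,\delta}(\max\{\sigma,1\})$ from Lemma~\ref{lem:wgmprivexact} into Theorem~\ref{thm:wgmup}, then absorb the logarithmic factors after raising to the power $(s-1)/s$. Your large-$\eps$ caveat is a genuine refinement the paper glosses over: every valid $T$ is at least $1$, so $(T+\sigma)^{(s-1)/s}$ cannot decay like $\eps^{-(s-1)/s}$, and the bound needs either $\eps=O(\sqrt{\log(1/\delta)})$ or $1/\eps$ replaced by $\max\{1/\eps,1\}$ (calling the excess ``a constant factor'' is inaccurate, but both fixes you propose are the right ones).
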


Corollary \ref{cor:wgmupmmfull} shows that the error due to subsampling can dominate the missing mass. Accordingly, one should aim to set $\Delta_0$ as close as possible to $\max_i |W_i|$. In fact, if one has apriori \emph{public} knowledge of $\max_i |W_i|$, then one should set $\Delta_0 = \max_i |W_i|.$  By Lemma \ref{lem:Zipcard}, for any $(C, s)$-Zipfian dataset $W$, $\max_i |W_i| \leq (CN)^{1/s}$ and hence the loss due to setting $\Delta_0$ will only be logarithmic in $N.$ However, Corollary \ref{cor:wgmupmmfull} omits logarithmic factors in $\Delta_0$, so one should avoid $\Delta_0 \gg \max_i |W_i|.$




Theorem \ref{thm:wgmlbmm}, whose proof is in Appendix \ref{app:lbmm}, shows that the dependence of $\eps$ and $N$ in our upper bound from Corollary \ref{cor:wgmupmmfull} can be tight.

\begin{theorem} \label{thm:wgmlbmm} Let $\Acal$ be any $(\eps, \delta)$-differentially private algorithm satisfying Assumption \ref{assump:subset}. For every $s > 1, C \geq 1$, there exists a $(C, s)$-Zipfian dataset $W^{\star}$ such that 
\begin{align*}
\mathbb{E}_{S \sim \Acal(W^{\star})}\left[\operatorname{MM}(W^{\star}, S) \right] = \Omega\left(\frac{C^{1/s}}{s-1}  \left(\frac{1}{\eps N}\right)^{(s-1)/s}  \ln \left(1 + \frac{e^{\eps} - 1}{2\delta} \right)^{(s-1)/s} \right).
\end{align*}
\end{theorem}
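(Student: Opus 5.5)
The plan is to build a hard $(C,s)$-Zipfian instance from many low-frequency items, and combine a group-privacy argument with Assumption~\ref{assump:subset} to show each such item is missed with constant probability.

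\textbf{Group-privacy bound.} Suppose an item $x$ appears in exactly $k$ users, each holding the singleton $\{x\}$. Removing these $k$ users one at a time gives a chain of $k$ neighboring datasets ending at $W'$ with $\mathbb{P}[x \in \Acal(W')]=0$ by Assumption~\ref{assump:subset}. Iterating $(\eps,\delta)$-DP along the chain gives
\[
\mathbb{P}[x \in \Acal(W)] \leq \delta\,(1+e^{\eps}+\dots+e^{(k-1)\eps}) = \delta\,\frac{e^{k\eps}-1}{e^{\eps}-1}.
\]
This is at most $1/2$ whenever
\[
k \leq k_0 := \frac{1}{\eps}\ln\!\left(1+\frac{e^{\eps}-1}{2\delta}\right).
\]
So every item of frequency at most $k_0$ held by singleton users is missed with probability at least $1/2$. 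This explains the $\ln(1+(e^\eps-1)/(2\delta))$ factor in the statement.

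\textbf{Construction.} I would take $W^\star$ as a union of singleton users. There is one heavy item (or a few heavy items) carrying most of the mass, and the frequencies follow $N(x_{(r)}) \approx \lfloor C N r^{-s}\rfloor$, so the dataset is $(C,s)$-Zipfian. The ranks $r \ge r_0$ with $C N r^{-s} \le k_0$ are exactly those with
\[
r \ge r_0 := \left(\frac{CN}{k_0}\right)^{1/s}.
\]
For these items the group-privacy bound applies. By linearity of expectation,
\[
\mathbb{E}[\operatorname{MM}] \ge \tfrac12 \sum_{r\ge r_0} \frac{C}{r^{s}}.
\]
Truncating the tail at some $r_1 \gg r_0$ and comparing with an integral gives roughly
\[
\frac{C\, r_0^{1-s}}{s-1} = \frac{C^{1/s}}{s-1}\left(\frac{k_0}{N}\right)^{(s-1)/s}.
\]
Substituting $k_0$ yields exactly the claimed
\[
\frac{C^{1/s}}{s-1}\left(\frac{1}{\eps N}\right)^{(s-1)/s}\ln\!\left(1+\frac{e^\eps-1}{2\delta}\right)^{(s-1)/s}.
\]

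\textbf{Main obstacle.} The difficulty is making the construction exactly consistent. Frequencies must be integers, the normalization $N=\sum_x N(x)$ must hold, the Zipfian inequality must hold at every rank (including the heavy head, which absorbs the leftover mass), and the tail must reach $r_1$ with $r_1^{1-s}$ a constant fraction below $r_0^{1-s}$. A concrete way to arrange this is to use a block of equal-frequency items at each frequency level $f \le k_0$ (roughly $(CN/f)^{1/s}$ total items of frequency at least $f$), then place the remaining mass on a top item with frequency at most $CN$. The integer-rounding losses then only affect constants. The sums are largest when $s$ is near $1$, where they are dominated by the tail; bounding the head mass by $C$-dependent constants keeps $N$ well defined.
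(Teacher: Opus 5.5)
Your argument follows the paper's proof. The chain-of-neighbours recurrence $\delta\sum_{i<k}e^{i\epsilon}\le\frac12$ is Lemma~\ref{lem:wgmlbhelper} with $k_0=b^\star$, and the singleton-user Zipfian instance, whose low-frequency tail is compared with $\int r^{-s}\,dr$, is the paper's construction. Your single heavy item absorbing the leftover mass makes precise something the paper leaves implicit: its requirement that $N_{(r)}/N=\Theta(C/r^{s})$ for all ranks cannot hold at $r=1$ with a lower constant above $1/C$.

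\textbf{The gap.} One step does not hold as you state it, and the paper skips it too (``take $M^\star$ (and hence $n$) large enough''). You assert that the tail can reach some $r_1$ with $r_1^{1-s}$ a constant fraction below $r_0^{1-s}$, so that rounding ``only affects constants''. But frequencies are integers $\ge 1$, and a $(C,s)$-Zipfian dataset has at most $(CN/f)^{1/s}$ items of frequency $\ge f$. Hence $r_1\le (CN)^{1/s}$ and $(r_1/r_0)^{s-1}\le k_0^{(s-1)/s}$, however large $N$ is. More sharply, in any such dataset the normalized mass on items of frequency at most $k_0$ is at most
\[
\frac{1}{N}\sum_{f=1}^{\lfloor k_0\rfloor}\Bigl(\frac{CN}{f}\Bigr)^{1/s}\;\le\;C^{1/s}N^{-\frac{s-1}{s}}\Bigl(1+\frac{s}{s-1}\bigl(k_0^{\frac{s-1}{s}}-1\bigr)\Bigr).
\]
The ratio of the right-hand side to the claimed bound is $(s-1)k_0^{-(s-1)/s}+s\bigl(1-k_0^{-(s-1)/s}\bigr)$. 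This is independent of $N$ and tends to $0$ as $s\to1$ with $\epsilon,\delta$ fixed.

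Your block construction attains the left-hand side up to lower-order terms in $N$, so it is optimal for this argument. What it actually proves is $\mathbb{E}[\operatorname{MM}]\ge \frac12 C^{1/s}N^{-(s-1)/s}\sum_{f\le k_0}f^{-1/s}$, up to lower-order terms. This matches the stated $\frac{C^{1/s}}{s-1}(k_0/N)^{(s-1)/s}$ up to an absolute constant only when $k_0^{(s-1)/s}$ is bounded away from $1$ (for example, $\ln k_0\ge\frac{s}{s-1}\ln 2$). Otherwise the theorem holds only with an $\Omega$ that depends on $s$, and the explicit $\frac{1}{s-1}$ factor is not justified.
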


The proof of Theorem \ref{thm:wgmlbmm} exploits Assumption \ref{assump:subset} by showing that any private algorithm that satisfies Assumption \ref{assump:subset} cannot output low-frequency items with high-probability. We end this section by noting that our proof technique in Theorem \ref{thm:wgmup} can also give us bounds on the $\ell_\infty$ missing mass (see Equation \ref{eq:MMp}). Note that unlike Theorem \ref{thm:wgmup}, Theorem \ref{thm:wgmupmminf}, whose proof is in Appendix \ref{app:proofwgmummminf}, does not require the dataset to be Zipfian.







\begin{theorem}\label{thm:wgmupmminf} Let $W$ be any dataset. For every $\eps > 0$,  $\delta \in (0, 1)$, and user contribution bound $\Delta_0 \geq 1$, picking $\sigma= \Theta\left(\tfrac{1}{\eps}\sqrt{\log(1/\delta)}\right)$ and $T = \tilde{\Theta}_{\Delta_0, \delta}(\max\{\sigma, 1\})$ gives that the $\operatorname{WGM}$ is $(\eps, \delta)$-differentially private and with probability at least $1-\beta$ over $S \sim \operatorname{WGM}(W, \Delta_0)$, we have 
$$
\operatorname{MM}_{\infty}(W, S) \leq \tilde{O}_{\Delta_0, \delta, \beta}\left(\frac{\max_i |W_i|}{\eps \, N \sqrt{q^{\star}}}\right),
$$
where $q^{\star} = \min\{\Delta_0, \max_i |W_i|\}.$
\end{theorem}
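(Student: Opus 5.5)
The plan is to show that every item whose true frequency exceeds a threshold $\tau = \tilde{O}\bigl(\max_i |W_i| (T+\sigma)/\sqrt{q^{\star}}\bigr)$ lands in $S$ with probability at least $1-\beta$. Once this holds, every missed item $x$ has $N(x) < \tau$, so $\operatorname{MM}_\infty(W,S) = \max_{x \notin S} N(x)/N \le \tau/N$.

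Privacy requires no new work. By Theorem~\ref{thm:privwgm}, together with the computation in Appendix~\ref{app:proofofsigmaT}, the stated choices $\sigma = \Theta(\tfrac{1}{\eps}\sqrt{\log(1/\delta)})$ and $T = \tilde{\Theta}_{\Delta_0,\delta}(\max\{\sigma,1\})$ make the WGM $(\eps,\delta)$-DP. Since $T \geq 1$, $\max\{\sigma,1\} \le T+\sigma = \tilde{O}(1/\eps)$ for $\eps \lesssim 1$, which is how the $1/\eps$ appears in the final bound. For large $\eps$, I would absorb the constant into the $\tilde O$ or note that $T+\sigma = \tilde O(\max\{1, 1/\eps\})$.

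For a fixed item $x$, I would run the argument in two stages.

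\emph{Subsampling.} Let $m = \max_i |W_i|$ and write $q_i = \min\{\Delta_0, |W_i|\}$. Each user $i$ with $x \in W_i$ keeps $x$ independently with probability $q_i/|W_i| \ge \min\{\Delta_0, m\}/m = q^{\star}/m$; when $|W_i| \le \Delta_0$ the probability is $1$, and otherwise it is $\Delta_0/|W_i| \ge \Delta_0/m$. If $x$ is kept, user $i$ contributes weight $1/\sqrt{q_i} \ge 1/\sqrt{q^{\star}}$. I would check this carefully: if $|W_i| \le \Delta_0$ then $q_i = |W_i| \le m$ and $q_i \le \Delta_0$, so $q_i \le q^{\star}$; otherwise $q_i = \Delta_0 < |W_i| \le m$, so again $q_i \le q^{\star}$. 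Hence $\widetilde H(x)$ dominates $(1/\sqrt{q^\star})\cdot \mathrm{Bin}(N(x), q^{\star}/m)$, whose mean is $N(x)\sqrt{q^{\star}}/m$. A multiplicative Chernoff bound, which is essentially Lemma~\ref{lem:wgmhelper2}, gives $\widetilde H(x) \ge N(x)\sqrt{q^\star}/(2m)$ with failure probability $\exp(-N(x) q^\star/(8m))$. This failure probability is at most $\beta/(2M)$ once $N(x) \gtrsim (m/q^\star)\log(M/\beta)$, and that condition is implied by the threshold $\tau$ up to log factors, since $m/q^\star \le m/\sqrt{q^\star}$.

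\emph{Noise.} With probability $1-\beta/(2M)$ we have $Z_x \ge -\sigma\sqrt{2\log(2M/\beta)}$. Then $x \in S$ whenever $N(x)\sqrt{q^\star}/(2m) \ge T + \sigma\sqrt{2\log(2M/\beta)}$, that is, whenever $N(x) \ge \tau := \tfrac{2m}{\sqrt{q^\star}}\bigl(T + \sigma\sqrt{2\log(2M/\beta)}\bigr)$.

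The main obstacle is the union bound over items. A naive bound over all $M$ items introduces $\log M$, which is not among the suppressed parameters $\Delta_0,\delta,\beta$. I would avoid it by union-bounding only over the items with $N(x) \ge \tau$. There are at most $N/\tau$ of them, since the frequencies sum to $N$. However, this count still depends on $N$, and I am unsure whether the intended statement hides that $\log N$ factor. A cleaner alternative avoids any union bound. Since $\operatorname{MM}_\infty$ is a maximum, it suffices that the single item $x^{*}$ achieving the largest frequency among those $\ge \tau$ is output; but a missed item could be any one of them, so some union is needed. I would therefore accept a $\log(N/\tau)$ factor, or, if the paper allows it, argue per item, noting that $\log$ terms in $N$ are treated as suppressed in the paper's analogous Theorem~\ref{thm:wgmup}.

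Combining the two stages under the union bound gives $\operatorname{MM}_\infty(W,S) \le \tau/N = \tilde O\bigl(\max_i|W_i|/(\eps N\sqrt{q^\star})\bigr)$.
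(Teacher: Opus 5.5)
Your proposal is correct and follows essentially the paper's argument. Both use a multiplicative Chernoff bound with keep-probability at least $p^\star = q^\star/\max_i|W_i|$ and weight at least $1/\sqrt{q^\star}$, Gaussian concentration for the noise, a union bound over the at most $N/\tau$ heavy items, and the identity $\sqrt{q^\star}/p^\star = \max_i |W_i|/\sqrt{q^\star}$, with the threshold taken as the maximum of your $\tau$ and the Chernoff threshold (the paper's $\max\{\tau_1,\tau_2\}$). The one difference is that you show heavy items are included directly rather than upper-bounding $\widetilde H$ on missed items, which makes the paper's separate subsampling step (Lemma~\ref{lem:wgmuphelpermminf}) redundant.

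Your two caveats apply equally to the paper. Its proof also carries $\log(3N/\beta)$ and $\log(6N/\beta)$ factors that the stated $\tilde O_{\Delta_0,\delta,\beta}$ silently absorbs. The $1/\eps$ form of the bound also implicitly requires $\sigma = \Omega(1)$, since otherwise $T+\sigma = \Theta(1)$ up to logarithmic factors.
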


Upper bounds on the $\ell_{\infty}$ norm missing mass will be useful for deriving guarantees for the top-$k$ selection (Section \ref{sec:topk}) and $k$-hitting set (Section \ref{sec:khit}) problems.
\section{Applying the Weighted Gaussian Mechanism} \label{sec:appl}
This section applies WGM to construct unknown domain algorithms for top-$k$ and $k$-hitting set. For both problems, we spend half of the overall privacy budget running WGM to obtain a domain $D$, and then spend the other half of the privacy budget running a known-domain private algorithm, using domain $D$, for the problem in question. By basic composition, the overall mechanism satisfies the desired privacy budget. Pseudocode for this approach is given in Algorithm~\ref{alg:meta}. 

\begin{algorithm}
\caption{Meta Algorithm}\label{alg:meta}
\KwIn{Dataset $W$, noise-level and threshold $(\sigma, T)$, output size $k$, user contribution bound $\Delta_0 \geq 1$, known-domain mechanism $\Bcal$}

Let $D \leftarrow \operatorname{WGM}(W, \Delta_0)$ be the output of WGM  with noise-level and threshold $(\sigma, T)$ and input $\Delta_0$

Let $S \leftarrow \Bcal(W, D, k)$ be the output of $\Bcal$ on input $W$ and domain $D$

\KwOut{$S$}
\end{algorithm}
In the next two subsections, we introduce the top-$k$ selection and $k$-hitting set problems, summarize existing known-domain algorithms, and provide the specification of all algorithmic parameters.
An important difference between the results in this section and that of Section \ref{sec:minmm}, is that by using $\operatorname{MM}_{\infty}$ bounds, we no longer require our dataset to be Zipfian in order to get meaningful guarantees. 

\subsection{Private Top-$k$ Selection} \label{sec:topk} In the DP top-$k$ selection problem, we are given some $k \in \mathbbm{N}$ and our goal is to output, in decreasing order, the $k$ largest frequency items in a dataset $W$. Various loss objective have been considered for this problem, but we focus on missing mass.
\begin{definition}
\label{def:top_k_mm}
    For a dataset $W$, $k \in \mathbb{N}$, $q \leq k$ and ordered sequence of domain elements $S = (x_1, ..., x_q)$, we denote the \emph{top-$k$ missing mass} by
    \[
        \operatorname{MM}^k(W, S) = \frac{\sum_{i=1}^k N_{(i)} - \sum_{i=1}^q N(x_i)}{N}.
    \]
\end{definition}
We let the sequence $S$ have length $q \leq k$ because we will allow our mechanisms to output less than $k$ items, which will be crucial for obtaining differential privacy when the domain $\Xcal$ is unknown.  Note that $\operatorname{MM}(W, S) = \operatorname{MM}^k(W, S)$ if one takes $k = |\bigcup_i W_i|.$ As before, our objective is to design an approximate DP mechanism $\Bcal$ which outputs a sequence $S \subseteq \bigcup_i W_i$ of size at most $k$ that minimizes $\operatorname{MM}^k(W, S)$ with high probability. 

To adapt Algorithm \ref{alg:meta} to top-$k$, we need to specify a known-domain private top-$k$ algorithm. We use the peeling exponential mechanism (see Algorithm \ref{alg:expmechpeel}) for its simplicity, efficiency, and tight privacy composition. Its privacy and utility guarantees appear in Lemmas \ref{lem:privexppeel} and \ref{lem:expmechpeel} respectively.

\begin{algorithm}
\setcounter{AlgoLine}{0}
\caption{Peeling Exponential Mechanism}\label{alg:expmechpeel}
\KwIn{Dataset $W$, domain $D$, noise-level $\lambda$, output size $k \leq |D|$}

Let $N(x) = \sum_{i=1}^n \mathbf{1}\{x \in W_i\}$ for $x \in D.$

Let $\tilde{N}(x) = N(x) + Z_x$ for $x \in D$ where $Z_x \sim \operatorname{Gumbel}(\lambda).$

\KwOut{Ordered sequence $(x_1, \dots, x_k)$ such that $\tilde{N}(x_i) = \tilde{N}_{(i)}$ for all $i \in [k].$}
\end{algorithm}

\begin{lemma} [Corollary 4.1
~\citep{durfee2019practical}] \label{lem:privexppeel} For every $\eps > 0$, $\delta \in (0, 1)$ and $k \geq 1$, if  $\lambda = \tilde{O}_{\delta}\left(\frac{\sqrt{k}}{\eps}\right)$,  then Algorithm \ref{alg:expmechpeel} is $(\eps, \delta)$-differentially private. 
\end{lemma}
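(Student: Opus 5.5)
The plan is to reduce the peeling exponential mechanism to $k$ adaptive applications of the exponential mechanism with Gumbel noise, and then apply advanced composition. This is the argument of \citet{durfee2019practical}, and we only sketch it.

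\textbf{Sensitivity of the counts.} Under add/remove-one-user neighbors, each count $N(x)$ for $x \in D$ changes by at most $1$. Write $u(x) = N(x)$ for the utility of item $x$; its sensitivity is $1$.

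\textbf{Step 1: one Gumbel argmax is an exponential mechanism.} Use the Gumbel-max fact: if $Z_x \sim \operatorname{Gumbel}(\lambda)$ are i.i.d., then
\[
\mathbb{P}\Big[\arg\max_x \big(N(x)+Z_x\big) = y\Big] \;\propto\; \exp\big(N(y)/\lambda\big).
\]
So a single argmax is the exponential mechanism with parameter $\eps_0 = 2/\lambda$, and with the sharper bounded-range analysis it is $1/\lambda$-DP.

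\textbf{Step 2: the top-$k$ output equals iterated peeling.} Show that the joint law of the ordered top-$k$ tuple $(x_1,\dots,x_k)$ of the noisy counts equals the law of $k$ sequential exponential mechanisms, each run on $D$ with the previously chosen items removed. This follows from memorylessness, i.e. the max-stability of the Gumbel distribution: conditioned on the identity of the argmax, the remaining noisy values are again distributed as independent Gumbel perturbations for the purpose of the next argmax. Concretely, for a fixed tuple $(x_1,\ldots,x_k)$, compute the probability in product form
\[
\prod_{j=1}^k \frac{e^{N(x_j)/\lambda}}{\sum_{x \in D\setminus\{x_1,\dots,x_{j-1}\}} e^{N(x)/\lambda}}.
\]
This is the step I expect to be the main obstacle. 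It requires an explicit integral computation: integrate over the value of the $j$-th largest noise and use the Gumbel CDF $\exp(-e^{-z/\lambda})$.

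\textbf{Step 3: adaptive composition.} Each round is $\eps_0$-DP with $\eps_0 = O(1/\lambda)$. The domain $D$ is treated as fixed (public), since in Algorithm \ref{alg:meta} it is the output of an already private WGM. Advanced composition then gives that the $k$ rounds are $(\eps', \delta)$-DP with
\[
\eps' = O\!\left(\eps_0\sqrt{k\log(1/\delta)} + k\eps_0^2\right).
\]
Setting $\lambda = C\sqrt{k\log(1/\delta)}/\eps$ makes $\eps' \le \eps$ for a suitable constant $C$, at least when $\eps \lesssim \sqrt{\log(1/\delta)}$. Larger $\eps$ can be handled by the tighter bounded-range composition of \citet{durfee2019practical}. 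Either way $\lambda = \tilde{O}_\delta(\sqrt{k}/\eps)$.
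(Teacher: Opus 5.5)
Your overall route is exactly the \citet{durfee2019practical} argument that the paper relies on: Gumbel-max turns each selection into an exponential mechanism, the one-shot noisy top-$k$ has the Plackett--Luce law of $k$ sequential exponential mechanisms on the shrinking domain, and composition finishes. The paper gives no proof beyond that citation and the explicit $\lambda = 1/\eps_0$ in Lemma~\ref{lem:fullprivexppeel}. Your argument is sound in the regime you actually verify, $\eps$ at most about $\sqrt{\log(1/\delta)}$. Two small corrections. First, the per-round parameter $1/\lambda$ rather than $2/\lambda$ comes from monotonicity of the counts under add/remove neighbors, not from bounded range; bounded range instead improves the composition constant to $k\eps_0^2/8$. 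Second, the clean fact behind Step 2 is that for independent Gumbels the argmax is independent of the maximum. Conditioned on the first argmax, the remaining noisy counts are not again independent Gumbel perturbations, since they are conditioned to lie below the maximum. Their argmax is nonetheless still proportional to $e^{N(x)/\lambda}$, which gives the product formula by induction.

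The step that fails is your final claim that bounded-range composition handles larger $\eps$ while keeping $\lambda = \tilde{O}_\delta(\sqrt{k}/\eps)$. Solving $k\eps_0^2/8 + \eps_0\sqrt{(k/2)\log(1/\delta)} = \eps$ gives exactly the second term of Lemma~\ref{lem:fullprivexppeel}. For $\log(1/\delta) \ll \eps \ll k$ this yields $\lambda \approx \sqrt{k/(8\eps)}$, which is larger than $\sqrt{k}/\eps$ by a factor $\sqrt{\eps/8}$.

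This is not slack in the analysis. Let $D$ consist of $k/2$ pairs $(a_j,b_j)$ with $N(a_j)=N(b_j)$, and let the neighbor add one user holding every $a_j$. Since $|D|=k$, the output is a full ranking. The indicators $\mathbf{1}\{a_j \text{ precedes } b_j\}$ are then independent $\mathrm{Bern}(1/(1+e^{-1/\lambda}))$ bits on one dataset and $\mathrm{Bern}(1/2)$ bits on the other. Their privacy loss has mean about $k/(16\lambda^2)$ and standard deviation about $\sqrt{k/8}/\lambda$. With $\lambda = c\sqrt{k}/\eps$ the mean is $\eps^2/(16c^2)$. So once $\eps$ is a large multiple of $c^2$ and $k$ is large, the loss exceeds $\eps+1$ with probability near one, and $(\eps,\delta)$-DP fails for every $\delta<1/2$.

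Hence the lemma's ``for every $\eps>0$'', and your proof, must be restricted to $\eps = O(\log(1/\delta))$ when $c$ is of order $\sqrt{\log(1/\delta)}$. Your advanced-composition computation covers that whole range if you fall back on basic composition when $\eps_0>1$, which only happens when $k$ is small relative to $\log(1/\delta)$. Beyond that range the correct noise level is the $\lambda = 1/\eps_0$ of Lemma~\ref{lem:fullprivexppeel}, not $\tilde{O}_\delta(\sqrt{k}/\eps)$.
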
 


\begin{lemma} \label{lem:expmechpeel} For every dataset $W$, domain $D$, $\lambda \geq 1$ and $k \leq |D|$, if Algorithm \ref{alg:expmechpeel} is run with noise-level $\lambda$, then with probability $1-\beta$ over its output $S$, we have that 
$$\frac{1}{N}\left(\sum_{x \in \Tcal_k(W, D)} N(x) - \sum_{x \in S} N(x)\right) \leq O\left(\frac{k\lambda}{N}\log\frac{|D|}{\beta} \right),$$
where $\Tcal_k(W, D) \subseteq D$ is the true set of top-$k$ most frequent items in $D$.
\end{lemma}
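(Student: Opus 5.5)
The statement to prove is Lemma~\ref{lem:expmechpeel}, the utility guarantee for the peeling exponential mechanism. The plan is to reduce it to a uniform concentration bound on the Gumbel noise and then compare the selected items with the true top-$k$ items of $D$.

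Let $Z_x \sim \operatorname{Gumbel}(\lambda)$ for $x \in D$. Gumbel tails are exponential on both sides: for $t>0$, $\mathbb{P}[Z_x > t] = 1 - e^{-e^{-t/\lambda}} \le e^{-t/\lambda}$ and $\mathbb{P}[Z_x < -t] = e^{-e^{t/\lambda}} \le e^{-t/\lambda}$. A union bound over the $|D|$ items then gives, with probability at least $1-\beta$,
\[
\max_{x\in D}|Z_x| \le \lambda \log(2|D|/\beta) =: \tau .
\]
We condition on this event for the rest of the argument.

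Let $S=(x_1,\dots,x_k)$ be the output, so $\tilde N(x_i)=\tilde N_{(i)}$ over $D$. Let $y_1,\dots,y_k$ be the true top-$k$ items of $D$ in decreasing order of $N$. The key step is the pointwise claim $N(x_i)\ge N(y_i)-2\tau$ for every $i\in[k]$.

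To see this, note that at least one of $y_1,\dots,y_i$ is not among $x_1,\dots,x_{i-1}$, by pigeonhole. Call it $y_j$ with $j\le i$. Then $y_j$ was still available when $x_i$ was chosen, so $\tilde N(x_i)\ge \tilde N(y_j)$. Hence
\[
N(x_i) \ge \tilde N(x_i)-\tau \ge \tilde N(y_j)-\tau \ge N(y_j)-2\tau \ge N(y_i)-2\tau .
\]

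Summing over $i\in[k]$ gives
\[
\sum_{x\in\Tcal_k(W,D)} N(x)-\sum_{x\in S}N(x) \le 2k\tau = O\bigl(k\lambda\log(|D|/\beta)\bigr).
\]
Dividing by $N$ yields the lemma; the constant $2$ inside the logarithm is absorbed into the $O(\cdot)$ because $|D|\ge k\ge 1$ and $\beta<1$. The hypothesis $\lambda\ge 1$ is not needed for this argument.

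The only delicate point is the matching in the pointwise claim. Comparing $x_i$ directly with $y_i$ fails, since $y_i$ may already have been output at an earlier step. The pigeonhole choice of $y_j$ with $j\le i$ avoids this. Everything else is a routine tail bound and union bound.
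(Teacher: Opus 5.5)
Your proof is correct. It reaches the same bound as the paper, $2k\lambda\log(2|D|/\beta)$ before constants are absorbed, from the same uniform Gumbel concentration event. The difference is in the combinatorial step.

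The paper works with sets. It takes $I=\mathcal{T}_k(W,D)\setminus S$ and $O=S\setminus\mathcal{T}_k(W,D)$ and pairs them by an arbitrary bijection $\pi$, which exists because the two sets have equal size. It uses only the fact that every output item has noisy count at least that of every non-output item, so each pair contributes $N(x)-N(\pi(x))\le Z_{\pi(x)}-Z_x\le 2\tau$. You instead match by rank via pigeonhole and prove $N(x_i)\ge N(y_i)-2\tau$ for every $i$.

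What each route buys:
\begin{itemize}
\item The paper's argument ignores output order entirely and pays only for the $|I|$ wrongly selected items, giving $2|I|\tau\le 2k\tau$.
\item Yours gives a stronger, ordered conclusion. The dual pigeonhole step also works: some $x_j$ with $j\le i$ lies outside $\{y_1,\dots,y_{i-1}\}$, so $N(x_j)\le N(y_i)$ and $\tilde N(x_i)\le\tilde N(x_j)$. This yields $N(x_i)\le N(y_i)+2\tau$, hence $|N(x_i)-N(y_i)|\le 2\tau$ for all $i$. That is what an order-sensitive loss needs, such as the top-$k$ $\ell_1$ loss in the appendix.
\end{itemize}

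One small fix: $|D|\ge 1$ and $\beta<1$ do not by themselves let you absorb the $2$ inside the logarithm. Take $|D|=1$ and $\beta\to 1$, where $\log(|D|/\beta)\to 0$. Instead, note that $k=|D|$ forces $S=D$ and zero loss. When $|D|\ge 2$, you have $2|D|/\beta\le(|D|/\beta)^2$, so $\log(2|D|/\beta)\le 2\log(|D|/\beta)$. The paper passes over the same point.
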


We provide the exact $\lambda$ to achieve $(\eps, \delta)$-differential privacy for Lemma \ref{lem:privexppeel} in Lemma \ref{lem:fullprivexppeel}. The proof of Lemma \ref{lem:expmechpeel} is standard, relies on Gumbel concentration inequalities, and appears in Appendix \ref{app:prooftopkup}.Similar upper bounds for other performance metrics have also been derived \citep{bafna2017price}. With Lemmas \ref{lem:expmechpeel} and \ref{lem:privexppeel} in hand, using the same choice of $(\sigma, T)$ as in Theorem \ref{thm:privwgm} for WGM yields our main result. The proof of Theorem \ref{thm:wgmthentopk} can be found in Appendix \ref{app:prooftopkup}. 

\begin{theorem} \label{thm:wgmthentopk} Fix $\eps > 0$,  $\delta \in (0, 1)$, and user contribution bound $\Delta_0 \geq 1$. For every dataset $W$   and $k \geq 1$, if one picks $\sigma= \Theta\left(\tfrac{1}{\eps}\sqrt{\log(1/\delta)}\right)$,  $T = \tilde{\Theta}_{\Delta_0, \delta/2}(\max\{\sigma, 1\})$ from Theorem \ref{thm:privwgm}, and $\lambda = \tilde{\Theta}_{\delta/2}\left(\frac{\sqrt{k}}{\eps}\right)$ from Lemma \ref{lem:privexppeel}, then Algorithm \ref{alg:meta}, run with Algorithm \ref{alg:expmechpeel}, is $(\eps, \delta)$-differentially private and with probability $1-\beta$, its output $S$ satisfies
$$
\operatorname{MM}^k(W, S) \leq \tilde{O}_{\beta, \delta, \Delta_0}\left(\frac{k}{N}\left(\frac{\max_i |W_i|}{\eps \sqrt{q^{\star}}}  +  \frac{\sqrt{k} \log (M)}{\eps } \right)\right),
$$
where $q^{\star} := \min\{\Delta_0, \max_i |W_i|\}.$
\end{theorem}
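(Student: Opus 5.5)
Privacy comes from basic composition. Run the WGM with privacy parameters $(\eps/2,\delta/2)$, so that Theorem~\ref{thm:privwgm} with the stated $(\sigma,T)$ certifies this step. Run the peeling exponential mechanism with $(\eps/2,\delta/2)$ and $\lambda=\tilde{\Theta}_{\delta/2}(\sqrt{k}/\eps)$, certified by Lemma~\ref{lem:privexppeel}. Assumption~\ref{assump:subset} holds because $D\subseteq\bigcup_i W_i$. If $|D|<k$, the mechanism outputs all of $D$ in noisy order, i.e.\ $q=|D|$ items; this is still a peeling over $D$ and remains private.

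For utility, fix the failure budget $\beta/2$ for each stage and take a union bound. Split the top-$k$ missing mass as
\[
\operatorname{MM}^k(W,S)=\frac{1}{N}\Bigl(\sum_{i=1}^k N_{(i)}-\sum_{x\in \Tcal_k(W,D)}N(x)\Bigr)+\frac{1}{N}\Bigl(\sum_{x\in\Tcal_k(W,D)}N(x)-\sum_{x\in S}N(x)\Bigr),
\]
where $\Tcal_k(W,D)$ denotes the top $\min\{k,|D|\}$ items of $D$.

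The second term is handled by Lemma~\ref{lem:expmechpeel}, applied with $|D|\le M$. This gives $O(k\lambda\log(M/\beta)/N)=\tilde{O}(k\sqrt{k}\log M/(\eps N))$. If $|D|<k$, the same argument applies with $k$ replaced by $|D|$, which only improves the bound.

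For the first term, use Theorem~\ref{thm:wgmupmminf} at privacy level $(\eps/2,\delta/2)$. With probability $1-\beta/2$, every item $x\notin D$ satisfies
\[
N(x)/N\le \tau:=\tilde{O}\bigl(\max_i|W_i|/(\eps N\sqrt{q^\star})\bigr).
\]
Let $x_{(1)},\dots,x_{(k)}$ be the true top-$k$ items. The key combinatorial step is to show
\[
\sum_{i\le k}N_{(i)}-\sum_{x\in\Tcal_k(W,D)}N(x)\le \sum_{i\le k,\;x_{(i)}\notin D}N(x_{(i)}).
\]
To see this, note that the true top-$k$ items lying in $D$ are $k$ or fewer items of $D$. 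Their total frequency is therefore at most that of $\Tcal_k(W,D)$, since $\Tcal_k(W,D)$ takes the $\min\{k,|D|\}$ largest frequencies in $D$, and this set contains at least as many elements. Each missing $x_{(i)}$ then contributes at most $N\tau$, so the first term is at most $k\tau$. This yields the first summand $\frac{k}{N}\cdot\frac{\max_i|W_i|}{\eps\sqrt{q^\star}}$ up to logarithmic factors.

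Adding the two terms gives the claimed bound. The main obstacle I expect is bookkeeping rather than substance. One point is making sure the $\ell_\infty$ guarantee is used for the items that actually matter, namely the true top-$k$ items excluded from $D$, rather than summing over all missed items. The other is checking that the halved privacy parameters only change the hidden logarithmic factors in $\tilde{O}_{\beta,\delta,\Delta_0}$.
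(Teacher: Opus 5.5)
Your proposal is correct and follows essentially the same route as the paper. Privacy comes from basic composition of an $(\eps/2,\delta/2)$ WGM with an $(\eps/2,\delta/2)$ peeling mechanism. Utility uses the same two-term split of $\operatorname{MM}^k(W,S)$: the domain-discovery term is bounded by $k\cdot\operatorname{MM}_\infty(W,D)$ via Theorem~\ref{thm:wgmupmminf}, and the peeling term by Lemma~\ref{lem:expmechpeel} together with $|D|\le M$. Your explicit combinatorial justification of the $k\cdot\operatorname{MM}_\infty(W,D)$ bound and your treatment of the case $|D|<k$ fill in details that the paper asserts without proof.
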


We end this section by proving that a linear dependence on $\frac{k}{\eps}$ on the top-$k$ missing mass is unavoidable for algorithms satisfying Assumption \ref{assump:subset} when $\eps \leq 1.$

\begin{corollary} \label{cor:topklb}
Let $\eps \leq 1$ and $\delta \in (0, 1)$. Let $\Acal$ be any $(\eps, \delta)$-differentially private algorithm satisfying Assumption \ref{assump:subset}. Then, for every $k \geq 1$, there exists a dataset $W$ such that 
$$
\mathbb{E}_{S \sim \Acal(W, k)}\left[\operatorname{MM}^k(W, S)\right]  \geq \tilde{\Omega}_{\delta}\left(\frac{k}{\eps N}\right).
$$
\end{corollary}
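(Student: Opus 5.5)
The plan is to construct a dataset in which the $k$ most frequent items each have frequency only about $\frac{1}{\eps}\ln\bigl(1+\frac{e^\eps-1}{2\delta}\bigr)$. We then use Assumption~\ref{assump:subset} together with group privacy to show that no such item can be output with constant probability. Concretely, fix $m := \lfloor \frac{1}{\eps}\ln\bigl(1+\frac{e^{\eps}-1}{2\delta}\bigr)\rfloor$, which is $\tilde{\Theta}_\delta(1/\eps)$ when $\eps \le 1$. Let $W$ consist of $km$ users. For each $j\in[k]$, exactly $m$ users hold the singleton $\{x_j\}$, with the $x_j$ distinct. Then $N = km$ and $N_{(1)}=\dots=N_{(k)}=m$, so $\sum_{i\le k} N_{(i)}/N = 1$. (If a larger $N$ is desired, one can pad with many distinct singleton items of frequency $1$. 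These do not change the top-$k$ mass, and the bound $k m/N$ then scales correctly; I would include this padding so the statement holds for general $N$.)

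Fix $j$ and let $W^{(j)}$ be $W$ with the $m$ users holding $x_j$ removed. By Assumption~\ref{assump:subset}, $\mathbb{P}[x_j\in\Acal(W^{(j)})]=0$. Group privacy over a chain of $m$ neighboring datasets gives
\[
\mathbb{P}[x_j\in\Acal(W)] \le \delta\sum_{t=0}^{m-1}e^{t\eps} = \delta\,\frac{e^{m\eps}-1}{e^\eps-1}.
\]
By the choice of $m$, this is at most $1/2$. This is the same computation that underlies Theorem~\ref{thm:wgmlbmm}, and I would reuse it from there.

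Now bound the top-$k$ missing mass. Every output item has frequency at most $m$, the unpadded frequencies are exactly $m$, and padding items have frequency $1 \le m$. Hence
\[
\sum_{i\le q}N(x_i) \le \sum_{x\in S\cap\{x_1,\dots,x_k\}} m + (\text{padding contributions}).
\]
In the unpadded case, the expected number of top items output is at most $k/2$, so $\mathbb{E}[\operatorname{MM}^k]\ge \frac{km - (k/2)m}{N} = \frac{km}{2N}$. With padding, each output padding item contributes only $1$ instead of $m$. So each of the $\ge k/2$ missed top slots still loses at least $m-1\ge m/2$, once $m\ge2$ (we may assume $\delta$ is small enough for this). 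This gives $\mathbb{E}[\operatorname{MM}^k]\ge \frac{km}{4N} = \tilde{\Omega}_\delta\bigl(\frac{k}{\eps N}\bigr)$.

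The main obstacle is verifying that $m=\tilde\Theta_\delta(1/\eps)$ for $\eps\le1$, which is where that hypothesis is used. Since $e^\eps-1\ge\eps$, we have $m \ge \frac{1}{\eps}\ln\bigl(1+\frac{\eps}{2\delta}\bigr)$, which is $\Omega(\frac{1}{\eps}\log(1/\delta))$ up to the $\log(1/\eps)$ slack absorbed into $\tilde\Omega_\delta$ when $\delta\le\eps$. A secondary concern is handling the degenerate regime where $m<1$. There the claimed bound is trivial, or the construction is taken with $m=1$ and the bound $1/2$ replaced by a constant.
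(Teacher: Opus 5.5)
Your proposal is correct and follows essentially the same route as the paper. Like the paper, you build a dataset whose $k$ top items each have frequency about $b^\star=\frac{1}{\eps}\ln\bigl(1+\frac{e^\eps-1}{2\delta}\bigr)$, use the group-privacy plus Assumption~\ref{assump:subset} argument of Lemma~\ref{lem:wgmlbhelper} to cap each item's inclusion probability at $1/2$, and conclude $\mathbb{E}[\operatorname{MM}^k]\ge kb^\star/(2N)$. Your extra care fills in details the paper glosses over without changing the argument: taking $m=\lfloor b^\star\rfloor$, the optional padding to make $N$ large, and noting that $b^\star=\tilde{\Omega}_\delta(1/\eps)$ implicitly requires $\eps$ not much smaller than $\delta$.
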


The proof of Corollary \ref{cor:topklb} is in Appendix \ref{app:lbtopk} and is largely a consequence of  Lemma~\ref{lem:wgmlbhelper}, which was used to prove the lower bound for set union (Theorem \ref{thm:wgmlbmm}).



\subsection{Private $k$-Hitting Set} \label{sec:khit}
In the $k$-hitting set problem, our goal is to output a set  $S$ of items of size at most $k$ which intersects as many user subsets as possible, which is useful for data summarization and feature selection \citep{mitrovic2017differentially}. More precisely, our objective is to design an approximate DP mechanism which maximizes the number of hits $\operatorname{Hits}(W, S)  := \sum_{i=1}^n \mathbf{1}\{S \cap W_i \neq \emptyset\}$.
Since this problem is also NP-hard  without privacy concerns \citep{K72}, we will measure performance relative to the optimal solution, i.e., show that with high probability, our algorithm output $S$ satisfies
$$\operatorname{Hits}(W, S) \geq \gamma \cdot \operatorname{Opt}(W, k) - \operatorname{err}(\eps, \delta, k)$$
where $\operatorname{Opt}(W,k) := \argmax_{S \subseteq \Xcal, |S| \leq k} \operatorname{Hits}(W, S)$ is the optimal value, $\operatorname{err}(\eps, \delta, k)$ is an additive error term that depends on problem specific parameters, and $\gamma \in (0, 1)$ is the approximation factor.

Like our algorithm for top-$k$ selection, our mechanism for the $k$-hitting problem will follow the general structure of Algorithm \ref{alg:meta}. We will take the known-domain algorithm $\Bcal$ to be the privatized version of the greedy algorithm for submodular maximization, as in Algorithm 1 from \cite{mitrovic2017differentially}. This mechanism repeatedly runs the exponential mechanism (equivalently the Gumbel mechanism) to pick an item that hits a large number of users.  After each iteration, we remove all users who contain the item output in the previous round and continue until we either have output $k$ items, run out of items, or run out of users, and return the overall set of items. We call this algorithm the User Peeling Mechanism and its pseudo-code is given in Algorithm \ref{alg:gup} in Appendix \ref{app:proofhitup}. 

By combining this with the same WGM choice of $(\sigma, T)$ as in Theorem \ref{thm:privwgm} for the first step of Algorithm~\ref{alg:meta}, we get the main result of this section. 
\begin{theorem} \label{thm:wgmthenhit} Fix $\eps > 0$ and $\delta \in (0, 1)$. For every dataset $W$,  $k \geq 1$, and user contribution bound $\Delta_0$, if one picks $\sigma= \Theta\left(\tfrac{1}{\eps}\sqrt{\log(1/\delta)}\right)$,  $T = \tilde{\Theta}_{\Delta_0, \delta/2}(\max\{\sigma, 1\})$ from Theorem \ref{thm:privwgm}, and $\lambda = \tilde{\Theta}_{\delta/2}\left(\tfrac{1}{\eps}\sqrt{k}\right)$ from Lemma \ref{lem:privexppeel}, then Algorithm \ref{alg:meta}, run with Algorithm \ref{alg:gup},  is $(\eps, \delta)$-differentially private and with probability $1-\beta$, its output $S$ satisfies
$$\operatorname{Hits}(W, S) \geq \left(1 - \frac{1}{e}\right)\operatorname{Opt}(W, k) - \tilde{O}_{\beta, \delta, \Delta_0}\left(\frac{k \cdot \max_i |W_i|}{\eps\sqrt{q^{\star}}} +\frac{k^{3/2}}{\eps} \log\left(Mk\right) \right),$$
where $q^{\star} := \min\{\Delta_0, \max_i |W_i|\}$ and $M = |\bigcup_i W_i|.$ 
\end{theorem}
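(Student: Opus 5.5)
Privacy follows from basic composition: the WGM step with $(\sigma,T)$ chosen by Theorem~\ref{thm:privwgm} at budget $(\eps/2,\delta/2)$ is $(\eps/2,\delta/2)$-DP. The User Peeling Mechanism (Algorithm~\ref{alg:gup}) run on the fixed public domain $D$ with Gumbel noise $\lambda=\tilde\Theta_{\delta/2}(\sqrt{k}/\eps)$ is $k$ adaptive exponential mechanisms on sensitivity-one marginal-hit scores. Removing one user changes each marginal count by at most one, so the advanced-composition analysis behind Lemma~\ref{lem:privexppeel} makes it $(\eps/2,\delta/2)$-DP. Since $D$ is a post-processable output of the first stage, the pipeline is $(\eps,\delta)$-DP.

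For utility I plan to compare against the best solution inside $D$, in two steps. First, let $S^\star$ with $|S^\star|\le k$ attain $\operatorname{Opt}(W,k)$. By Theorem~\ref{thm:wgmupmminf} (with $\beta/2$), with probability $1-\beta/2$ every $x\notin D$ has
$$N(x)\le \tau:=\tilde O_{\Delta_0,\delta,\beta}\Bigl(\tfrac{\max_i|W_i|}{\eps\sqrt{q^\star}}\Bigr).$$
Then $S^\star\cap D$ is feasible in $D$, and by monotonicity and subadditivity of coverage,
$$\operatorname{Hits}(W,S^\star)\le \operatorname{Hits}(W,S^\star\cap D)+\sum_{x\in S^\star\setminus D}N(x),$$
which gives $\operatorname{Opt}_D(W,k)\ge \operatorname{Opt}(W,k)-k\tau$.

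Second, I will prove a noisy-greedy guarantee on $D$. At round $j$, each chosen item's marginal gain (number of remaining users it hits) is within $2\eta$ of the maximum marginal gain over $D$. Here $\eta$ bounds all Gumbel deviations, and a union bound over at most $|D|\le M$ items and $k$ rounds gives
$$\eta=O\bigl(\lambda\log(Mk/\beta)\bigr)\quad\text{with probability } 1-\beta/2.$$
The standard submodular argument then applies. Let $G_j$ be the hits after $j$ rounds. The remaining optimum mass is covered by at most $k$ items of $D$, so the best marginal gain is at least $(\operatorname{Opt}_D-G_j)/k$. Hence
$$\operatorname{Opt}_D-G_{j+1}\le (1-1/k)(\operatorname{Opt}_D-G_j)+2\eta.$$
Unrolling over $k$ rounds gives $G_k\ge(1-1/e)\operatorname{Opt}_D-2k\eta$.

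Early termination is harmless. If users run out, everything is hit. If items run out, every item of $D$ is selected, so $G\ge \operatorname{Opt}_D$.

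Combining the two steps with a union bound,
$$\operatorname{Hits}(W,S)\ge(1-1/e)\operatorname{Opt}(W,k)-k\tau-O\bigl(k\lambda\log(Mk/\beta)\bigr).$$
Substituting $\lambda$ yields the $k^{3/2}\log(Mk)/\eps$ term. The main obstacle is the privacy of the user-peeling step, because the remaining-user set depends on earlier outputs and a neighboring dataset changes it. I will argue, as in \citet{mitrovic2017differentially}, that conditioned on past outputs each round is an exponential mechanism with sensitivity one, and then apply the same $\sqrt{k}$ advanced-composition accounting as Lemma~\ref{lem:privexppeel}.
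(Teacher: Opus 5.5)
Your proposal is correct and follows essentially the same route as the paper: basic composition for privacy, the $\operatorname{MM}_{\infty}$ bound of Theorem~\ref{thm:wgmupmminf} plus subadditivity of coverage to get $\operatorname{Opt}(W,D,k)\ge\operatorname{Opt}(W,k)-k\tau$ (the paper's Lemma~\ref{lem:connectorhs}), and the noisy-greedy recurrence for the gap to $\operatorname{Opt}(W,D,k)$ with per-round loss $2\eta$, where Gumbel concentration is taken over at most $|D|k\le Mk$ variables (the paper's Lemma~\ref{thm:gup}). Your version is slightly more careful in two places. You use only the property that items outside $D$ have small frequency, which is what WGM actually guarantees, rather than assuming $D$ is exactly a threshold set. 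You also handle the case $|D|<k$ explicitly, which the paper's Lemma~\ref{thm:gup} excludes by assumption.
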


 Theorem \ref{thm:wgmthenhit}, proved in Appendix \ref{app:proofhitup}, gives that if $k$ is not very large (i.e., $\frac{\ln(Mk)}{\ln(M)} \leq \max_i \sqrt{|W_i|}$), then with high probability,  the additive sub-optimality gap is on the order of $$\tilde{O}_{\Delta_0, \delta, \beta, k}\left( \frac{k^{3/2} \cdot \max_i |W_i| \cdot \log(M)}{\eps \sqrt{q^{\star}}} \right).$$
 When $|\Xcal| \gg M$, this provides an improvement over Theorem 1 in \cite{mitrovic2017differentially} whose guarantee is in terms $\log(|\Xcal|)$ and not $\log(M).$ 

As in the lower bound proof for top-$k$ selection, we again rely on the work behind Theorem \ref{thm:wgmlbmm} to show that one must lose $\frac{k}{\eps}$ from the optimal value by restricting the algorithm $\Acal$ to output a subset of $\bigcup_i W_i.$

\begin{corollary}
\label{lem:lbhit} Let $\eps \leq 1$,  $\delta \in (0, 1)$ and  $\Acal$ be any $(\eps, \delta)$-differentially private algorithm satisfying Assumption \ref{assump:subset}. Then, for every $k \geq 1$, there exists a dataset $W$ such that 
$$\mathbb{E}_{S \sim \Acal(W, k)}\left[\operatorname{Hits}(W, S)\right]  \geq \operatorname{Opt}(W, k) - \tilde{\Omega}_{\delta}\left(\frac{k}{\eps}\right).$$
\end{corollary}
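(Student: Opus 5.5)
We construct a hard dataset and reduce to the same mechanism used for the set union lower bound (Theorem \ref{thm:wgmlbmm}), namely Lemma~\ref{lem:wgmlbhelper}. We expect that lemma to say the following. If $\Acal$ is $(\eps,\delta)$-DP and satisfies Assumption \ref{assump:subset}, then any item $x$ of frequency $N(x)\le m$ satisfies
\[
\mathbb{P}\left[x\in\Acal(W)\right]\le \delta\,\frac{e^{\eps m}-1}{e^\eps-1}.
\]
This follows by a group-privacy chain of length $m$ from a neighboring dataset in which $x$ is absent, where the probability is $0$.

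Choose $m = \lfloor \frac{1}{\eps}\ln(1+\frac{e^\eps-1}{2\delta})\rfloor$. For $\eps\le 1$ this gives $m=\tilde{\Theta}_\delta(1/\eps)$, and it makes the inclusion probability above at most $1/2$; the case $m=0$ is trivial and we ignore it. The dataset $W$ has $mk$ users, split into $k$ disjoint groups of $m$ users each. In group $j$ every user holds the single item $\{x_j\}$, with the $x_j$ distinct. Then each $x_j$ has $N(x_j)=m$. The optimum $\{x_1,\dots,x_k\}$ hits all $mk$ users, so $\operatorname{Opt}(W,k)=mk$, and any $S$ of size at most $k$ satisfies $\operatorname{Hits}(W,S)=m\,|S\cap\{x_1,\dots,x_k\}|$.

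By Assumption \ref{assump:subset}, $S\subseteq\{x_1,\dots,x_k\}$. Linearity of expectation and the lemma then give
\[
\mathbb{E}[\operatorname{Hits}(W,S)] = m\sum_{j=1}^k \mathbb{P}[x_j\in S] \le \tfrac{1}{2}mk .
\]
Hence $\mathbb{E}[\operatorname{Hits}]\le \operatorname{Opt}(W,k) - \tfrac{mk}{2} = \operatorname{Opt}(W,k)-\tilde{\Omega}_\delta(k/\eps)$. This is the intended reading of the corollary as a lower bound on the gap; as typed, with $\geq$, the statement holds trivially.

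The main step to verify is the group-privacy bound. Let $W^{(t)}$ be $W$ with $t$ of the holders of $x_j$ removed, so $W^{(m)}$ does not contain $x_j$ and $\mathbb{P}[x_j\in \Acal(W^{(m)})]=0$. Set $p_t=\mathbb{P}[x_j\in\Acal(W^{(m-t)})]$. Differential privacy between consecutive datasets gives $p_{t+1}\le e^\eps p_t+\delta$ with $p_0=0$. Unrolling yields $p_m\le \delta\sum_{t<m}e^{\eps t}=\delta\frac{e^{\eps m}-1}{e^\eps-1}\le 1/2$ by the choice of $m$. The restriction $\eps\le 1$ is used only to turn $\frac1\eps\ln(1+\frac{e^\eps-1}{2\delta})$ into $\tilde\Omega_\delta(1/\eps)$, since $e^\eps-1\ge\eps$ keeps the logarithm at least $\ln(1+\eps/2\delta)$.
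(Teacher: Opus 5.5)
Your proposal is correct and follows essentially the same route as the paper's proof. Both use $k$ distinct items, each held alone by its own disjoint block of $b^\star=\frac{1}{\eps}\ln\bigl(1+\frac{e^\eps-1}{2\delta}\bigr)$ users, and apply Lemma~\ref{lem:wgmlbhelper} to cap each inclusion probability at $1/2$, so that $\mathbb{E}[\operatorname{Hits}]\le \operatorname{Opt}(W,k)/2$. Your floor $m=\lfloor b^\star\rfloor$, the explicit group-privacy recursion, and reading the inequality as an upper bound on $\mathbb{E}[\operatorname{Hits}]$ all tidy up details the paper glosses over.

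One small remark concerns the step $m=\tilde{\Omega}_\delta(1/\eps)$. It does not actually rely on $\eps\le 1$, since your justification $e^\eps-1\ge\eps$ holds for every $\eps>0$. As in the paper, it implicitly assumes $\delta$ is small relative to $\eps$; for $\delta\gtrsim\eps$ one only gets $b^\star=\Theta(1/\delta)$.
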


\section{Experiments} \label{sec:exp}
We empirically evaluate our methods on six real-life datasets spanning diverse settings. Informally, Reddit \citep{gopi2020differentially}, Amazon Games \citep{ni2019justifying}, and Movie Reviews \citep{10.1145/2827872} are ``large'', while Steam Games \citep{steam_games_dataset}, Amazon Magazine \citep{ni2019justifying}, and Amazon Pantry \citep{ni2019justifying} are ``small'' (see Appendix \ref{app:expdet} for details). All experiments use a total privacy budget of $(1, 10^{-5})$-DP; additional experiments using $(0.1, 10^{-5})$-DP appear in Appendix~\ref{app:addexp}, but are not significantly qualitatively different. Dataset processing and experiment code can be found in the Supplement.

\subsection{Set Union} \label{exp:dd}
\textbf{Datasets.} We evaluate the WGM and baselines on all six datasets, relegating experiments on the small datasets to the Appendix \ref{app:ddsmalldata} for space.

\textbf{Baselines.} The baselines are the Policy Gaussian mechanism from \cite{gopi2020differentially} and the Policy Greedy mechanism from \cite{carvalho2022incorporating}, as these have obtained the strongest (though least scalable) performance in past work. As suggested in those papers, we set the policy hyperparameter $\alpha = 3$ throughout.

\textbf{Results.} Figure \ref{fig:dd} plots the average MM across $5$ trials, for all three mechanisms as a function of $\ell_0$ bound $\Delta_0 \in \{1, 50, 100, 150, 200, 300\}.$ Across datasets, we find that the WGM obtains MM within 5\% of that of the policy mechanisms, in spite of their significantly more intensive computation. This contrasts with previous empirical results for cardinality, where sequential methods often output $\approx 2X$ more items (see, e.g., Table 2 in~\cite{swanberg2023dp}). Plots for the small datasets (Appendix \ref{app:ddsmalldata}) show a similar trend.
\begin{figure}[h]
    \centering
    \subfloat[Reddit]{
        \includegraphics[width=0.32\linewidth]{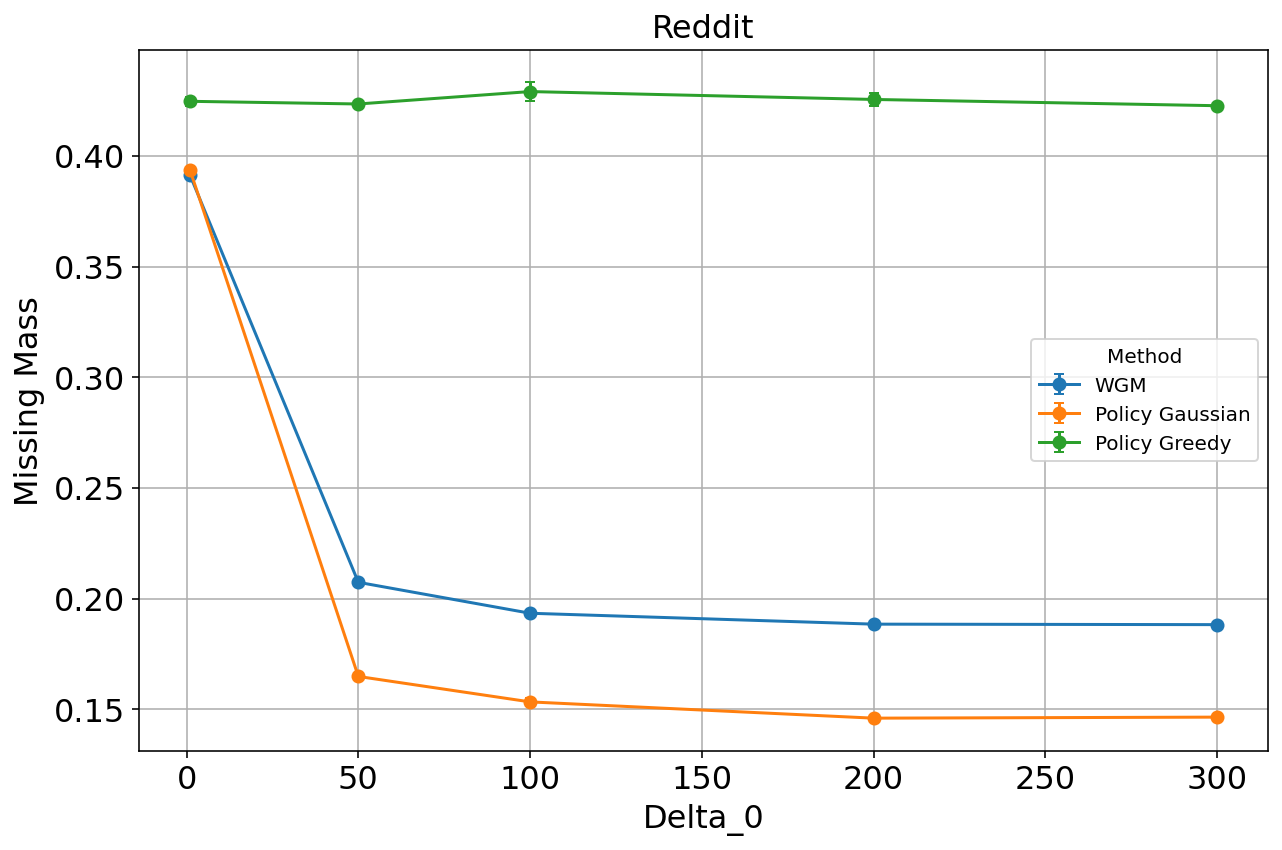} 
    }
    \hfill 
    \subfloat[Amazon Games]{
        \includegraphics[width=0.32\linewidth]{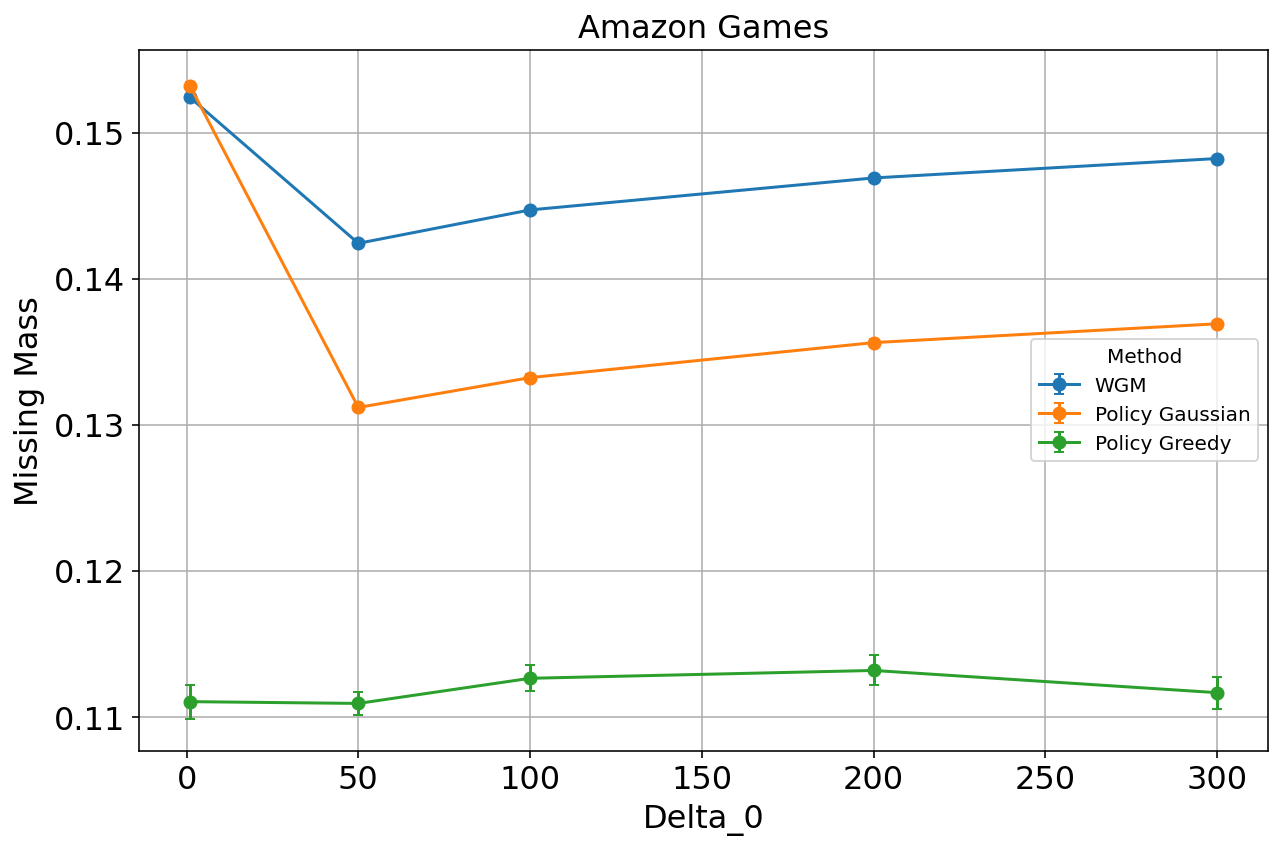} 
    }
        \subfloat[Movie Reviews]{
        \includegraphics[width=0.32\linewidth]{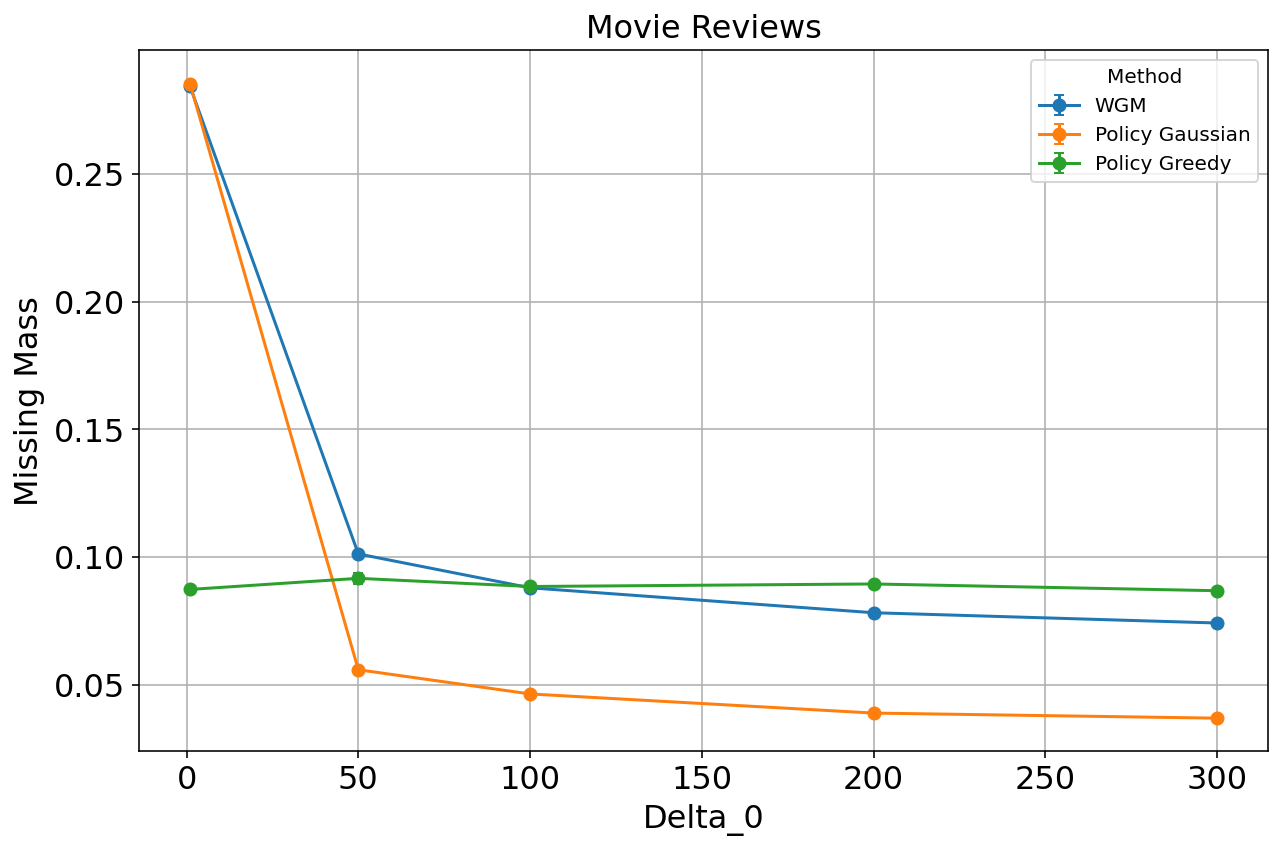} 
    }
    \caption{Set Union MM as a function of $\Delta_0$. Note that lower is better.}
    \label{fig:dd}
\end{figure}
\subsection{Top-$k$}
\label{subsec:exp_topk}
\textbf{Datasets.} All methods achieve near $0$ top-$k$ missing mass across all values of number of selected items $k \in \{5, 10, 20, 50, 100, 200\}$ on the three large datasets, as most mass is concentrated in a small number of heavy items. We therefore focus on the three small datasets.

\textbf{Baselines.} We compare our WGM-then-top-$k$ mechanism to the limited-domain top-$k$ mechanism from \citet{durfee2019practical}. Unlike our algorithm, the limited-domain mechanism has a hyperparameter $\bar{k}$. As such, for each $k \in \{5, 10, 20, 50, 100, 200\}$, we take as baselines the limited-domain algorithm with $\bar{k} \in \{k, 5k, 10k, \infty\}.$  When $\bar{k} < \infty$, we set $\Delta_0 = \infty$ for the limited-domain algorithm. Otherwise, when $\bar{k} = \infty$, we set $\Delta_0 = 100$ for the limited-domain algorithm, as recommended in Section 3 of \cite{durfee2019practical}.

\textbf{Results.} Figure \ref{fig:topk} compares our method against the limited-domain method across different choices for $k$. Note that each line for the limited-domain method uses a different $\bar k$. We find that across all datasets, our method consistently obtains smaller top-$k$ MM than all limited-domain baselines, and its advantage grows with $k$. Plots in Appendix \ref{app:exptopk} demonstrate similar trends for a more stringent $\ell_1$ loss.

\begin{figure}[h]
    \centering
    \subfloat[Steam Games]{
        \includegraphics[width=0.32\linewidth]{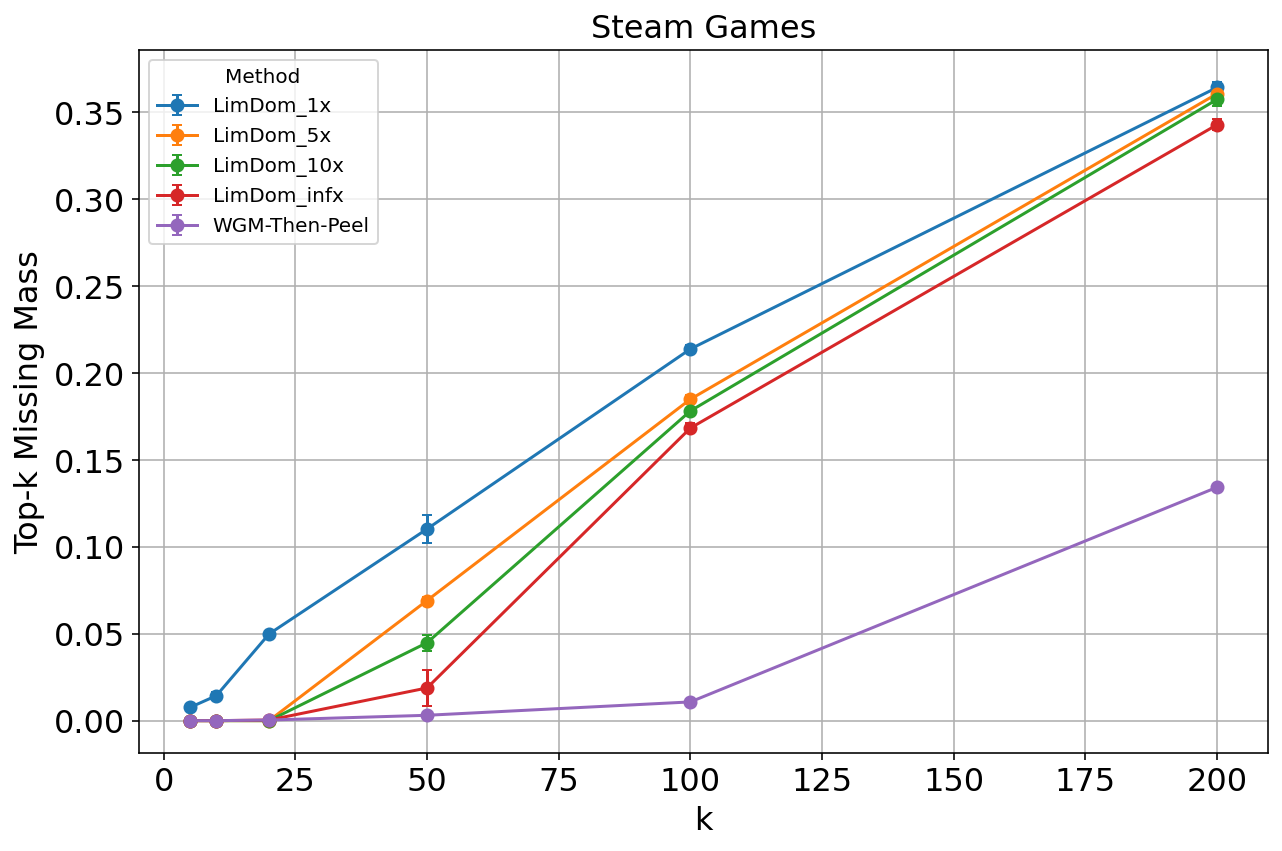} 
    }
    \hfill 
    \subfloat[Amazon Magazine]{
        \includegraphics[width=0.32\linewidth]{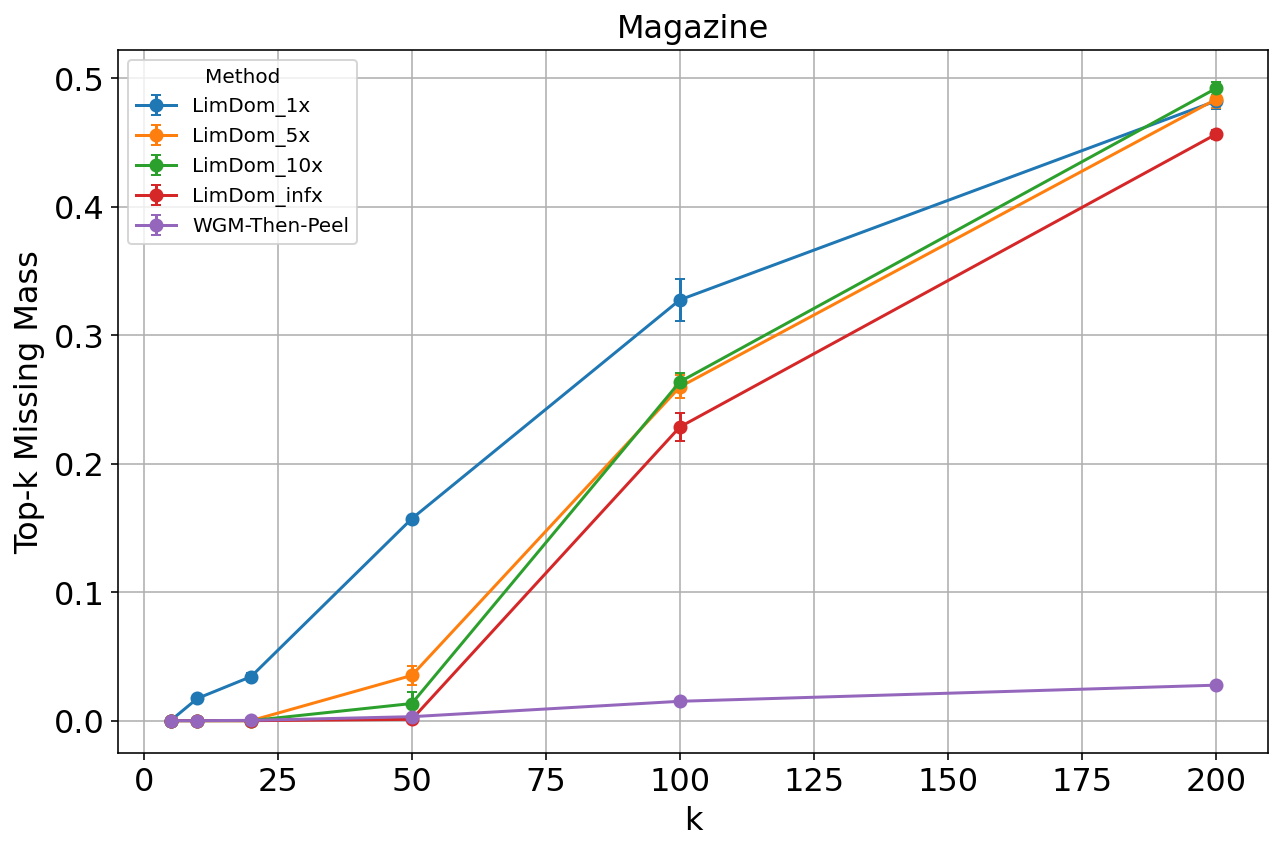} 
    }
        \subfloat[Amazon Pantry]{
        \includegraphics[width=0.32\linewidth]{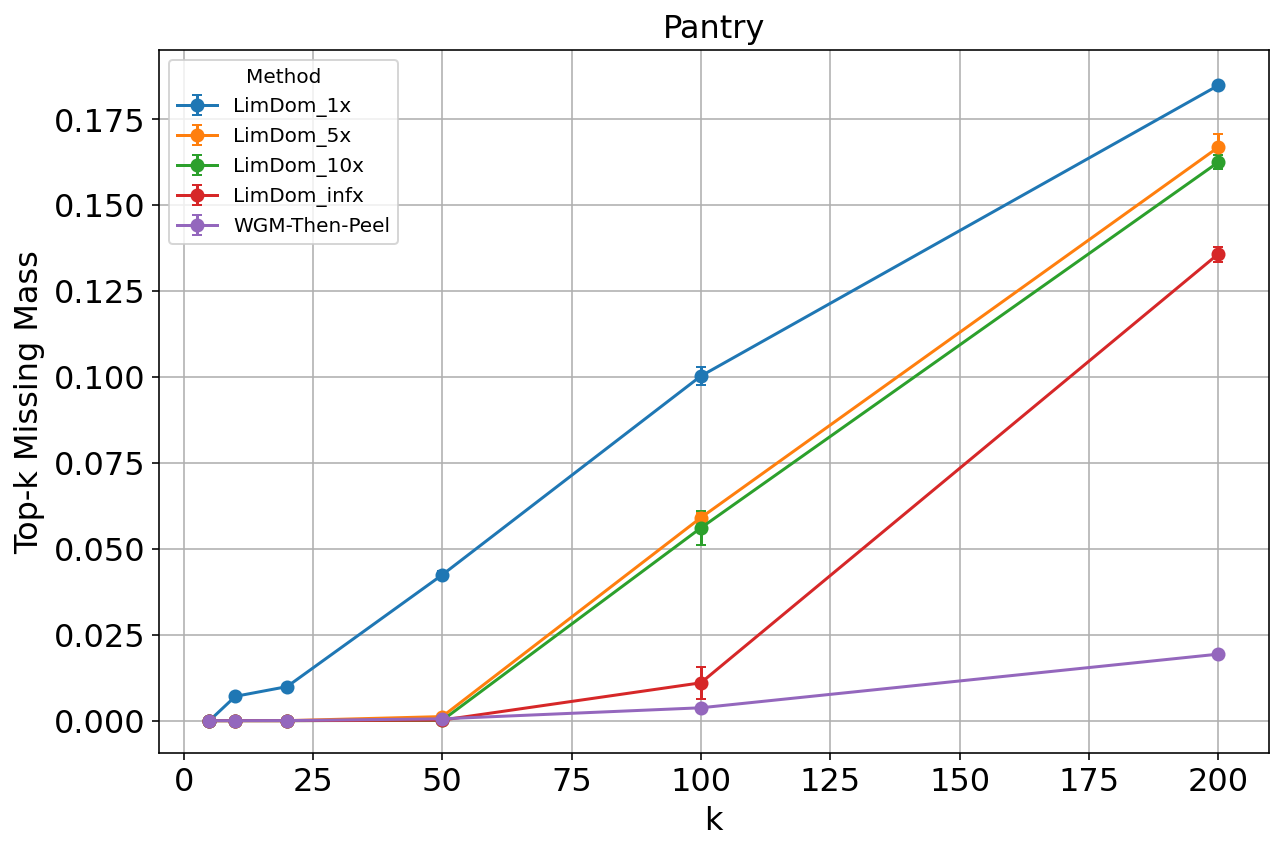} 
    }
    \caption{Top-$k$ MM as a function of $k$, using $\Delta_0 = 100$.}
    \label{fig:topk}
\end{figure}

\subsection{$k$-Hitting Set}
\textbf{Datasets.} We use the same datasets as in the top-$k$ experiments, for the same reason: a small number of items covers nearly all users in the large datasets.

\textbf{Baselines.} To the best of our knowledge, there are no existing private algorithm for the $k$-hitting set problem for unknown domains. Hence, we consider the following baselines: the non-private greedy algorithm and the private non-domain algorithm from \cite{mitrovic2017differentially} after taking $\bigcup_i W_i$ to be a public known-domain. Note that the latter baseline is not a valid private algorithm in the unknown domain setting since, in reality,  $\bigcup_i W_i$ is private.

\textbf{Results.}  Figure \ref{fig:hit} plots the average number of users hit, along with its standard error across $5$ trials, as a function of $k \in \{5, 10, 20, 50, 100, 200\}$, fixing $\Delta_0 = 100.$ We find that our method performs comparably with both baseline methods, neither of which is fully private. In particular, for the Steam Games and Amazon Magazine datasets, our method outperforms the known-domain private greedy algorithm that assumes public knowledge of $\bigcup_i W_i$. 
This is because our method's application of WGM for domain discovery produces a domain that is smaller than $\bigcup_i W_i$ while still containing high-quality items. This makes an easier problem for the peeling mechanism in the second step.   

\begin{figure}[h]
    \centering
    \subfloat[Steam Games]{
        \includegraphics[width=0.32\linewidth]{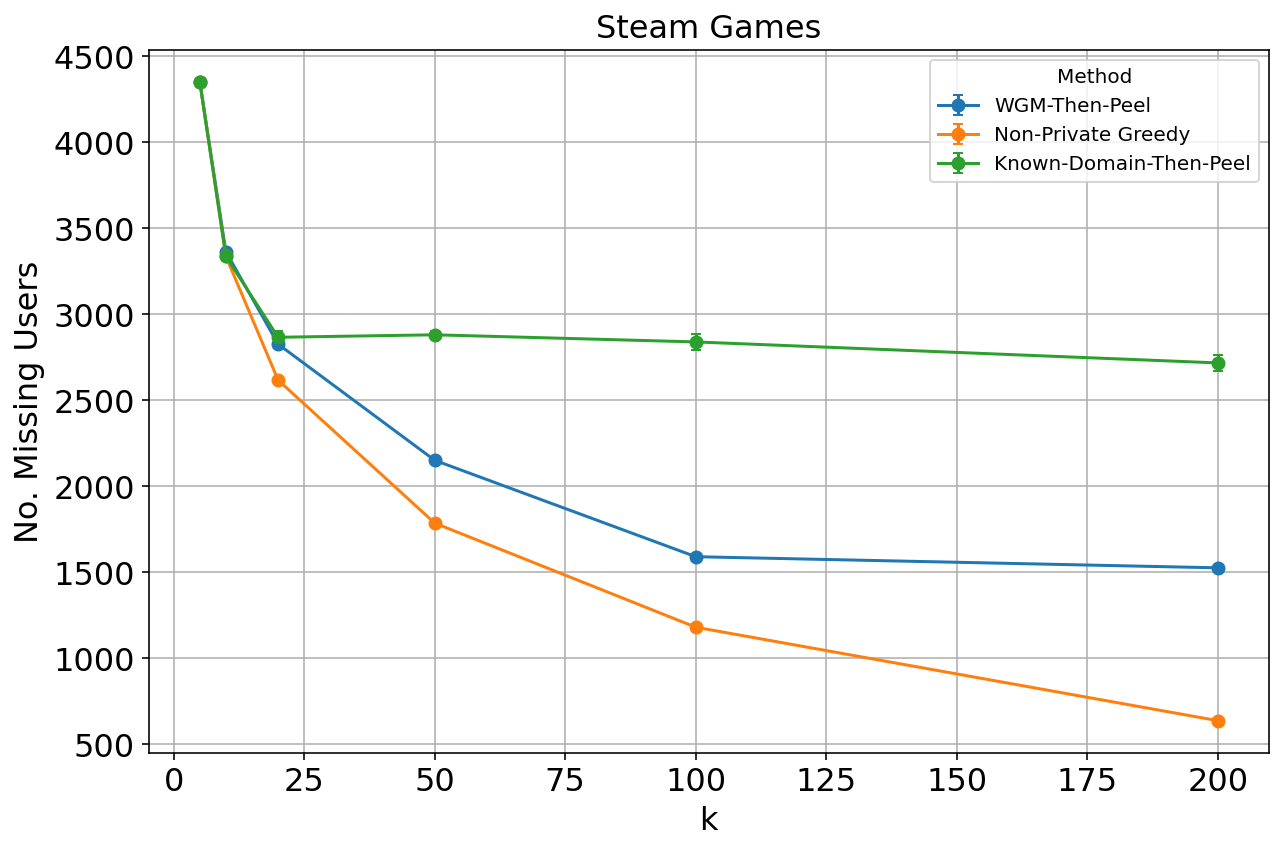} 
    }
    \hfill 
    \subfloat[Amazon Magazine]{
        \includegraphics[width=0.32\linewidth]{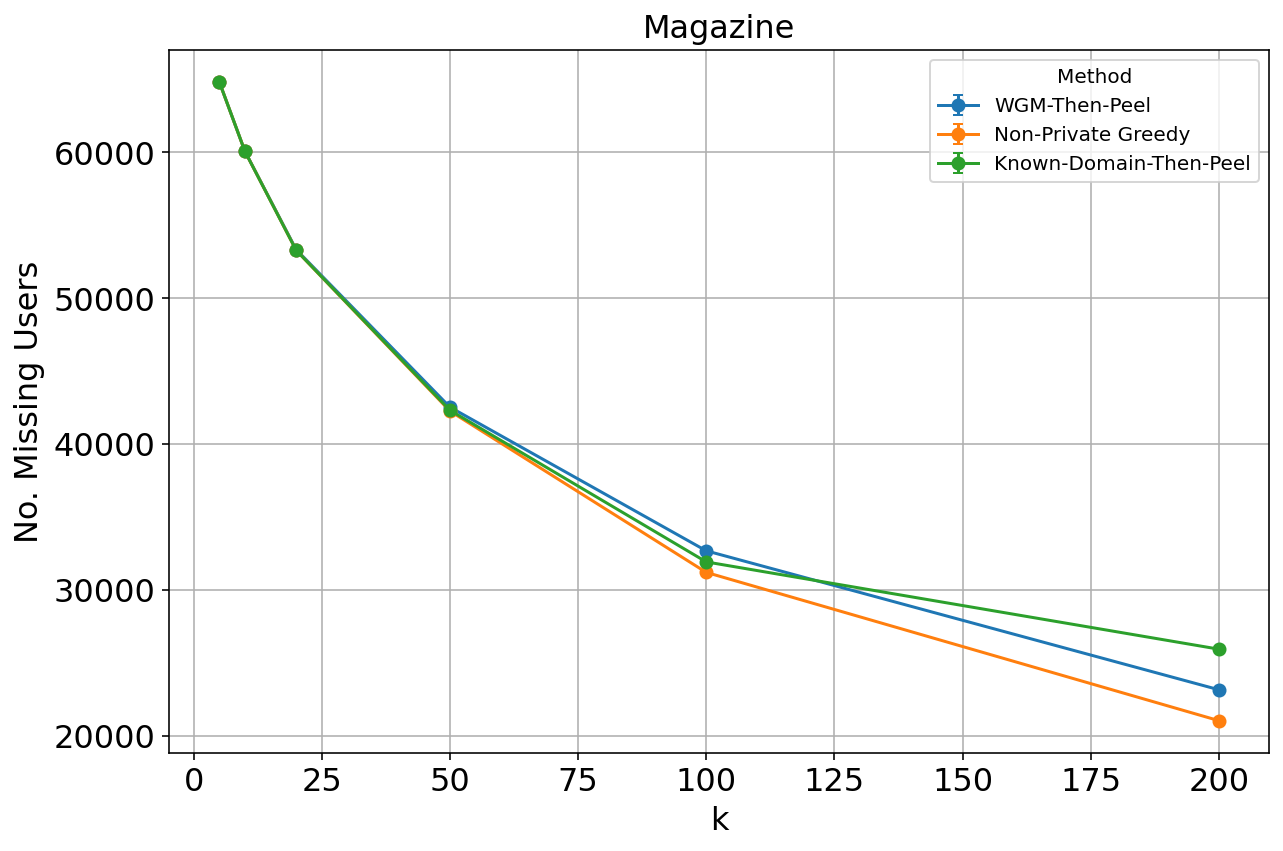} 
    }
        \subfloat[Amazon Pantry]{
        \includegraphics[width=0.32\linewidth]{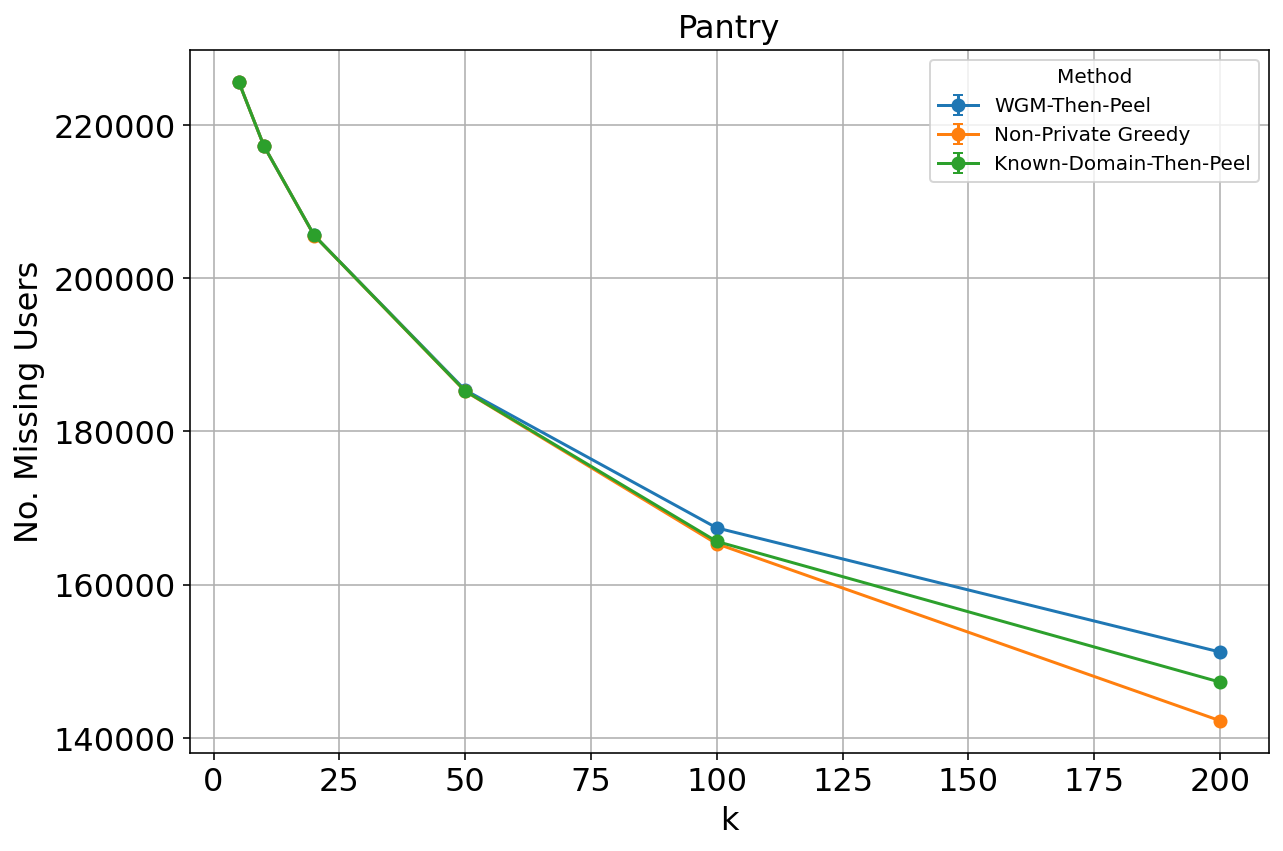} 
    }
    \caption{Number of missed users as a function of $k$, using $\Delta_0 = 100$.}
    \label{fig:hit}
\end{figure}
\section{Future Directions}
We conclude with some possible future  research directions. First, our upper and lower bounds for top-$k$ and $k$-hitting set do not match, so closing these gaps is a natural problem. Second, all of our methods enforce $\ell_0$ bounds by uniform subsampling without replacement from each user's item set. Recent work by \cite{chen2025scalable} employs more involved and data-dependent subsampling strategies to obtain higher cardinality answers. Extending similar techniques to missing mass may be useful.

\bibliography{references}
\bibliographystyle{iclr2026_conference}

\appendix

\section{Useful Concentration Inequalities} \label{app:concineq}

In this section, we review some basic concentration inequalities that we use in the main text.  The first is the following Gaussian concentration equality. 

\begin{lemma}[Gaussian concentration \citep{vershynin2018high}] 
\label{lem:gaussian_concentration}
Let $X_1, \dots, X_n$ be an iid sequence of mean-zero Gaussian random variables with variance $\sigma^2$. Then, for every $\delta \in (0, 1)$, with probability at least $1-\delta$, we have that 
$$\max_i |X_i| \leq \sigma \sqrt{2 \log\left( \frac{2n}{\delta}\right)}.$$
\end{lemma}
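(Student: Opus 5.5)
The plan is to combine a one-variable Gaussian tail bound with a union bound over the $n$ variables. Solving $2n\,e^{-t^2/(2\sigma^2)} = \delta$ for $t$ gives exactly $t = \sigma\sqrt{2\log(2n/\delta)}$, which is the stated threshold.

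First, I would establish the single-variable tail. For $X \sim \mathcal{N}(0,\sigma^2)$, the moment generating function is $\mathbb{E}[e^{\lambda X}] = e^{\lambda^2\sigma^2/2}$ for all $\lambda \in \mathbb{R}$. Markov's inequality applied to $e^{\lambda X}$ then gives, for every $t>0$ and $\lambda>0$,
$$\mathbb{P}[X \geq t] \leq e^{-\lambda t + \lambda^2\sigma^2/2}.$$
Choosing $\lambda = t/\sigma^2$ yields $\mathbb{P}[X \geq t] \leq e^{-t^2/(2\sigma^2)}$. Since $-X$ has the same distribution as $X$, the same bound holds for the lower tail, so
$$\mathbb{P}[|X| \geq t] \leq 2e^{-t^2/(2\sigma^2)}.$$
An alternative is the Mills-ratio bound $\mathbb{P}[X\geq t]\leq \tfrac{\sigma}{t\sqrt{2\pi}}e^{-t^2/(2\sigma^2)}$, but it requires $t$ bounded away from $0$, so the Chernoff route is cleaner.

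Next, I would take a union bound over $i \in [n]$. No independence is needed for this step; only the identical marginals are used:
$$\mathbb{P}\left[\max_i |X_i| \geq t\right] \leq \sum_{i=1}^n \mathbb{P}[|X_i|\geq t] \leq 2n\, e^{-t^2/(2\sigma^2)}.$$

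Finally, I would set $t = \sigma\sqrt{2\log(2n/\delta)}$. This is well defined because $\delta<1\leq 2n$ makes the logarithm positive. With this choice $2n\,e^{-t^2/(2\sigma^2)} = 2n\cdot \tfrac{\delta}{2n} = \delta$, so with probability at least $1-\delta$ we have $\max_i |X_i| < t$, which gives the claim.

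There is no real obstacle here. The only points needing care are the factor $2$ from two-sidedness, which is why the bound has $2n$ inside the logarithm, and checking that the logarithm is positive.
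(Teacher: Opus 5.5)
Your proof is correct and complete. The paper gives no proof of its own here and simply cites \citet{vershynin2018high}. Your argument (Chernoff tail bound for a single Gaussian, then a union bound over the $2n$ one-sided tails) is the standard argument behind that citation, and it follows the same template the paper uses for its companion Gumbel concentration lemma.
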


The second is for the concentration of a sequence of Gumbel random variables. While this result is likely folklore, we provide a proof for completeness.
\begin{lemma}
\label{lem:gumbel_concentration}
Let $X_1, \dots, X_n$ be an iid sequence of Gumbel random variables with parameter $\lambda$. Then, for every $\delta \in (0, 1)$, with probability at least $1-\delta$, we have that 
$$\max_i |X_i| \leq \lambda  \cdot \ln\left(\frac{2n}{\delta}\right).$$
\end{lemma}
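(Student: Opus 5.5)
We need to prove the Gumbel concentration bound of Lemma~\ref{lem:gumbel_concentration}: for iid Gumbel$(\lambda)$ variables (location $0$, scale $\lambda$, CDF $F(x)=\exp(-e^{-x/\lambda})$), with probability at least $1-\delta$, $\max_i|X_i|\le \lambda\ln(2n/\delta)$. The plan is to bound the upper tail and the lower tail of a single variable separately, each by $\delta/(2n)$ at the threshold $t=\lambda\ln(2n/\delta)$. A union bound over the $n$ variables and the two tails then gives total failure probability at most $\delta$.

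\textbf{Upper tail.} We have $\mathbb{P}[X_i> t]=1-\exp(-e^{-t/\lambda})\le e^{-t/\lambda}$, using $1-e^{-u}\le u$. At $t=\lambda\ln(2n/\delta)$ this equals $\delta/(2n)$.

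\textbf{Lower tail.} We have $\mathbb{P}[X_i<-t]=\exp(-e^{t/\lambda})$. At our threshold, $e^{t/\lambda}=2n/\delta=:u$, so this probability is $e^{-u}$. Since $e^{u}\ge u$ for all real $u$, we get $e^{-u}\le 1/u=\delta/(2n)$. The lower tail is in fact doubly exponential, so this step is loose but sufficient.

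\textbf{Union bound.} Summing the two tail bounds over all $n$ variables gives
\[
\mathbb{P}\Bigl[\max_i |X_i|>t\Bigr]\le n\cdot\frac{\delta}{2n}+n\cdot\frac{\delta}{2n}=\delta,
\]
which proves the lemma.

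The only point needing care is fixing the parameterization convention. If the paper's Gumbel$(\lambda)$ has a nonzero location, for example mean zero, the shift is $\gamma\lambda$ with $\gamma$ the Euler–Mascheroni constant. That shift can be absorbed into the slack of the lower-tail bound, or handled by a constant adjustment. I will assume location zero, as is standard for the Gumbel mechanism, where only differences of noise matter.
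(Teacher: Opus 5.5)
Your proof is correct and follows the same route as the paper. Both split into the upper tail, bounded via $1-e^{-u}\le u$, and the lower tail, bounded via $e^{u}\ge u$, which is the paper's $2n/\delta \ge \ln(2n/\delta)$; then both union bound over the $n$ variables, and your location-zero convention matches the paper's CDF $F(x)=\exp(-\exp(-x/\lambda))$.
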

\begin{proof}
    Consider a single $X_i$, and recall that the CDF of a Gumbel distribution with parameter $\lambda$ is $F(x) = \exp(-\exp(-x/\lambda))$. Then
    \begin{align*}
        \mathbb{P}\left[|X_i| > T\right] =&\ \mathbb{P}\left[X_i > T\right] + \mathbb{P}\left[X_i < -T\right] \\
        =&\ 1 - \exp(-\exp(-T/\lambda)) + \exp(-\exp(T/\lambda)) \\
        \leq&\ \exp(-T/\lambda) + \exp(-\exp(T/\lambda))
    \end{align*}
    where the inequality uses $1-e^{-x} \leq x$. Substituting in $T = \lambda\log(2n/\delta)$ yields
    \begin{equation*}
        \mathbb{P}\left[|X_i| > t\right] \leq \frac{\delta}{2n} + \exp(-2n/\delta) \leq \frac{\delta}{n}
    \end{equation*}
    since for $n \geq 1$ and $\delta \in (0,1)$, $\tfrac{2n}{\delta} \geq \ln\left(\tfrac{2n}{\delta}\right)$. Union bounding over the $n$ samples completes the result.
\end{proof}

\section{Privacy analysis of Peeling Exponential Mechanism}
\begin{lemma} [Lemma 4.2 in \cite{gillenwater2022joint}] \label{lem:fullprivexppeel} For every $\eps > 0$, $\delta \in (0, 1)$ and $k \geq 1$, if  $\lambda = \frac{1}{\eps_0}$, where
$$
\eps_0 := \max\Biggl\{\frac{\eps}{k}, \sqrt{\frac{8\log\left(\frac{1}{\delta} \right) + 8 \eps}{k}} - \sqrt{\frac{8 \log\left(\frac{1}{\delta}\right)}{k}} \Biggl\},
$$
\noindent then Algorithm \ref{alg:expmechpeel} is $(\eps, \delta)$-differentially private. 
\end{lemma}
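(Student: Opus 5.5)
The plan is to prove Lemma~\ref{lem:fullprivexppeel} by viewing Algorithm~\ref{alg:expmechpeel} as a $k$-fold adaptive composition of exponential mechanisms. We then establish two privacy guarantees for the same mechanism, one per term inside the $\max$ defining $\eps_0$. Each term alone suffices, so whichever is larger also yields $(\eps,\delta)$-DP.

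\textbf{Step 1 (Gumbel-max reduction).} By the Gumbel-max trick, the ordered top-$k$ of the noisy counts $\tilde N(x) = N(x) + Z_x$, with $Z_x \sim \operatorname{Gumbel}(\lambda)$ i.i.d., has the same distribution as the following sequential process. For $j = 1,\dots,k$, draw $x_j$ from $D \setminus \{x_1,\dots,x_{j-1}\}$ with probability proportional to $\exp(N(x)/\lambda) = \exp(\eps_0 N(x))$. I would verify this either by the memorylessness of the Gumbel maximum (conditional on the argmax, the remaining Gumbels are still i.i.d. Gumbels shifted appropriately), or by citing the standard equivalence used in \cite{durfee2019practical}. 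The domain $D$ is treated as public here, since it is an input to the mechanism.

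\textbf{Step 2 (per-round privacy).} Each round is an exponential mechanism with utility $u(W,x) = N(x)$ and score scale $\eps_0$. Under add/remove neighbors, every count changes by $0$ or $+1$ in the same direction, so the utility is monotone with sensitivity $1$. By the standard monotone analysis, each round is then $\eps_0$-DP rather than $2\eps_0$-DP. The finer fact I need is that each round is $\eps_0$-\emph{bounded range}: the log-likelihood ratio lies in an interval of width $\eps_0$. By \cite{cesar2021bounding}-style results, or Durfee--Rogers, an $\eps_0$-bounded-range mechanism is $\eps_0^2/8$-zCDP.

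\textbf{Step 3 (composition and solving for $\eps_0$).} There are two routes, one for each term of the $\max$.

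\emph{First term.} Basic composition of the $k$ rounds gives $k\eps_0$-DP. With $\eps_0 = \eps/k$ this is $(\eps,0)$-DP, hence $(\eps,\delta)$-DP.

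\emph{Second term.} zCDP composes additively, so the whole mechanism is $\rho$-zCDP with $\rho = k\eps_0^2/8$. Write $L = \log(1/\delta)$. The zCDP-to-DP conversion gives $(\rho + 2\sqrt{\rho L},\delta)$-DP, so it suffices to ensure $\rho + 2\sqrt{\rho L} \le \eps$. Viewing this as a quadratic in $\sqrt\rho$, the equality case is $\sqrt\rho = \sqrt{L+\eps} - \sqrt L$. Therefore
\[
\eps_0 = \sqrt{8\rho/k} = \sqrt{\tfrac{8L + 8\eps}{k}} - \sqrt{\tfrac{8L}{k}},
\]
which is exactly the second term.

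The main obstacle is Step~2's bounded-range/zCDP claim together with the monotonicity argument. The exponential mechanism is only $2\eps_0$-DP in general, so getting the tight constant $1/8$ requires the bounded-range refinement and care with add/remove neighbors. I would rely on the cited results for these rather than reprove them.
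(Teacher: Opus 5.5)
Your proposal is correct and matches the argument behind the cited result. The paper gives no proof of its own and simply imports Lemma 4.2 of \cite{gillenwater2022joint}, which rests on the same chain you describe:
\begin{itemize}
\item the Gumbel-top-$k$/peeling equivalence of \cite{durfee2019practical};
\item each round being $\eps_0$-bounded-range, since counts change monotonically under add/remove;
\item the Cesar--Rogers $\eps_0^2/8$-zCDP bound, followed by zCDP composition and conversion, for the second term;
\item basic composition for the $\eps/k$ term.
\end{itemize}
The only loose spot is your parenthetical justification in Step 1. Conditioning on the argmax does not leave the remaining Gumbels unconditioned; the clean fact is that the maximum and the argmax of Gumbel-perturbed scores are independent. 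Citing the standard equivalence, as you propose, already covers this.
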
 

\section{Missing Proofs} 
\subsection{Proof of Lemma \ref{lem:Zipcard}} \label{app:proofzipcard}
\begin{proof}
 Let $W$ be a $(C, s)$-Zipfian dataset  . Let $r^{\star} = \max_{i}|W_i|.$ Then, it must be the case that $N_{(r^{\star})} \geq 1.$ Since $W$ is $(C, s)$-Zipfian, we also know that $N_{(r^{\star})} \leq \frac{CN}{(r^{\star})^s}$. Hence, we have that $(r^{\star})^s \leq CN$ implying that $r^{\star} \leq (CN)^{1/s}.$
\end{proof}

\subsection{Proofs for the WGM}

\subsubsection{Proof of $\sigma$ and $T$} \label{app:proofofsigmaT}
\begin{lemma} \label{lem:wgmprivexact} For every $\eps > 0$,  $\delta \in (0, 1)$, and $\Delta_0 \geq 1$, there exists $\sigma = \Theta\left(\frac{\sqrt{\log\left(\frac{1}{\delta}\right)}}{\eps}\right)$ and $T = \tilde{\Theta}_{\delta, \Delta_0}(\max\{\sigma,  1\})$ which satisfy the conditions in Theorem \ref{thm:privwgm}.
\end{lemma}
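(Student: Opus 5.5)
The plan is to treat the two conditions of Theorem~\ref{thm:privwgm} one at a time: first fix $\sigma$ from the first condition, which does not involve $T$ or $\Delta_0$, then plug it into the second condition to get $T$.

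\textbf{Choosing $\sigma$.} The first condition says that the Gaussian mechanism with sensitivity $1$ and noise $\sigma$ is $(\eps,\delta/2)$-DP. For the upper bound I will set $\sigma = c\sqrt{\log(1/\delta)}/\eps$ with a suitable constant $c$, restricting to the usual regime $\eps \le 1$ or treating large $\eps$ separately. I then bound the left-hand side by $\Phi\left(\frac{1}{2\sigma}-\eps\sigma\right)$, dropping the subtracted term, which is nonnegative.
\begin{itemize}
\item With this choice $\eps\sigma = c\sqrt{\log(1/\delta)}$.
\item When $\sigma \ge 1$ we have $1/(2\sigma)\le 1/2$, so the argument $\frac{1}{2\sigma}-\eps\sigma$ is at most $-\tfrac{c}{2}\sqrt{\log(1/\delta)}$ for $c$ large.
\item The Gaussian tail bound $\Phi(-u)\le e^{-u^2/2}$ then makes the expression at most $\delta/2$.
\end{itemize}
If $\eps$ is large enough that $\sigma$ would fall below $1$, I will argue directly by the same tail bound, requiring $\eps\sigma - 1/(2\sigma) \ge \sqrt{2\log(2/\delta)}$; solving this quadratic in $\sigma$ again gives $\sigma = \Theta(\sqrt{\log(1/\delta)}/\eps)$ up to constants.

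For the matching lower bound I will take the minimal $\sigma$ satisfying the condition. The privacy curve is monotone in $\sigma$, so this minimum is well defined. I will then show that any smaller $\sigma$, say $\sigma \le c'\sqrt{\log(1/\delta)}/\eps$, violates the condition. I expect this to be the main obstacle, because the subtracted term $e^{\eps}\Phi(\cdot)$ has to be controlled. My approach is to use the Gaussian lower tail $\Phi(-u)\ge \frac{u}{1+u^2}\frac{e^{-u^2/2}}{\sqrt{2\pi}}$ to show that $\Phi\left(\frac{1}{2\sigma}-\eps\sigma\right)$ is then much bigger than $\delta$. I will also compare the ratio of the two $\Phi$ terms, which is $\approx e^{-\eps}\cdot e^{\eps}$-scale times a factor decaying in $1/\sigma$, so that the difference stays a constant fraction of the first term. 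If this becomes messy, I will state the lower direction via the known tight analysis of the Gaussian mechanism (Balle--Wang).

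\textbf{Choosing $T$.} I take $T$ equal to the right-hand side of the second condition and estimate it directly.
\begin{itemize}
\item Using $1-(1-\delta/2)^{1/t} \ge \frac{\delta}{2t}$, we get $\Phi^{-1}\left((1-\delta/2)^{1/t}\right) \le \Phi^{-1}(1-\delta/(2t)) \le \sqrt{2\log(2t/\delta)}$.
\item Maximizing over $t\le\Delta_0$ gives $T \le 1 + \sigma\sqrt{2\log(2\Delta_0/\delta)}$.
\item The choice $t=1$ gives $T \ge \max\{1,\ \sigma\Phi^{-1}(1-\delta/2)\}$, and $\Phi^{-1}(1-\delta/2) = \Omega(\sqrt{\log(1/\delta)})$.
\end{itemize}
Combining these, $T = \Theta(\max\{\sigma\sqrt{\log(\Delta_0/\delta)},1\})$ up to the log factor in $\Delta_0$, that is, $T = \tilde{\Theta}_{\delta,\Delta_0}(\max\{\sigma,1\})$.
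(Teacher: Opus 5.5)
The part of your argument that the lemma actually needs matches the paper's proof. Both of you drop the nonnegative term $e^{\eps}\Phi(-\frac{1}{2\sigma}-\eps\sigma)$ and control $\Phi(\frac{1}{2\sigma}-\eps\sigma)$: you use the Gaussian tail bound, and the paper solves $2\eps\sigma^2+2\Phi^{-1}(\delta/2)\sigma-1\ge 0$. Both of you bound $T$ via Bernoulli's inequality and $\Phi^{-1}(1-x)\le\sqrt{2\log(1/x)}$. The paper simply sets $T=1+\sigma\sqrt{2\log(2\Delta_0/\delta)}$, so your $t=1$ sandwich is correct but not needed. Since the lemma only asserts existence, no lower bound on the minimal $\sigma$ is needed either, and the paper proves none for the exact condition. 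A cleaner case split than $\sigma\ge 1$ is $\eps\le c^2\log(1/\delta)$, which gives $\frac{1}{2\sigma}=\frac{\eps}{2c\sqrt{\log(1/\delta)}}\le\frac{c}{2}\sqrt{\log(1/\delta)}$ directly.

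Two of your side claims are false and should be dropped rather than proved.

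\textbf{The large-$\eps$ branch.} The positive root of $\eps\sigma^2-b\sigma-\frac12=0$, with $b=\sqrt{2\log(2/\delta)}$, is $\frac{b+\sqrt{b^2+2\eps}}{2\eps}$. This is $\Theta(b/\eps)$ only for $\eps=O(\log(1/\delta))$, and is about $1/\sqrt{2\eps}$ for $\eps\gg\log(1/\delta)$. This is intrinsic, not an artifact of the tail bound. For $\sigma\le 1/\sqrt{2\eps}$ the first term is at least $\frac12$, since its argument is $\ge 0$. Meanwhile Mills' ratio, together with $e^{\eps}\phi(\eps\sigma+\frac{1}{2\sigma})=\phi(\eps\sigma-\frac{1}{2\sigma})$ (where $\phi$ is the standard normal density) and $\eps\sigma+\frac{1}{2\sigma}\ge\sqrt{2\eps}$, bounds the subtracted term by $\frac{1}{2\sqrt{\pi\eps}}$. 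So every feasible $\sigma$ exceeds $1/\sqrt{2\eps}\gg\sqrt{\log(1/\delta)}/\eps$ in that regime. The lemma must therefore be read for $\eps=O(\log(1/\delta))$ and $\delta$ bounded away from $1$; your step $c\sqrt{\log(1/\delta)}\ge 1$ also needs the latter. The paper's proof has the same blind spot: it records only the lower bound $|\Phi^{-1}(\delta/2)|/\eps$ on the corresponding root.

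\textbf{The ratio heuristic for the lower direction.} Write $u=\eps\sigma-\frac{1}{2\sigma}$ and $v=\eps\sigma+\frac{1}{2\sigma}$. Mills' ratio gives $e^{\eps}\Phi(-v)/\Phi(-u)\approx u/v=\frac{\eps\sigma^2-1/2}{\eps\sigma^2+1/2}$, which tends to $1$ when $\eps\sigma^2\gg 1$ (e.g., $\eps\le 1$ and $\sigma\approx\sqrt{\log(1/\delta)}/\eps$). So the difference is only a $\Theta(1/(\eps\sigma^2))$ fraction of the first term, not a constant fraction. That loss can be absorbed into a smaller constant $c'$ when $\eps$ is not tiny. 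However, the lower direction fails outright for $\eps\ll\delta$. The left-hand side is non-increasing in $\eps$, and at $\eps=0$ it equals $2\Phi(\frac{1}{2\sigma})-1\le\frac{1}{\sigma\sqrt{2\pi}}$. Hence $\sigma=O(1/\delta)$ always suffices, and appealing to the exact Gaussian analysis cannot give a uniform $\Theta$ either.
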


\begin{proof}

Starting with $\sigma$, it suffices to find the smallest $\sigma$ such that 
$$\Phi\left(\frac{1}{2\sigma} - \eps \sigma \right) \leq \frac{\delta}{2}.$$
By monotonicity of $\Phi^{-1}(\cdot)$, we have that 
$$\Phi\left(\frac{1}{2\sigma} - \eps \sigma \right) \leq \frac{\delta}{2} \iff \frac{1}{2\sigma} - \eps \sigma \leq \Phi^{-1}\left(\frac{\delta}{2}\right).$$
Hence, it suffices to find the smallest $\sigma$ that satisfies
$$2\eps \sigma^2 + 2\Phi^{-1}\left(\frac{\delta}{2}\right) \sigma - 1 \geq 0.$$
Using the quadratic formula we can deduce that we need to take 
$$\sigma \geq \frac{-\Phi^{-1}\left(\frac{\delta}{2}\right)}{\eps} = \frac{\Phi^{-1}\left(1 - \frac{\delta}{2}\right)}{\eps} = \Omega\left( \frac{\sqrt{\log\left(\frac{1}{\delta}\right)}}{\eps}\right),$$
where the last inequality follows from the fact that $\Phi^{-1}(p) \leq \sqrt{2\log\left(\frac{1}{1-p} \right)}$ for $p > \frac{1}{2}.$ Now for $T$, we have
$$1 + \sigma \Phi^{-1}\left(\left(1-\frac{\delta}{2}\right)^{\frac{1}{\Delta_0}} \right) \geq  \max_{1 \leq t \leq \Delta_0}\left(\frac{1}{\sqrt{t}} + \sigma \Phi^{-1}\left(\left(1-\frac{\delta}{2}\right)^{\frac{1}{t}} \right) \right).$$
Hence, it suffices to upper bound 
$$1 + \sigma \Phi^{-1}\left(\left(1-\frac{\delta}{2}\right)^{\frac{1}{\Delta_0}} \right).$$
By Bernoulli's inequality and monotonicity of $\Phi^{-1}(\cdot)$, we have that 
$$\Phi^{-1}\left(\left(1-\frac{\delta}{2}\right)^{\frac{1}{\Delta_0}} \right) \leq \Phi^{-1}\left(1 - \frac{\delta}{2 \Delta_0}\right).$$
Since $\delta \leq 1$ and $\Delta_0 \geq 1$, we have that 
$$\Phi^{-1}\left(1 - \frac{\delta}{2 \Delta_0}\right) \leq \sqrt{2\log\left(\frac{2\Delta_0}{\delta} \right)}.$$
Hence, it suffices to take 
$$T = 1 + \sigma\sqrt{2\log\left(\frac{2\Delta_0}{\delta} \right)} = \Theta\left(\max\Bigg\{\sigma \sqrt{\log\left(\frac{\Delta_0}{\delta} \right)}, 1\Biggl\}\right),$$
This completes the proof. 
\end{proof}

\subsubsection{Proof of Theorem \ref{thm:wgmup}} \label{app:proofwgmup}

Before we prove Theorem \ref{thm:wgmup}, we present three helper lemmas,  Lemma \ref{lem:wgmuphelper},  \ref{lem:wgmhelper2}, and \ref{lem:wgmhelper3}, which correspond to three different ``good"  events. Lemma \ref{lem:wgmuphelper} provides an upper bound on the missing mass due to the subsampling stage. Lemma \ref{lem:wgmhelper2} guarantees that a high-frequency item in the original dataset will remain high-frequency in the subsampled dataset. Lemma \ref{lem:wgmhelper3} provides a high-probability upper bound on the frequency of items that are missed by the WGM during the thresholding step.  The proof of Theorem \ref{thm:wgmup} will then follow by combining Lemmas \ref{lem:wgmuphelper}, \ref{lem:wgmhelper2}, and \ref{lem:wgmhelper3}.

\begin{lemma} \label{lem:wgmuphelper} Let $W$ be a $(C, s)$-Zipfian dataset for $C \geq 1$ and $s > 1$. Fix a user contribution bound $\Delta_0 \geq 1.$ Let $\widetilde{W}$ be the random dataset such that $\widetilde{W}_i \subseteq W_i$ is a random sample without replacement of size $\min\{\Delta_0, W_i\}.$  Then, for every $\beta \in (0, 1)$, with probability $1-\beta$, we have that  
$$\operatorname{MM}(W, \cup_i \widetilde{W}_i) \leq \frac{C^{1/s}}{s-1}\left(\frac{1}{p^{\star} N} \log\left(\frac{(CN)^{1/s}}{\beta} \right) \right)^{\frac{s-1}{s}}$$
 where $p^{\star} := \min\left(1, \frac{\Delta_0}{\max_i |W_i|}\right).$  
\end{lemma}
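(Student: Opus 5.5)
We will bound the probability that a fixed item survives nowhere in the subsample, union bound over the heavy items, and then charge all remaining missing mass to the light tail using the Zipfian decay.

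\textbf{Step 1: survival probability of a single item.} Fix an item $x$ with frequency $N(x)$. For each user $i$ with $x \in W_i$, sampling $\min\{\Delta_0,|W_i|\}$ items uniformly without replacement includes $x$ with probability
\[
\min\{\Delta_0,|W_i|\}/|W_i| \geq \min\{1, \Delta_0/\max_j|W_j|\} = p^{\star}.
\]
The samples are independent across users. Hence
\[
\mathbb{P}\Bigl[x \notin \textstyle\bigcup_i \widetilde{W}_i\Bigr] \leq (1-p^{\star})^{N(x)} \leq \exp(-p^{\star} N(x)).
\]

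\textbf{Step 2: the heavy items all survive.} Set the frequency threshold $\tau := \frac{1}{p^{\star}}\log\bigl((CN)^{1/s}/\beta\bigr)$. Call an item heavy if $N(x) \geq \tau$. The number of heavy items is at most $M$. Moreover $M \leq (CN)^{1/s}$: the $M$-th largest frequency is at least $1$, and the Zipfian bound gives $1 \leq N_{(M)} \leq CN/M^s$. This is the same argument as Lemma~\ref{lem:Zipcard}. A union bound over the heavy items then shows that, with probability at least
\[
1 - (CN)^{1/s} e^{-p^{\star}\tau} = 1-\beta,
\]
every heavy item lies in $\bigcup_i\widetilde{W}_i$.

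\textbf{Step 3: bound the mass of the light items.} On this event, the missing mass is at most the total normalized mass of the light items, those with $N(x)/N < \tau/N =: \eta$. Order items by frequency. Light items have rank $r$ with $N_{(r)}/N < \eta$. Each term is also at most $\min\{\eta, C r^{-s}\}$. Let $r_0 := (C/\eta)^{1/s}$. Split the sum at $r_0$.
\begin{itemize}
\item Since $N_{(r)}/N \leq Cr^{-s}$, every item of rank $r < r_0$ has $N_{(r)}/N$ below the Zipfian envelope. To avoid counting those, bound the light mass by the Zipfian tail $\sum_{r \geq r_0} C r^{-s}$ plus the contribution of light items with rank below $r_0$.
\item The second part is at most $r_0\eta = C^{1/s}\eta^{(s-1)/s}$.
\item The tail is at most $\frac{C}{s-1}r_0^{1-s} + Cr_0^{-s}$ by integral comparison, which is of order $\frac{C^{1/s}}{s-1}\eta^{(s-1)/s}$.
\end{itemize}
Combining, the missing mass is $O\bigl(\frac{C^{1/s}}{s-1}\eta^{(s-1)/s}\bigr)$, i.e.
\[
O\left(\frac{C^{1/s}}{s-1}\left(\frac{1}{p^{\star}N}\log\frac{(CN)^{1/s}}{\beta}\right)^{\frac{s-1}{s}}\right).
\]

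\textbf{Main obstacle.} The delicate step is matching the exact constant $\frac{C^{1/s}}{s-1}$ with no extra additive terms. Two things must be handled. First, the boundary term $r_0\eta$ must be absorbed, which needs the factor $\frac{s}{s-1}$ versus $\frac{1}{s-1}$. Second, the rank-$r_0$ endpoint must be handled carefully.

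I would first try to get the bound directly with one integral. For each light item of rank $r$ we have $N_{(r)}/N \leq \min\{\eta, Cr^{-s}\}$. If the stated constant is too tight, I would accept a constant-factor loss (e.g.\ $\frac{s}{s-1}$). Alternatively, I would refine $r_0$ so that the tail integral alone, $\int_{r_0-1}^\infty C r^{-s}\,dr$, dominates.
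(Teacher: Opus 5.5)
Your outline is the paper's own: a per-item survival bound $e^{-p^{\star}N(x)}$, a union bound over heavy items, and a Zipfian tail for the light ones. Steps 1--2 are sound. Bounding the number of heavy items by $M \le (CN)^{1/s}$ is a bit cleaner than the paper's count $(CN/Q)^{1/s}$, which also needs $Q \ge 1$.

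The problem is your last line. On the event of Step 2, what Step 3 actually proves is
\[
\operatorname{MM}(W, \cup_i \widetilde{W}_i) \le \sum_{r \ge 1} \min\{\eta, C r^{-s}\} \le \int_0^{\infty} \min\{\eta, C x^{-s}\}\, dx = \frac{s}{s-1}\, C^{1/s}\, \eta^{\frac{s-1}{s}} .
\]
Comparing with the integral of this nonincreasing function also removes your stray $+\eta$. The result is the stated bound multiplied by $s$. Since $s$ is not a constant, writing it as $O\bigl(\frac{C^{1/s}}{s-1}\eta^{(s-1)/s}\bigr)$ is wrong for large $s$. The loss you propose to ``accept'' is a factor $s$, not $\frac{s}{s-1}$.

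You should not try to close this gap, because the inequality as stated is false. Take a single user with $W_1 = \{a, b\}$, $\Delta_0 = 1$, and $C = 2^{s-1}$. This dataset is $(C,s)$-Zipfian, with equality at rank $2$. Here $N = 2$, $p^{\star} = \tfrac12$ and $(CN)^{1/s} = 2$. The right-hand side is therefore $\frac{(2\ln(2/\beta))^{(s-1)/s}}{s-1}$. This is about $0.40$ for $s = 7$, $\beta = \tfrac12$, and it tends to $0$ as $s \to \infty$. Yet the missing mass is $\tfrac12$ with probability one.

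The paper reaches $\frac{1}{s-1}$ only through two invalid steps.
\begin{itemize}
\item It asserts $\sum_{x: N(x) \le Q} N(x) \le \sum_{r \ge r_0} N_{(r)}$, i.e.\ that every light item has rank at least $r_0$. This is exactly the possibility you flagged in your first bullet: the Zipfian condition only upper-bounds frequencies. In the example both items are light, but only rank $2$ exceeds $r_0 \approx 1.73$.
\item It uses $(r_0 - 1)^{1-s} \le r_0^{1-s}$, which is reversed because $1 - s < 0$.
\end{itemize}
So your fallback of refining $r_0$ until $\int_{r_0-1}^{\infty} C r^{-s}\, dr$ alone dominates is precisely the paper's flawed step. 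Your argument, with $\frac{s}{s-1}$ in place of $\frac{1}{s-1}$, is the correct form of this lemma. The same two steps reappear in the proof of Theorem~\ref{thm:wgmup}.
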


\begin{proof} Let $p_i := \min\left(1, \frac{\Delta_0}{|W_i|} \right)$ and $p^{\star} := \min_i p_i.$  Fix an item $x \in \bigcup_i W_i.$ Then 
\[
    \mathbb{P}\left[x \notin \bigcup_i \widetilde{W}_i \right] = \prod_{i: x \in W_i} (1-p_i) \leq \exp\left(-\sum_{i: x \in W_i} p_i \right) \leq e^{-p^{\star} N(x)},
\]
where $N(x) = \sum_i \mathbf{1}\{x \in W_i\}.$ Fix $\beta \in (0, 1)$ and consider the threshold
\[
    Q := \frac{1}{p^{\star}}\log\left(\frac{(CN)^{1/s}}{\beta} \right).
\]
Note that $Q^{\star} \geq 1$ by definition of $p^{\star}$. Since $W$ is $(C, s)$-Zipfian, for any $r \in [M]$ with $N_{(r)} > Q$, it must be the case that $r \leq \left(\frac{CN}{Q} \right)^{1/s}$ by $r^s \leq \frac{CN}{N_{(r)}}$. Hence, there are at most $\left(\frac{CN}{Q} \right)^{1/s}$ ``heavy" items whose frequencies are above $Q.$ By the union bound, we get that 
\[
    \mathbb{P}\left[\exists x \in \bigcup_i W_i\setminus \widetilde{W}_i \text{ and } N(x) > Q \right] \leq \left(\frac{CN}{Q} \right)^{1/s} e^{-p^{\star} Q} \leq \beta
\]
so with probability at least $1-\beta$, we have that $N(x) > Q \implies x \in \bigcup_i \widetilde{W}_i$ for all $x \in \bigcup_i W_i$.

Under this event, we have that 
\[
    \operatorname{MM}(W, \cup_i \widetilde{W}_i) \leq \frac{1}{N}\sum_{x: N(x) \leq Q} N(x) \leq \frac{1}{N}\sum_{r \geq r_0} N_{(r)}
\]
where $r_0 = \max\Big\{\left(\frac{CN}{Q}\right)^{1/s}, 1\Big\}.$ Using the fact that $N_{(r)} \leq C N r^{-s}$ and $s > 1$ (by assumption), we get that 
\[
    \frac{1}{N}\sum_{r \geq r_0} N_{(r)} \leq C \sum_{r \geq r_0} r^{-s} \leq C \int_{r_0 - 1}^{\infty}x^{-s}dx = \frac{C}{s-1}(r_0-1)^{1-s} \leq \frac{C}{s-1}r_0^{1-s}.
\]
Since $r_0 \geq \left(\frac{CN}{Q}\right)^{1/s}$, we get $\operatorname{MM}(W, \cup_i \widetilde{W}_i) \leq \frac{C^{1/s}}{s-1}\left(\frac{Q}{N}\right)^{\frac{s-1}{s}}$. Using $Q = \frac{1}{p^{\star}}\log\left(\frac{(CN)^{1/s}}{\beta} \right)$ completes the proof. 
\end{proof}

\begin{lemma} \label{lem:wgmhelper2} In the same setting as Lemma \ref{lem:wgmuphelper}, for every $\beta \in (0, 1)$, we have that with probability $1-\beta$, for every $x \in \bigcup_i W_i$,
$$N(x) \geq \tau_2 \implies \widetilde{N}(x) \geq \frac{1}{2}p^{\star} N(x),$$ 
where $p^{\star} := \min\left(1, \frac{\Delta_0}{\max_i |W_i|}\right)$,  $\tau_2 := \frac{8}{p^{\star}}\log\left(\frac{(CN)^{1/s}}{\beta} \right)$, and $\widetilde{N}(x) = \sum_i \mathbf{1}\{x \in \widetilde{W}_i\}.$
\end{lemma}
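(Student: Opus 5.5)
The plan is a multiplicative Chernoff bound for each heavy item, followed by a union bound over the heavy items, using the Zipfian assumption to count them. This mirrors the proof of Lemma \ref{lem:wgmuphelper}.

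\textbf{Step 1: write $\widetilde{N}(x)$ as a sum of independent indicators.} Fix $x \in \bigcup_i W_i$. Then
$$\widetilde{N}(x) = \sum_{i : x \in W_i} \mathbf{1}\{x \in \widetilde{W}_i\}.$$
The users subsample independently, so these indicators are independent Bernoulli variables. The $i$-th one has mean $p_i = \min\{1, \Delta_0/|W_i|\}$, because sampling $\min\{\Delta_0,|W_i|\}$ items uniformly without replacement from $W_i$ includes each fixed element with exactly this probability. Since $p_i \geq p^{\star}$, we get
$$\mu := \mathbb{E}[\widetilde{N}(x)] \geq p^{\star} N(x).$$

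\textbf{Step 2: apply the lower-tail Chernoff bound with $\eta = 1/2$.} This gives
$$\mathbb{P}\left[\widetilde{N}(x) < \tfrac12 \mu\right] \leq e^{-\mu/8}.$$
Because $\mu \geq p^{\star}N(x)$, the event $\widetilde{N}(x) < \tfrac12 p^{\star}N(x)$ is contained in $\widetilde{N}(x) < \tfrac12\mu$. Hence
$$\mathbb{P}\left[\widetilde{N}(x) < \tfrac12 p^{\star}N(x)\right] \leq e^{-p^{\star}N(x)/8}.$$
If some $p_i$ differ, one can instead couple each indicator with a Bernoulli$(p^{\star})$ lower bound, or simply apply Chernoff with the true mean $\mu$ as above. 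For every $x$ with $N(x) \geq \tau_2$, this bound is at most
$$\exp\left(-\log\left(\frac{(CN)^{1/s}}{\beta}\right)\right) = \frac{\beta}{(CN)^{1/s}}.$$

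\textbf{Step 3: union bound over heavy items.} It remains to count the items with $N(x) \geq \tau_2$. Since $W$ is $(C,s)$-Zipfian, $N_{(r)} \geq 1$ forces $r \leq (CN)^{1/s}$, exactly as in Lemma \ref{lem:Zipcard}. So there are at most $(CN)^{1/s}$ items with $N(x) \geq 1$, and in particular at most that many heavy items. (The finer count $(CN/\tau_2)^{1/s}$ is also available.) Union bounding over these items gives total failure probability at most
$$(CN)^{1/s} \cdot \frac{\beta}{(CN)^{1/s}} = \beta.$$
On the complementary event, the implication in the lemma holds simultaneously for all $x$.

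The only delicate points are the independence and marginal-probability claim in Step 1 and the direction of the inequality when replacing $\mu$ by $p^{\star}N(x)$. Both are routine, so I expect no real obstacle.
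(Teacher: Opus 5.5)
Your proof is correct and follows the paper's argument: a lower-tail multiplicative Chernoff bound for each item with $N(x)\ge\tau_2$ (using mean at least $p^{\star}N(x)$), then a union bound over heavy items counted via the Zipfian condition. The only difference is that you bound the number of heavy items by the total number of distinct items, $M \le (CN)^{1/s}$, instead of the paper's $(CN/\tau_2)^{1/s}$; this gives the final bound $\le \beta$ directly, without the paper's implicit reliance on $\tau_2 \ge 1$.
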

\begin{proof} Fix some $x \in \bigcup_i W_i$ such that $N(x) \geq \tau_2.$  Then,   $\widetilde{N}(x)$ is the sum of independent Bernoulli random variables with success probability at least $p^{\star}.$ Thus, we have that $\mathbb{E}\left[\widetilde{N}(x)\right] \geq p^{\star} N(x)$ and  multiplicative Chernoff's inequality gives 
$$
\mathbb{P}\left[\widetilde{N}(x) \leq \frac{1}{2} p^{\star} N(x) \right] \leq \exp\left(-\frac{1}{8} p^{\star} N(x) \right) \leq \exp\left(-\frac{1}{8} p^{\star} \tau_2 \right).
$$
Now, since $W$ is $(C, s)$-Zipfian, we have that $N(x) \leq \frac{CN}{r^s}$ for all $x \in \bigcup_i W_i$, so there can be at most $\left(\frac{CN}{\tau_2}\right)^{1/s}$ elements $x \in \bigcup_i W_i$ with $N(x) \geq \tau_2$. A union bound yields
$$
\mathbb{P}\left[\exists x \in \bigcup_i W_i: N(x) \geq \tau_2, \widetilde{N}(x) < \frac{1}{2}p^{\star} N(x) \right] \leq \frac{\beta}{(CN)^{1/s}} \cdot \left(\frac{CN}{\tau_2}\right)^{1/s} \leq  \beta,
$$
which completes the proof. 
\end{proof}

\begin{lemma} \label{lem:wgmhelper3}
For every dataset $W$, if the $\operatorname{WGM}$ is run with noise parameter $\sigma > 0$,  threshold $T \geq 1$, and user contribution bound $\Delta_0 \geq 1$, then for every $\beta \in (0, 1)$, with probability at least $1-\beta$ over $S \sim \operatorname{WGM}(W, \Delta_0)$, we have that  $\widetilde{H}(x) \leq T_0$,  $\forall x \in \widetilde{M}$, where $T_0 := T + \sigma \sqrt{2 \log\left(\frac{2N}{\beta}\right)}$ and $\widetilde{M} := \bigcup_i \widetilde{W}_i \setminus S.$
\end{lemma}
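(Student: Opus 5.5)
The plan is to condition on the subsampled dataset $\widetilde{W}$ and treat all randomness as coming from the Gaussian noise $\{Z_x\}_{x \in \bigcup_i \widetilde{W}_i}$. The key observation is that any $x \in \widetilde{M} = \bigcup_i \widetilde{W}_i \setminus S$ was rejected by the threshold step of Algorithm~\ref{alg:wgm}. That means $\widetilde{H}(x) + Z_x = \widetilde{H}'(x) < T$, and hence
\[
\widetilde{H}(x) < T - Z_x \le T + |Z_x|.
\]
So it suffices to bound $\max_{x} |Z_x|$ uniformly over all items in $\bigcup_i \widetilde{W}_i$ with probability $1-\beta$.

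To do this, I will first count the items. Because $\widetilde{W}_i \subseteq W_i$, we have
\[
\Bigl|\bigcup_i \widetilde{W}_i\Bigr| \le \sum_i |\widetilde{W}_i| \le \sum_i |W_i| = N.
\]
Conditional on $\widetilde{W}$, the $Z_x$ are i.i.d. $\mathcal{N}(0,\sigma^2)$ and there are at most $N$ of them. I can pad this to exactly $N$ i.i.d. Gaussians (or note that the bound is monotone in the count). Lemma~\ref{lem:gaussian_concentration} then gives, with conditional probability at least $1-\beta$,
\[
\max_x |Z_x| \le \sigma\sqrt{2\log(2N/\beta)}.
\]
Because this holds for every realization of $\widetilde{W}$, averaging over $\widetilde{W}$ gives the same unconditional probability bound.

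Combining the two steps, on this event every $x \in \widetilde{M}$ satisfies
\[
\widetilde{H}(x) < T + \sigma\sqrt{2\log(2N/\beta)} = T_0,
\]
which proves the claim. None of these steps is really an obstacle. The one point needing care is that the index set $\bigcup_i \widetilde{W}_i$ is random, so the union bound has to be applied conditionally on the subsampling, with a deterministic upper bound $N$ on its size. The assumption $T \ge 1$ is not needed for this lemma.
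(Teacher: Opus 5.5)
Your proposal is correct and matches the paper's proof: a rejected item has $\widetilde{H}(x) + Z_x < T$, and Gaussian concentration over the noise variables bounds all $|Z_x|$ simultaneously. Your conditioning on $\widetilde{W}$ and the bound $|\bigcup_i \widetilde{W}_i| \le N$ simply make explicit details the paper leaves implicit.
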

\begin{proof}
 By Line 4 in Algorithm \ref{alg:wgm}, we have that for all $x \in \widetilde{M}$, its noisy weighted count is $\widetilde{H}'(x) < T$. Hence, by standard Gaussian concentration bounds (see Appendix \ref{app:concineq}), with probability at least $1-\beta$ over just the sampling of Gaussian noise in Line 3, we have $\widetilde{H}(x) \leq T_0$ for all $x \in \widetilde{M}$.
\end{proof}

We are now ready to prove Theorem \ref{thm:wgmup}.

\begin{proof} (of Theorem \ref{thm:wgmup}) Let $\widetilde{W}$ be the random dataset obtained by sampling a set $\widetilde{W}_i$ of elements of size $\min\{\Delta_0, |W_i|\}$ without replacement from each $W_i$, and let $S$ be the overall output of the WGM. Let $\widetilde{N}(x) = \sum_{i=1}^n \mathbf{1}\{x \in \widetilde{W}_i\}$ be the frequency of item $x$ in the subsampled dataset and note that $\widetilde{N}(x) \leq \sqrt{q^{\star}}\widetilde{H}(x)$ for all $x \in \bigcup_i \widetilde{W}_i$, where $\widetilde{H}$ is the weighted histogram of item frequencies from $\widetilde{W}.$ Let $\widetilde{M} = \bigcup_i \widetilde{W}_i \setminus S$ be the random variable denoting the set of items in $\bigcup_i \widetilde{W}_i$ but not in the algorithm's output $S$. Finally, define $\tau_1 := \frac{2\sqrt{q^{\star}} T_0}{p^{\star}}$ and $\tau_2 := \frac{8}{p^{\star}}\log\left(\frac{3(CN)^{1/s}}{\beta} \right)$,  where $T_0 = T + \sigma \sqrt{2 \log\left(\frac{2N}{\beta}\right)}.$ 

Let $E_1$, $E_2$, and $E_3$ be the events of Lemma \ref{lem:wgmuphelper}, \ref{lem:wgmhelper2}, and \ref{lem:wgmhelper3} respectively, setting the failure probability for each event to be $\frac{\beta}{3}.$ Then, by the union bound, $E_1 \cap E_2 \cap E_3$ occurs with probability $1 - \beta.$ It suffices to show that $E_1 \cap E_2 \cap E_3$ implies the stated upper bound on $\operatorname{MM}(W, S).$ We can decompose $\operatorname{MM}(W, S)$ into two parts, mass missed by subsampling and mass missed by noisy thresholding:
\begin{equation}\label{eq:mmdecomp}\operatorname{MM}(W, S) = \operatorname{MM}(W, \cup_i \widetilde{W}_i) + \frac{1}{N}\sum_{x \in \widetilde{M}} N(x).\end{equation}
Under $E_1$, we have that   
$$\operatorname{MM}(W, \cup_i \widetilde{W}_i) \leq \frac{C^{1/s}}{s-1}\left(\frac{1}{p^{\star} N} \log\left(\frac{3(CN)^{1/s}}{\beta} \right) \right)^{\frac{s-1}{s}},$$
hence for the remainder of the proof, we will focus on bounding  $\frac{1}{N}\sum_{x \in \widetilde{M}} N(x).$ 
First, we claim that under $E_2$ and $E_3$, we have that $N(x) \leq \max\{\tau_1, \tau_2\} =: \tau$ for all $x \in \widetilde{M}.$ This is because, by event $E_3$, we have that for every $x \in \widetilde{M}$, $\widetilde{N}(x) \leq \sqrt{q^{\star}}T_0.$ Thus, if there exists an $x \in \widetilde{M}$ such that $N(x) \geq \tau_2$, then by event $E_2$, it must be the case that $ \frac{p^{\star}}{2}N(x)\leq \widetilde{N}(x) \leq  \sqrt{q^{\star}} T_0$, which implies that $N(x) \leq \tau_1.$

Now, define $r_0 := \max\Bigl\{\left(\frac{C N}{\tau}\right)^{1/s}, 1 \Bigl\}.$  If $r \geq r_0$, then $\frac{CN}{r^s} \leq \tau.$ Since $N(x) \leq \tau$ for every $x \in \widetilde{M}$, every such item has rank greater than $r_0.$  Hence, 
$$
\frac{1}{N}\sum_{x \in \widetilde{M}} N(x) \leq \frac{1}{N}\sum_{r \geq r_0} N_{(r)} \leq \sum_{r \geq r_0} \frac{C}{r^s} \leq C \int_{r_0-1}^{\infty} t^{-s} dt \leq \frac{C r_0^{1-s}}{s-1}.
$$
Substituting in the definition of $r_0$ and continuing yields
$$
\frac{1}{N}\sum_{x \in \widetilde{M}} N(x) \leq \frac{C^{1/s}}{s-1}\left(\frac{\tau}{N} \right)^{\frac{s-1}{s}} \leq \frac{C^{1/s}}{s-1}\left(\frac{\max\Bigl\{\frac{2\sqrt{q^{\star}} T_0}{p^{\star}}, \frac{8}{p^{\star}}\log\left(\frac{3(CN)^{1/s}}{\beta} \right) \Bigl\}}{N} \right)^{\frac{s-1}{s}}.
$$
Now, we are ready to complete the proof. Using the decomposition of $\operatorname{MM}(W, S)$ in Equation \ref{eq:mmdecomp} along with $E_1 \cap E_2 \cap E_3$ implies that 
\begin{align*}
\operatorname{MM}(W, S) &\leq \frac{C^{1/s}}{s-1}\left(\frac{1}{p^{\star} N} \log\left(\frac{3(CN)^{1/s}}{\beta} \right) \right)^{\frac{s-1}{s}} + \frac{C^{1/s}}{s-1}\left(\frac{\max\Bigl\{2\sqrt{q^{\star}} T_0, 8\log\left(\frac{3(CN)^{1/s}}{\beta} \right) \Bigl\}}{p^{\star}N} \right)^{\frac{s-1}{s}}\\
&\leq \frac{C^{1/s}}{s-1}\left(\frac{1}{p^{\star} N} \right)^{\frac{s-1}{s}} \left( 9\max\Bigl\{\sqrt{q^{\star}} T_0, \log\left(\frac{3(CN)^{1/s}}{\beta} \right) \Bigl\}\right)^{\frac{s-1}{s}}\\
&= \frac{C^{1/s}}{s-1}\left(\frac{9}{p^{\star} N} \right)^{\frac{s-1}{s}} \max\Biggl\{\sqrt{q^{\star}} \left(T + \sigma \sqrt{2 \log\left(\frac{6N}{\beta}\right)} \right), \log\left(\frac{3(CN)^{1/s}}{\beta} \right) \Biggl\}^{\frac{s-1}{s}}.
\end{align*}

The proof is complete after noting that $\frac{\sqrt{q^{\star}}}{p^{\star}} = \frac{\max_i |W_i|}{\sqrt{q^{\star}}}$.
\end{proof}

\subsubsection{Proof of Theorem \ref{thm:wgmupmminf}} \label{app:proofwgmummminf}

As in the proof of Theorem \ref{thm:wgmup}, we start with the following lemma which bounds the maximum missing mass due to the subsampling step. 

\begin{lemma} \label{lem:wgmuphelpermminf} Let $W$ be any dataset. Fix a user contribution bound $\Delta_0 \geq 1.$ Let $\widetilde{W}$ be the random dataset such that $\widetilde{W}_i \subseteq W_i$ is a random sample without replacement of size $\min\{\Delta_0, W_i\}.$  Then, for every $\beta \in (0, 1)$, with probability $1-\beta$, we have that  
$$\operatorname{MM}_{\infty}(W, \cup_i \widetilde{W}_i) \leq \frac{\log\left(\frac{N}{\beta} \right)}{p^{\star} N}.$$
\noindent where $p^{\star} := \min\left(1, \frac{\Delta_0}{\max_i |W_i|}\right).$  
\end{lemma}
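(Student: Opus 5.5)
The plan is to mirror the first half of the proof of Lemma~\ref{lem:wgmuphelper}, now without the Zipfian assumption. For $\ell_\infty$ we only need every item missed by subsampling to have small frequency, not a bound on their total mass. As before, let $p_i := \min\{1, \Delta_0/|W_i|\}$; by definition $p_i \geq p^{\star}$.

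\textbf{Step 1 (single item).} Fix $x \in \bigcup_i W_i$. Because the subsamples of different users are independent, and user $i$ keeps a fixed element of $W_i$ with probability exactly $p_i$ under uniform sampling without replacement of size $\min\{\Delta_0,|W_i|\}$, we get
$$\mathbb{P}\Bigl[x \notin \bigcup_i \widetilde{W}_i\Bigr] = \prod_{i : x \in W_i}(1-p_i) \leq e^{-p^{\star} N(x)}.$$

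\textbf{Step 2 (union bound).} Set $Q := \frac{1}{p^{\star}}\log\left(\frac{N}{\beta}\right)$. Without the Zipfian assumption we can no longer count heavy items by rank. Instead, use the trivial bound that the number of distinct items is $M \leq N$. A union bound over at most $N$ items gives
$$\mathbb{P}\Bigl[\exists x : N(x) > Q,\ x \notin \bigcup_i \widetilde{W}_i\Bigr] \leq N e^{-p^{\star} Q} = \beta.$$
(A sharper count would note that at most $N/Q$ items have $N(x) > Q$; the cruder count $N$ already matches the stated logarithm.)

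\textbf{Step 3 (conclude).} On the complementary event, every $x \in \bigcup_i W_i \setminus \bigcup_i \widetilde{W}_i$ has $N(x) \leq Q$. Hence
$$\operatorname{MM}_\infty\Bigl(W, \bigcup_i \widetilde{W}_i\Bigr) = \max_{x \text{ missed}} \frac{N(x)}{N} \leq \frac{Q}{N} = \frac{\log(N/\beta)}{p^{\star} N},$$
which is the claimed bound. If no item is missed, the maximum over the empty set is $0$ and the bound holds trivially.

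There is no real obstacle. The only point needing care is the replacement of the Zipfian rank-counting argument by the bound $M \leq N$, which is exactly why the logarithm here is $\log(N/\beta)$ rather than $\log((CN)^{1/s}/\beta)$.
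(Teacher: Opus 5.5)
Your proof is correct and is essentially the paper's argument: the same per-item bound $e^{-p^{\star}N(x)}$, the same threshold $Q = \frac{1}{p^{\star}}\log(N/\beta)$, and a union bound over items with $N(x) > Q$. The only difference is that you count those items by $M \leq N$ instead of the paper's $N/Q$. Your count is slightly more robust: the paper's step $\frac{N}{Q}e^{-p^{\star}Q} \leq \beta$ needs $Q \geq 1$, which can fail when $N/\beta$ is small, whereas your bound $N e^{-p^{\star}Q} = \beta$ holds unconditionally.
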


\begin{proof} Let $W$ be any dataset, $\Delta_0 \geq 1$ and $\beta \in (0, 1).$ We will follow the same proof strategy as in the proof of Lemma \ref{lem:wgmuphelper}. Let $p_i := \min\left(1, \frac{\Delta_0}{|W_i|} \right)$ and $p^{\star} := \min_i p_i.$  Fix an item $x \in \bigcup_i W_i.$ Then, 
$$\mathbb{P}\left[x \notin \bigcup_i \widetilde{W}_i \right] = \prod_{i: x \in W_i} (1-p_i) \leq \exp\left(-\sum_{i: x \in W_i} p_i \right) \leq e^{-p^{\star} N(x)},$$
where $N(x) = \sum_i \mathbf{1}\{x \in W_i\}.$ Fix $\beta \in (0, 1)$ and consider the threshold
$$Q := \frac{1}{p^{\star}}\log\left(\frac{N}{\beta} \right) \geq 1.$$
Since $\sum_{x \in \bigcup_i W_i} N(x) = N$, we have that that there are at most $\frac{N}{Q}$ items such that $N(x) > Q$. Hence, by the union bound, we get that 
$$\mathbb{P}\left[\exists x \in \bigcup_i W_i\setminus \widetilde{W}_i \text{ and } N(x) > Q \right] \leq \frac{N}{Q} e^{-p^{\star}Q} \leq \beta.$$
Hence, with probability at least $1-\beta$, we have that if $x \notin \bigcup_i \widetilde{W}_i$, then  $N(x) \leq Q$, giving that 
$$\operatorname{MM}_{\infty}(W, \cup_i \widetilde{W}_i) \leq \frac{\log\left(\frac{N}{\beta} \right)}{p^{\star} N}, $$
which completes the proof. 
\end{proof}

Now, we use Lemma \ref{lem:wgmuphelpermminf} to complete the proof of Theorem \ref{thm:wgmupmminf}. Since the proof follows almost identically, we only provide a sketch here. 

\begin{proof} (sketch of Theorem \ref{thm:wgmupmminf}) As in the proof of Theorem \ref{thm:wgmup}, define $q^{\star} = \min\{\max_i |W_i|, \Delta_0\}$ and $T_0 = T + \sigma \sqrt{2 \log\left(\frac{6N}{\beta}\right)}.$ Keep $\tau_1 := \frac{2 \sqrt{q^{\star}} T_0}{p^{\star}}$ but take 
$$\tau_2 := \frac{8}{p^{\star}} \log\left(\frac{3N}{\beta}\right).$$
Let $E_2$ be defined identically in terms of $T_0$. That is, $E_2$ is the event that $\widetilde{H}(x) \leq T_0$ for all $x \in \widetilde{M}$, where $T_0 = T + \sigma \sqrt{2 \log\left(\frac{6N}{\beta}\right)}$ and $\widetilde{M} = \bigcup_i \widetilde{W}_i \setminus S.$ Likewise, define $E_3$ in terms of $\tau_2$ analogous to that in the proof of Theorem \ref{thm:wgmup}. That is, $E_3$ is the event that for all $x \in \bigcup_i W_i$, either $N(x) < \tau_2$ or $\widetilde{N}(x) \geq \frac{p^{\star}}{2} \cdot N(x).$ The fact that $E_2$ occurs with probability at least $1-\frac{\beta}{3}$ follows identitically from the proof of Theorem \ref{thm:wgmup}.. As for event $E_3$, note that we have
$$
\mathbb{P}\left[\exists x \in \bigcup_i W_i: N(x) \geq \tau_2, \widetilde{N}(x) < \frac{1}{2}p^{\star} N(x) \right] \leq \frac{N}{\tau_2} \cdot e^{-\frac{p^{\star} \tau_2}{8}} \leq  \frac{\beta}{3},
$$
which follows similarly by using multiplicative Chernoff's, the union bound, and the fact that there can be at most $\frac{N}{\tau_2}$ items with frequency at least $\tau_2.$ Hence, $E_3$ occurs with probability at least $1-\frac{\beta}{3}.$ 

Then, by the union bound we have that with probability $1 - \frac{2\beta}{3}$, both $E_2$ and $E_3$ occur. When this happens, we have that  $N(x) \leq \max\{\tau_1, \tau_2\}$ for all $x \in \widetilde{M}$ because either $N(x) \leq \tau_2$, or otherwise $\frac{1}{2}p^{\star} N(x) \leq \widetilde{N}(x) \leq \sqrt{q^{\star}}T_0$, implying that $N(x) \leq \tau_1.$  Consequently, under $E_2$ and $E_3$, we have that 
$$\max_{x \in \cup_i \widetilde{W}_i \setminus S} \frac{N(x)}{N} \leq \frac{\max\{\tau_1, \tau_2\}}{N}$$
By Lemma \ref{lem:wgmuphelpermminf}, the event
$$\operatorname{MM}_{\infty}(W, \cup_i \widetilde{W}_i) \leq \frac{\log\left(\frac{3N}{\beta} \right)}{p^{\star} N}.$$
occurs  with probability $1-\frac{\beta}{3}.$ Hence, under $E_1 \cap E_2 \cap E_3$, we have that 
\begin{align*}
\operatorname{MM}_{\infty}(W, S) &\leq \max\Bigl\{\operatorname{MM}_{\infty}(W, \cup_i \widetilde{W}_i),  \max_{x \in \widetilde{W}\setminus S}\frac{N(x)}{N}\Bigl\}\\
&\leq  \frac{8}{p^{\star} N}\max\Biggl\{ \log\left(\frac{3N}{\beta}\right), \sqrt{q^{\star}} \left(T + \sigma \sqrt{2 \log\left(\frac{6N}{\beta}\right)} \right)\Biggl\},
\end{align*}
which occurs with probability $1 - \beta.$ Finally, plugging in $\sigma = \Theta\left(\frac{ \sqrt{\ln(1/\delta)}}{\eps}\right)$ and $T = \tilde{\Theta}_{\Delta_0, \delta}(\sigma)$ completes the claim. 
\end{proof}

\subsection{Proof of Theorem \ref{thm:wgmthentopk}} \label{app:prooftopkup}

Since the privacy guarantee follows by basic composition, we only focus on proving the utility guarantee in Theorem \ref{thm:wgmthentopk}. First, we provide the proof of Lemma \ref{lem:expmechpeel}.

\begin{proof} (of Lemma \ref{lem:expmechpeel})  Let $I = \mathcal{T}_k(W, D) \setminus S$ and $O = S\setminus \Tcal_k(W, D)$. Then, $|I| = |O| \leq k$ and 
$$
\sum_{x \in \Tcal_k(W, D)} N(x) - \sum_{x \in S} N(x) = \sum_{x \in I} N(x) - \sum_{x \in O}N(x).
$$
Since $|I| = |O|$, there exists a one-to-one mapping $\pi: I \rightarrow O$
 that pairs each item in $I$ with an item in $O$. Thus, we can write 
 $$
 \sum_{x \in I} N(x) - \sum_{x \in O}N(x) = \sum_{x \in I} (N(x) - N(\pi(x))).
 $$
 By definition of $I$ and $O$, we have that for every $x \in I$ and $y \in O$, $\tilde{N}(y) \geq \tilde{N}(x)$. Hence, we have that $N(x) - N(y) \leq Z_y - Z_x$ and 
 $$
 \sum_{x \in I} (N(x) - N(\pi(x))) \leq \sum_{x \in I}  (Z_{\pi(x)} - Z_x).
 $$
 Define $R := \sum_{x \in I}  (Z_{\pi(x)} - Z_x).$ Our goal is to get a high-probability upper bound on $R$ via concentration. By Gumbel concentration (Lemma~\ref{lem:gumbel_concentration}), with probability $1-\beta$, we have that $\max_{x \in D}|Z_x| \leq \lambda \cdot \log(2|D|/\beta)$. Hence, under this event, we get that 
 $$
 R \leq 2k \lambda \cdot \log(2|D|/\beta).
 $$
 Altogether, with probability $1-\beta$, we have
  $$
  \sum_{x \in \Tcal_k(W, D)} N(x) - \sum_{x \in S} N(x) \leq R \leq 2k \lambda \cdot \log(2|D|/\beta).
  $$
  Dividing by $N$ completes the proof.
\end{proof}

 Combining Lemmas \ref{lem:expmechpeel} and \ref{lem:privexppeel} then gives the following corollary.

\begin{corollary} \label{cor:expmechpeel}  For every  dataset $W$, domain $D$, $k \leq |D|$, $\eps > 0$, and $\delta \in (0, 1)$, if Algorithm \ref{alg:expmechpeel} is run with
$\lambda = \tilde{\Theta}_{\delta}\left(\frac{\sqrt{k}}{\eps}\right)$ from Lemma \ref{lem:privexppeel}, then Algorithm \ref{alg:expmechpeel} is $(\eps, \delta)$-differentially private and with probability at least $1-\beta$ over its output $S$, we have that 
$$
\frac{1}{N}\left(\sum_{x \in \Tcal_k(W, D)} N(x) - \sum_{x \in S} N(x)\right) \leq \tilde{O}_{\delta, \beta}\left(\frac{k^{3/2} \log|D|}{\eps N}\right).
$$
\end{corollary}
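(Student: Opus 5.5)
The plan is to obtain Corollary~\ref{cor:expmechpeel} by substituting the noise level from Lemma~\ref{lem:privexppeel} into the utility bound of Lemma~\ref{lem:expmechpeel}.

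\textbf{Privacy.} We take $\lambda = 1/\eps_0$, with $\eps_0$ as in Lemma~\ref{lem:fullprivexppeel}. By that lemma (equivalently, Lemma~\ref{lem:privexppeel}), Algorithm~\ref{alg:expmechpeel} is $(\eps,\delta)$-differentially private. No further argument is needed for this part.

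\textbf{Size of $\lambda$.} We need an explicit upper bound $\lambda = \tilde{O}_\delta(\sqrt{k}/\eps)$, which is the only step requiring calculation. Write $L = 8\log(1/\delta)$. Rationalizing gives
$$\sqrt{\tfrac{L+8\eps}{k}}-\sqrt{\tfrac{L}{k}} = \frac{8\eps}{\sqrt{k}\left(\sqrt{L+8\eps}+\sqrt{L}\right)}.$$
We split into two cases.
\begin{itemize}
\item If $\eps \lesssim \log(1/\delta)$, the denominator is $O\bigl(\sqrt{k\log(1/\delta)}\bigr)$. Hence $\eps_0 = \Omega\bigl(\eps/\sqrt{k\log(1/\delta)}\bigr)$ and $\lambda = O\bigl(\sqrt{k\log(1/\delta)}/\eps\bigr)$.
\item If $\eps$ is larger, the second term is still $\Omega\bigl(\sqrt{\eps/k}\bigr)$. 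The resulting $\lambda$ is then at most $O\bigl(\sqrt{k\log(1/\delta)}/\eps\bigr)$ up to the regime-dependent constants absorbed in $\tilde{O}_\delta$. The $\eps/k$ branch of the max only increases $\eps_0$, so it can only help.
\end{itemize}
In both cases $\lambda = \tilde{O}_\delta(\sqrt{k}/\eps)$. If $\lambda<1$, the hypothesis $\lambda\ge 1$ of Lemma~\ref{lem:expmechpeel} fails. In that case we either inflate $\lambda$ to $\max\{\lambda,1\}$, which preserves privacy by post-processing monotonicity, or simply note that the proof of Lemma~\ref{lem:expmechpeel} never uses $\lambda \ge 1$.

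\textbf{Utility.} Invoke Lemma~\ref{lem:expmechpeel}, whose proof gives the explicit bound $2k\lambda\log(2|D|/\beta)/N$ with probability $1-\beta$. Substituting $\lambda = \tilde{O}_\delta(\sqrt{k}/\eps)$ yields
$$\frac{1}{N}\Bigl(\sum_{x\in\Tcal_k(W,D)}N(x)-\sum_{x\in S}N(x)\Bigr) \le \tilde{O}_{\delta}\left(\frac{k^{3/2}}{\eps N}\log\frac{2|D|}{\beta}\right).$$
Finally, $\log(2|D|/\beta)\le \log(2/\beta)\log|D|+\log(2/\beta)$ for $|D|\ge 2$; for $|D|=1$ take $k=1$ trivially. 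So the $\beta$-dependence is absorbed into $\tilde{O}_{\delta,\beta}$, giving the stated $k^{3/2}\log|D|/(\eps N)$ rate.

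The main obstacle is purely bookkeeping: verifying the $\tilde{\Theta}_\delta(\sqrt{k}/\eps)$ scaling of $1/\eps_0$ uniformly in $\eps$, and handling the $\lambda \ge 1$ hypothesis. Everything else is direct substitution.
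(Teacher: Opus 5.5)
Your proposal matches the paper, which obtains the corollary by plugging the $\lambda$ of Lemma~\ref{lem:privexppeel} into the bound $2k\lambda\log(2|D|/\beta)/N$ from the proof of Lemma~\ref{lem:expmechpeel}. One caveat: your large-$\eps$ case is wrong as written, because $\sqrt{k/\eps}$ exceeds $\sqrt{k\log(1/\delta)}/\eps$ by a factor of order $\sqrt{\eps/\log(1/\delta)}$, which is not a constant; however, this verification is unnecessary, since the corollary takes $\lambda=\tilde{\Theta}_\delta(\sqrt{k}/\eps)$ from Lemma~\ref{lem:privexppeel} as a hypothesis.
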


With Corollary \ref{cor:expmechpeel} in hand, we are now ready to prove Theorem \ref{thm:wgmthentopk} after picking the same choice of $(\sigma, T)$ as in Theorem \ref{thm:privwgm}.

\begin{proof} (of Theorem \ref{thm:wgmthentopk})  Recall that by Theorem \ref{thm:wgmupmminf},  if we set $\sigma = \Theta\left(\frac{ \sqrt{\ln(1/\delta)}}{\eps}\right)$ and $T = \tilde{\Theta}_{\Delta_0, \delta/2}(\max\{\sigma, 1\})$, then the WGM is $(\eps/2, \delta/2)$-differentially private and with probability at least $1-\beta/2$ over $D \sim \operatorname{WGM}(W, \Delta_0)$, we have that 
\begin{equation} \label{eq:mminf}
\operatorname{MM}_{\infty}(W, D) \leq \tilde{O}_{\Delta_0, \delta/2, \beta/2}\left(\frac{\max_i |W_i|}{\eps N \sqrt{q^{\star}} }\right).
\end{equation}
Let $\Tcal_k(W)$ be the true set of top-$k$ elements and $\Tcal_k(W, D)$ be the set of top-$k$ elements within the (random) domain $D$. Then, under this event, Equation \ref{eq:mminf} gives that
\begin{equation} \label{eq:topkwgmmm}
\frac{1}{N}\left(\sum_{x \in \Tcal_k(W)} N(x) - \sum_{x \in \Tcal_k(W, D)} N(x)\right) \leq k \cdot \operatorname{MM}_{\infty}(W, D)  \leq  \tilde{O}_{\Delta_0, \delta/2, \beta/2}\left(\frac{k \cdot \max_i |W_i|}{\eps N \sqrt{q^{\star}}  }\right).
\end{equation}
 By Corollary \ref{cor:expmechpeel}, we know that running Algorithm \ref{alg:expmechpeel} on input $W$, domain $D$ and $\lambda =  \tilde{O}_{\delta/2}\left(\frac{\sqrt{k}}{\eps}\right)$ gives $(\eps/2, \delta/2)$-differentially privacy and that with probability at least $1-\beta/2$, its output $S$ satisfies
\begin{equation} \label{eq:topkpeelmm}
\frac{1}{N}\left(\sum_{x \in \Tcal_k(W, D)} N(x) - \sum_{x \in S} N(x)\right) \leq \tilde{O}_{\delta/2, \beta/2}\left(\frac{k^{3/2}\log |D|}{\eps N}\right).
\end{equation}
Adding Inequalities \ref{eq:topkwgmmm} and \ref{eq:topkpeelmm} together and taking $|D| \leq |\bigcup_i W_i| =: M$  gives that with probability $1-\beta$, the output $S$ of Algorithm \ref{alg:meta} satisfies 
$$
\operatorname{MM}^k(W, S) \leq \tilde{O}_{\beta, \delta, \Delta_0}\left(\frac{k}{N}\left(\frac{\max_i |W_i|}{\eps \sqrt{q^{\star}}}  +  \frac{\sqrt{k} \log (M)}{\eps} \right)\right),
$$
which completes the proof.
\end{proof}

\subsection{Proof of Theorem \ref{thm:wgmthenhit}} \label{app:proofhitup}

Before we prove Theorem \ref{thm:wgmthenhit}, we first present the pseudo-code (Algorithm \ref{alg:gup}) for the user peeling mechanism described in Section \ref{sec:khit} along with its privacy and utility guarantees. 

\begin{algorithm}[t]
\setcounter{AlgoLine}{0}
\caption{User Peeling Mechanism}\label{alg:gup}
\KwIn{Dataset $W$, domain $D$, number of elements $k$, noise-level $\lambda$}

Initialize $W^1 \leftarrow W$, $D_1 \leftarrow D$, and output set $S_0 \leftarrow \emptyset$

\For{$j = 1, \dots, k$} {

    Compute histogram $H^j(x) = \sum_i \mathbf{1}\{x \in W^j_i\}$ for all $x \in D_j.$
    
    Compute noisy histogram  $\tilde{H}^j(x) =  H^j(x) + Z_x^j$ for all $x \in D_j$ where $Z^j_x \sim \operatorname{Gumbel}(\lambda).$

    Let $x_j \in \argmax_{x \in D_j} \tilde{H}^j(x)$
    
    Update $S_j \leftarrow S_{j-1} \cup \{x_j\}$, $D_{j+1} \leftarrow D_j \setminus \{x_j\}$, and $W^{j+1} \leftarrow \{W_i \in W^j: x_j \notin W_i\}.$
}

\KwOut{$S_k$}
\end{algorithm}

The following lemma gives the utility and privacy guarantee of Algorithm \ref{alg:gup}. 

\begin{lemma} \label{thm:gup}
  For every dataset $W$, domain $D$, and $k \leq |D|$, if Algorithm \ref{alg:gup} is run with noise parameter $\lambda > 0$, then with probability $1-\beta$ over its output $S$, we have that 
$$\operatorname{Hits}(W, S)  \geq \left(1 - \frac{1}{e}\right) \operatorname{Opt}(W, D, k)  - 2k\lambda \log\left(\frac{2|D|k}{\beta}\right).$$
 where $\operatorname{Opt}(W,D,k) := \argmax_{S \subseteq D, |S| \leq k} \operatorname{Hits}(W, S).$
If one picks $\lambda = \tilde{\Theta}_{\delta}\left( \frac{\sqrt{k}}{\eps}\right)$ from Lemma \ref{lem:privexppeel}, then Algorithm \ref{alg:gup} is $(\eps, \delta)$ differentially private and with probability $1-\beta$ over its output $S$, we have 
$$\operatorname{Hits}(W, S)   \geq \left(1 - \frac{1}{e}\right) \operatorname{Opt}(W, D, k)  - \tilde{O}_{\delta, \beta}\left(\frac{k^{3/2}\log\left(|D|k \right)}{\eps} \right).$$
\end{lemma}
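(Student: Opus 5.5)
The utility bound follows the standard noisy-greedy argument for monotone submodular maximization, with a uniform bound on the Gumbel noise across all $k$ rounds. The privacy bound then comes from recognizing the mechanism as an adaptive composition of $k$ Gumbel/exponential-mechanism selections.

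\textbf{Step 1: the good event.} Let $E$ be the event that every noise variable drawn satisfies $|Z^j_x| \leq \lambda \log(2|D|k/\beta)$. There are at most $|D|k$ of them, so Lemma~\ref{lem:gumbel_concentration} gives $\mathbb{P}[E] \geq 1-\beta$. Write $f(S) := \operatorname{Hits}(W,S)$, which is monotone submodular. The key identification is $H^j(x) = f(S_{j-1}\cup\{x\}) - f(S_{j-1})$, since $W^j$ consists exactly of the users not yet hit by $S_{j-1}$.

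\textbf{Step 2: one step of greedy.} Under $E$, the chosen item $x_j$ satisfies
\[
H^j(x_j) \geq \max_{x\in D_j} H^j(x) - 2\lambda\log(2|D|k/\beta) =: \max_{x \in D_j} H^j(x) - \eta .
\]
Let $S^\star\subseteq D$ be optimal with $|S^\star|\le k$. By submodularity and monotonicity,
\[
\operatorname{Opt}(W,D,k) - f(S_{j-1}) \le \sum_{x\in S^\star\setminus S_{j-1}} H^j(x) \le k \max_{x\in D_j} H^j(x).
\]
Here items of $S^\star$ already in $S_{j-1}$ contribute zero, so restricting the maximum to $D_j = D\setminus S_{j-1}$ is harmless. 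Combining the two displays gives
\[
f(S_j)-f(S_{j-1}) \ge \frac{1}{k}\bigl(\operatorname{Opt}-f(S_{j-1})\bigr)-\eta .
\]

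\textbf{Step 3: unroll the recursion.} Set $g_j := \operatorname{Opt}-f(S_j)$. Step 2 gives $g_j \le (1-1/k)g_{j-1}+\eta$, so
\[
g_k \le (1-1/k)^k\operatorname{Opt} + k\eta \le \operatorname{Opt}/e + 2k\lambda\log(2|D|k/\beta).
\]
This is the first claim. The early-termination cases (items or users run out) only help: once all users are hit, $f$ already equals its maximum.

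\textbf{Step 4: privacy and the second claim.} Each round is a Gumbel-max selection, which is equivalent to the exponential mechanism. Each $H^j$ has sensitivity $1$ under adding or removing one user, because a user affects only its own indicator. The sequence of $k$ adaptively chosen selections is therefore an adaptive composition of $k$ exponential mechanisms. This is the same composition that yields Lemma~\ref{lem:privexppeel} and Lemma~\ref{lem:fullprivexppeel}, whose analyses (via the tight composition of exponential mechanisms) apply to adaptive sensitivity-$1$ selections. So $\lambda=\tilde\Theta_\delta(\sqrt{k}/\eps)$ gives $(\eps,\delta)$-DP. Substituting this $\lambda$ into the bound from Step 3 yields the additive term $\tilde O_{\delta,\beta}(k^{3/2}\log(|D|k)/\eps)$.

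\textbf{Main obstacle.} The delicate step is justifying the privacy claim. Lemma~\ref{lem:privexppeel} is stated for peeling on a fixed histogram, whereas here the histogram changes adaptively between rounds. I would argue that the proof of \citet{durfee2019practical} and \citet{gillenwater2022joint} only uses adaptive composition of sensitivity-$1$ exponential mechanisms. The per-round sensitivity bound holds because removing a user only changes that user's own contributions in every round, given the previous outputs.
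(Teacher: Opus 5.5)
Your proposal is correct and follows essentially the same route as the paper. It uses the same uniform Gumbel event over at most $|D|k$ noise draws, the same identification $H^j(x)=f(S_{j-1}\cup\{x\})-f(S_{j-1})$, and the same recursion $G_j\le(1-1/k)G_{j-1}+2\lambda\log(2|D|k/\beta)$. The one cosmetic difference is that you compare against $\max_{x\in D_j}H^j(x)$ rather than the best remaining element of $S^\star$, which is equivalent.

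For privacy the paper only says the claim follows from Lemma~\ref{lem:privexppeel}, so your adaptive exponential-mechanism composition is a more explicit version of that step. To use the exact $\lambda=1/\eps_0$ of Lemma~\ref{lem:fullprivexppeel} without a factor-$2$ loss, you should also note that each $H^j$ is monotone: given the history, adding a user changes every count by $0$ or $+1$, and this is what makes each round $(1/\lambda)$-bounded-range.
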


Since  Algorithm \ref{alg:gup} also uses the peeling exponential mechanism, the privacy guarantee in Lemma \ref{thm:gup} follows exactly from Lemma \ref{lem:privexppeel}, and so we omit the proof here. As for the utility guarantee, the proof is similar to the proof of Theorem 7 in \cite{mitrovic2017differentially}. For the sake of completeness, we provide a self-contained analysis below. 

\begin{proof}
  Note that there exists at most $|D|k$ random variables $Z_x^j$. Let $E$ be the event that $|Z_x^j| \leq \alpha$ for all $x \in D$ and $j \in [k]$, where $\alpha = \lambda  \ln\left(\frac{2|D|k}{\beta}\right).$ Then, by Gumbel concentration (Lemma~\ref{lem:gumbel_concentration}), we have that $\mathbb{P}(E) \geq 1 - \beta.$ For the rest of this proof, we will operate under the assumption that event $E$ happens. 

Define the function $f: 2^{D} \rightarrow \mathbb{N}$ as $f(S) \coloneqq \sum_{i=1}^n \mathbf{1}\{W_i \cap S \neq \emptyset\} = \operatorname{Hits}(W, S)$. Then, $f$ is a monotonic, non-negative submodular function. For $x\in D$ and $S\subseteq D$ let $\Delta(x,S) \coloneqq f(S\cup\{x\})-f(S) \geq 0$. Let $S^\star\in\arg\max_{S \subseteq D, |S|\leq k} f(S)$ be the optimal subset of $D$ and denote $\operatorname{Opt} := f(S^\star)$. First, we claim that for every $S\subseteq D$,
\begin{equation}
\label{eq:sub}
\max_{x\in S^{\star}\setminus S} \Delta(x,S) \geq \frac{\operatorname{Opt}-f(S)}{k}.
\end{equation}
This is because 
\begin{align*}
\operatorname{Opt}=f(S^\star) \leq&\ f(S^{\star} \cup S) \\
    \leq&\ f(S)+\sum_{x\in S^\star\setminus S}\Delta(x,S) \\
    \leq&\ f(S) + k\max_{x\in S^{\star}\setminus S}\Delta(x,S),
\end{align*}
where the first two inequalities follows from monotonicity and submodularity respectively. Now, let $x_1,\dots,x_k$ be the (random) items selected by the algorithm, and $S_j=\{x_1,\dots,x_j\}$ with $S_0=\emptyset$. At round $j\in[k]$, define the current domain $D_j:=D\setminus S_{j-1}$. Let
$$
x_j^\star \in \argmax_{x\in S^\star\setminus S_{j-1}} \Delta(x,S_{j-1}).
$$
Then,  by Equation \ref{eq:sub}, we have $\Delta(x_j^\star,S_{j-1}) \geq \frac{\operatorname{Opt}-f(S_{j-1})}{k}$. This implies that 
\begin{align*}
\Delta(x_j,S_{j-1}) &\geq \frac{\operatorname{Opt}-f(S_{j-1})}{k} - (\Delta(x^{\star}_j,S_{j-1}) - \Delta(x_j,S_{j-1}))\\
&= \frac{\operatorname{Opt}-f(S_{j-1})}{k} - (H^j(x_j^\star)-H^j(x_j))
\end{align*}
where the last equality stems from the fact that for every  $x\in D_j$, $\Delta(x,S_{j-1}) = H^j(x)$.

Recall that $\tilde{H}^{j}(x):=H^j(x)+Z_x^j$ for all $x \in D.$  We can upper bound $H^j(x_j^\star)-H^j(x_j)$ as
$$
H^j(x_j^\star)-H^j(x_j)
= (\tilde H^{j}(x_j^\star)-\tilde H^{j}(x_j)) + (Z_{x_j}^j - Z_{x_j^\star}^j)
\leq Z_{x_j}^j - Z_{x_j^\star}^j,
$$
where last inequality is because $\tilde{H}^{j}(x_j^\star)-\tilde{H}^{j}(x_j)\le 0$ by the choice of  $x_j\in\arg\max_{x\in D_j}\tilde H^{\,j}(x)$ and the fact that $x^{\star}_j \in S^{\star}\setminus S_{j-1} \subseteq D_j$. Therefore,
\begin{equation}
\label{eq:perstep}
\Delta(x_j,S_{j-1})
\geq \frac{\operatorname{Opt}-f(S_{j-1})}{k} - \big(Z_{x_j}^j - Z_{x_j^\star}^j\big).
\end{equation}
Let $G_j:=\operatorname{Opt}-f(S_j)$. Using $f(S_j)=f(S_{j-1})+\Delta(x_j,S_{j-1})$, we rearrange Equation \ref{eq:perstep} to get
$$
G_j \leq \Big(1-\tfrac{1}{k}\Big)G_{j-1} + (Z_{x_j}^j - Z_{x_j^\star}^j).
$$
On the event $E$, we have that $Z_{x_j}^j - Z_{x_j^\star}^j\leq |Z_{x_j}^j|+|Z_{x_j^\star}^j|\leq 2\alpha$, hence
\[
G_j \;\le\; \Big(1-\tfrac{1}{k}\Big)G_{j-1} + 2\alpha.
\]
Recursing for $j=1,\dots,k$ and using the fact that $G_0=\operatorname{Opt}$ and $(1-\frac1 k)^k\le e^{-1}$, gives
$$
G_k \leq \frac{1}{e}\operatorname{Opt} + 2\alpha k.
$$
Substituting in the definition of $G_k$ and $\alpha=\lambda\ln\!\big(\frac{2|D|k}{\beta}\big)$ 
 gives
$$
f(S_k) \geq \Big(1-\tfrac{1}{e}\Big)\operatorname{Opt}
- 2k\lambda \log\Big(\frac{2|D|k}{\beta}\Big). 
$$
\end{proof}

The guarantee in Lemma \ref{thm:gup} is with respect to the optimal set of $k$ elements within the domain $D$. Since $D \subseteq \bigcup_i W_i$, we have that 
$$\operatorname{Opt}(W,D,k) \leq \operatorname{Opt}(W, k).$$
The following simple lemma shows that when the domain $D$ contains high-frequency items from $W$, $\operatorname{Opt}(W,D,k)$ is not too far away from $\operatorname{Opt}(W,k)$. 

\begin{lemma} \label{lem:connectorhs} Fix a dataset $W$, $\tau \geq 0$, and let $D = \{x \in \bigcup_i W_i: N(x) \geq \tau\}.$ Then, 
$$\operatorname{Opt}(W,D,k) \geq \operatorname{Opt}(W,k) -  k\tau.$$
\end{lemma}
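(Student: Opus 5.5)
The plan is an exchange argument. Take an optimal set $S^\star \subseteq \bigcup_i W_i$ with $|S^\star| \leq k$ and $\operatorname{Hits}(W, S^\star) = \operatorname{Opt}(W,k)$; we may assume $S^\star \subseteq \bigcup_i W_i$, since items outside $\bigcup_i W_i$ hit no user. Split it as $S^\star = (S^\star \cap D) \cup (S^\star \setminus D)$. The candidate is $S' := S^\star \cap D$. It satisfies $S' \subseteq D$ and $|S'| \leq k$, so it is feasible for $\operatorname{Opt}(W,D,k)$. Hence $\operatorname{Opt}(W,D,k) \geq \operatorname{Hits}(W,S')$, and it suffices to show $\operatorname{Hits}(W,S') \geq \operatorname{Hits}(W,S^\star) - k\tau$.

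The key step bounds the hits lost by deleting the items of $S^\star \setminus D$. By definition of $D$, every $x \in S^\star \setminus D$ has $N(x) < \tau$. A user counted in $\operatorname{Hits}(W,S^\star)$ but not in $\operatorname{Hits}(W,S')$ must have $W_i \cap S^\star \neq \emptyset$ and $W_i \cap S' = \emptyset$, so $W_i$ contains some $x \in S^\star \setminus D$. A union bound over these items then gives
\begin{align*}
\operatorname{Hits}(W,S^\star) - \operatorname{Hits}(W,S') \leq \sum_{x \in S^\star \setminus D} \sum_{i} \mathbf{1}\{x \in W_i\} = \sum_{x \in S^\star \setminus D} N(x) \leq |S^\star \setminus D| \cdot \tau \leq k\tau.
\end{align*}
Equivalently, this is subadditivity (submodularity) of the coverage function $f(S) = \operatorname{Hits}(W,S)$ together with $f(\{x\}) = N(x)$. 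Combining the two displays gives $\operatorname{Opt}(W,D,k) \geq \operatorname{Opt}(W,k) - k\tau$.

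No step is a real obstacle; the argument is elementary. The only care needed is at the boundary $N(x) = \tau$: items with $N(x) \geq \tau$ lie in $D$, so every discarded item has $N(x) < \tau \leq \tau$ and the bound holds. One should also note that $\operatorname{Opt}$ is used as the optimal \emph{value}, even though it is written with $\argmax$.

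For the downstream use in Theorem~\ref{thm:wgmthenhit}, the WGM output is not exactly of the form $\{x : N(x) \geq \tau\}$. What it does satisfy, by the event in Theorem~\ref{thm:wgmupmminf}, is that every item outside $D$ has $N(x) \leq \tau$ with $\tau = N\cdot\operatorname{MM}_\infty$. The same proof applies verbatim whenever all items outside $D$ have frequency at most $\tau$, so I would state and prove it in that slightly more general form.
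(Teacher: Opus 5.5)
Your proof is correct and is essentially the paper's argument. The paper also intersects the optimal set with $D$ and uses subadditivity of the coverage function, $f(S_1) \le f(S_1 \cap D) + f(S_1 \setminus D) \le \operatorname{Opt}(W,D,k) + k\tau$, which is your union bound written compactly.

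Your remark that the proof only uses ``every item outside $D$ has frequency at most $\tau$'' is also apt. That weaker hypothesis is what is actually available when the lemma is applied to the WGM output in Theorem~\ref{thm:wgmthenhit}.
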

\begin{proof} Recall that $f(S):= \operatorname{Hits}(W, S)$ is a monotone, non-negative submodular function. Let $S_1 \subseteq \Xcal$ be the subset of $\Xcal$ that achieves $f(S_1) = \operatorname{Opt}(W, k)$ and $S_2$ be the subset of  $D$ that achieves $f(S_2) = \operatorname{Opt}(W, D, k).$ By submodularity and the definition of $\Xcal \setminus D$ we have that 
$$f(S_1) \leq f(S_1 \cap D) + f(S_1 \cap \Xcal \setminus D) \leq f(S_2) + k \tau, $$
\noindent which completes the proof. 
\end{proof}

Lemma \ref{lem:connectorhs} allows us to use the $\operatorname{MM}_{\infty}$ upper bound obtained by the WGM in Theorem \ref{thm:wgmupmminf} to upgrade the guarantee provided by Lemma \ref{thm:gup} to be in terms of $\operatorname{Opt}(W, k)$ instead of $\operatorname{Opt}(W, D, k).$ By using the same choice of $(\sigma, T)$ as in Theorem \ref{thm:privwgm}, we get the main result of this section. As before, since the privacy guarantee follows by basic composition, we only focus on proving the utility guarantee.

\begin{proof} (of Theorem \ref{thm:wgmthenhit})
 Recall that by Theorem \ref{thm:wgmupmminf} that if we set $\sigma = \Theta\left(\frac{ \sqrt{\ln(1/\delta)}}{\eps}\right)$ and $T = \tilde{\Theta}_{\Delta_0, \delta/2}(\max\{\sigma, 1\})$, then the WGM is $(\eps/2, \delta/2)$-differentially private and with probability at least $1-\beta/2$ over $D \sim \operatorname{WGM}(W, \Delta_0)$, we have that 
$$\operatorname{MM}_{\infty}(W, D) \leq \tilde{O}_{\Delta_0, \delta/2, \beta/2}\left(\frac{\max_i |W_i|}{\eps N \sqrt{q^{\star}}  }\right).$$
By Lemma \ref{lem:connectorhs}, under this event we have that 
$$\operatorname{Opt}(W,D,k) \geq \operatorname{Opt}(W,k) -   \tilde{O}_{\Delta_0, \delta/2, \beta/2}\left(\frac{k \cdot \max_i |W_i|}{\eps \sqrt{q^{\star}}  }\right).$$
Now, by Lemma \ref{thm:gup}, we know that running Algorithm \ref{alg:gup} on input $W$, domain $D$ and  $\lambda =  \tilde{O}_{\delta/2}\left(\frac{\sqrt{k}}{\eps}\right)$ gives $(\eps/2, \delta/2)$-differentially privacy and that with probability at least $1-\beta/2$, its output $S$ satisfies
$$\operatorname{Hits}(W, S)   \geq \left(1 - \frac{1}{e}\right) \operatorname{Opt}(W, D, k)  - \tilde{O}_{\delta/2, \beta/2}\left(\frac{k^{3/2}}{\eps} \log\left(|D|k\right) \right)$$
Hence, by the union bound,  with probability at least $1-\beta$,  both events occur and we have that 
$$\operatorname{Hits}(W, S)   \geq \left(1 - \frac{1}{e}\right) \operatorname{Opt}(W, D)  - \tilde{O}_{\beta, \delta, \Delta_0}\left(\frac{k \cdot \max_i |W_i|}{\eps \sqrt{q^{\star}} } +\frac{k^{3/2}}{\eps} \log\left(Mk\right) \right),$$
where we use the fact that $|D| \leq M$.
\end{proof}

\section{Lower bounds}
\subsection{Lower bounds for Missing Mass} \label{app:lbmm}

In this section, we prove Theorem \ref{thm:wgmlbmm}. The following lemma will be useful. 

\begin{lemma} \label{lem:wgmlbhelper}  Let $\Acal$ be any $(\eps, \delta)$-differentially private algorithm satisfying Assumption \ref{assump:subset}.  Then, for every dataset $W$ and item $x \in \bigcup_i W_i$, if $N(x) \leq  \frac{1}{\eps}\ln \left(1 + \frac{e^{\eps} - 1}{2\delta} \right)$, then $\mathbb{P}_{S \sim \Acal(W)}\left[x \in S\right] \leq \frac{1}{2}.$
\end{lemma}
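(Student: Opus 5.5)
We argue by group privacy along a chain of datasets. Fix $W$ and an item $x \in \bigcup_i W_i$ with $m := N(x)$. Let $i_1,\dots,i_m$ be the users whose sets contain $x$. Define $W^{(0)} := W$ and let $W^{(j)}$ be obtained from $W^{(j-1)}$ by removing user $i_j$. Consecutive datasets are neighbors, and $x \notin \bigcup_i W^{(m)}_i$. By Assumption~\ref{assump:subset}, $p_m := \mathbb{P}_{S \sim \Acal(W^{(m)})}[x \in S] = 0$. Write $p_j := \mathbb{P}_{S\sim\Acal(W^{(j)})}[x\in S]$. Applying the $(\eps,\delta)$-DP inequality to the event $\{x \in S\}$ for each neighboring pair $(W^{(j-1)}, W^{(j)})$ gives
\[
p_{j-1} \le e^{\eps} p_j + \delta .
\]

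Next we unroll this recursion from $p_m = 0$. This gives
\[
p_0 \le \delta \sum_{j=0}^{m-1} e^{j\eps} = \delta\,\frac{e^{m\eps}-1}{e^{\eps}-1}.
\]
So $p_0 \le \tfrac12$ holds as soon as $e^{m\eps} \le 1 + \frac{e^{\eps}-1}{2\delta}$. Taking logarithms, this is exactly $m \le \frac{1}{\eps}\ln\left(1 + \frac{e^{\eps}-1}{2\delta}\right)$, which is the hypothesis on $N(x)$. Hence $\mathbb{P}_{S\sim\Acal(W)}[x\in S] = p_0 \le \tfrac12$.

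There is no serious obstacle. The only point needing care is to keep the geometric sum exact rather than bounding it crudely by $m e^{m\eps}\delta$, since the exact form is what reproduces the stated threshold. One must also make sure the chain removes only users containing $x$, so that the length of the chain is $N(x)$ and not $n$.
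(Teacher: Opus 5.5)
Your proposal is correct and follows the paper's argument. The paper also removes users containing $x$ one at a time, uses Assumption~\ref{assump:subset} to get zero inclusion probability once $x$ is absent, unrolls $p \le e^{\eps}p' + \delta$ into the geometric bound $\delta\frac{e^{\eps b}-1}{e^{\eps}-1}$, and solves for $b$ at level $\frac12$; your explicit chain $W^{(0)},\dots,W^{(m)}$ is a slightly cleaner presentation of its induction on $N_W(x)$.
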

\begin{proof}
    Fix some dataset $W$. Let $\Acal$ be any randomized algorithm with privacy parameters $\eps \leq 1$ and $\delta \in (0, 1)$ satisfying Assumption \ref{assump:subset}. This means that for any item $x \in \bigcup_i W_i$ such that $N_W(x) = 1$, we have that 
$$
\mathbb{P}_{S \sim \Acal(W)}\left[x \in S\right] \leq \delta,
$$
where for a dataset $W$, we define $N_W(x) := \sum_i \mathbf{1}\{x \in W_i\}.$ Now, suppose $x \in \bigcup_i W_i$ is an item such that $N_W(x) = 2$.   We can always construct a neighboring dataset $W'$ by removing a user such that $N_{W'}(x) = 1$. Thus, we have that  
$$
\mathbb{P}_{S \sim \Acal(W)}\left[x \in S\right] \leq e^{\eps}\mathbb{P}_{S \sim \Acal(W')}\left[x \in S\right] + \delta \leq \delta e^{\eps} + \delta = \delta(e^{\eps} + 1).
$$
More generally, by unraveling the recurrence, we have that for any $x \in \bigcup_i W_{i}$ with $N_W(x) = b$, 
$$
\mathbb{P}_{S \sim \Acal(W)}\left[x \in S\right] \leq \delta \sum_{i=0}^{b-1} (e^{\eps})^i = \delta \frac{e^{\eps b} - 1}{e^{\eps}-1}.
$$
Since $W$ was arbitrary, for any dataset $W$ and any $x \in \bigcup_i W_{i}$ with $N_W(x) = b$, we have that
$$
\mathbb{P}_{S \sim \Acal(W)}\left[x \notin S\right] \geq 1- \delta \frac{e^{\eps b} - 1}{e^{\eps}-1}.
$$

Now, consider the exclusion probability $p = 1/2.$ Our goal is to compute the largest $b^{\star}$ such that for any dataset $W$, any $x \in \bigcup_i W_i$ with $N_W(x) \leq b^{\star}$ has $\mathbb{P}_{S \sim \Acal(W)}\left[x \notin S\right] \geq \frac{1}{2}$. It suffices to solve for $b$ in the inequality $1- \delta \frac{e^{\eps b} - 1}{e^{\eps}-1} \geq 1/2$, which yields $b \leq \frac{1}{\eps}\ln \left(1 + \frac{e^{\eps} - 1}{2\delta} \right) =: b^{\star}$.
\end{proof}

Lemma \ref{lem:wgmlbhelper} provides a uniform upper bound on the the probability that any private mechanism can output a low frequency item. We now use Lemma \ref{lem:wgmlbhelper} to complete the proof of Theorem \ref{thm:wgmlbmm}.

\begin{proof} (of Theorem \ref{thm:wgmlbmm})
    Let $b^{\star} = \frac{1}{\eps}\ln \left(1 + \frac{e^{\eps} - 1}{2\delta} \right)$ as in the proof of Lemma  \ref{lem:wgmlbhelper}. By Lemma \ref{lem:wgmlbhelper}, for any dataset $W$ and item $x \in \bigcup_i W_i$ such that $N(x) \leq b^{\star}$, we have $\mathbb{P}_{S \sim \Acal(W)}\left[x \notin S\right] \geq \frac{1}{2}$. Consider a dataset $W^{\star}$ of size $n$, taking $n$ sufficiently large,  where $\tfrac{N_{(r)}}{N} =  \Theta\left(\frac{C}{r^s}\right)$ for all ranks $r \in [M^{\star}]$, where $M^{\star} = |\bigcup_i W_i^{\star}|$ (note that such a dataset is possible if, for example, one restricts each user to contribute exactly a single item). We can lower bound the missing mass of $\Acal$ on $W^{\star}$ as
\begin{align*}
    \mathbb{E}_{S \sim \Acal(W^{\star})}\left[\operatorname{MM}(W^{\star}, S) \right] \geq&\ \frac{1}{N}\sum_{x \in \bigcup_i W^{\star}_i, N(x) \leq b^{\star}} \mathbb{P}_{S \sim \Acal(W^{\star})}\left[x \notin S \right]N(x) \\
    \geq&\ \frac{1}{2N} \sum_{x \in \bigcup_i W^{\star}_i, N(x) \leq b^{\star}} N(x).
\end{align*}
Our next goal will be to find the smallest rank $r^{\star}$ such that $N_{(r^{\star})} \leq b^{\star}.$ It suffices to solve the inequality $\frac{C N}{r^s} \leq b^{\star}$ for $r.$ Doing so gives that $r \geq \left\lceil \left(\frac{CN}{b^{\star}}\right)^{\frac{1}{s}}\right\rceil  =: r^{\star}.$ Hence, for this $W^{\star}$, we have that
\begin{align*}
\mathbb{E}_{S \sim \Acal(W^{\star})}\left[\operatorname{MM}(W^{\star}, S) \right] &\geq \frac{1}{2N} \sum_{x \in \bigcup_i W^{\star}_i, N(x) \leq b^{\star}} N(x) \\
&= \frac{1}{2N} \sum_{r = r^{\star}}^{M^{\star}} N_{(r)} \\
&= \Omega\left(\frac{1}{N} \sum_{r = r^{\star}}^{M^{\star}} \frac{CN}{r^s}\right).
\end{align*}
Since $s > 1$, we can  take $M^{\star}$ (and hence $n$) to be large enough so that 
$$
\Omega\left(\frac{1}{N} \sum_{r = r^{\star}}^{M^{\star}} \frac{CN}{r^s}\right) = \Omega\left(\frac{1}{N} \int_{r^{\star}}^{\infty} \frac{CN}{r^s} dr\right).
$$
Thus,
\begin{align*}
\mathbb{E}_{S \sim \Acal(W^{\star})}\left[\operatorname{MM}(W^{\star}, S) \right] &= \Omega\left(\frac{1}{N} \int_{r^{\star}}^{\infty} \frac{CN}{r^s} dr\right)  \\
&= \Omega\left(\frac{C}{s-1}(r^{\star})^{1-s}\right)\\
&= \Omega\left(\frac{C^{1/s}N^{(1-s)/s}}{s-1}(b^{\star})^{(s-1)/s} \right)\\
&= \Omega\left(\frac{C^{1/s}N^{-(s-1)/s}}{s-1}\left(\frac{1}{\eps}\ln \left(1 + \frac{e^{\eps} - 1}{2\delta} \right)\right)^{(s-1)/s} \right)\\
&= \Omega\left(\frac{C^{1/s}}{s-1} \cdot \left(\frac{1}{\eps N}\right)^{(s-1)/s} \cdot \ln \left(1 + \frac{e^{\eps} - 1}{2\delta} \right)^{(s-1)/s} \right),
\end{align*}
which completes the proof.
\end{proof}

\subsection{Lower bounds for Top-$k$ selection} \label{app:lbtopk}


\begin{proof} (of Corollary \ref{cor:topklb}) Define $b^{\star} = \frac{1}{\eps}\ln \left(1 + \frac{e^{\eps} - 1}{2\delta} \right)$ like in Lemma \ref{lem:wgmlbhelper}. Recall from Lemma \ref{lem:wgmlbhelper}, that for any dataset $W$ and any item $x \in \bigcup_i W_i$ such that $N(x) \leq b^{\star}$, we have that 
$$
\mathbb{P}_{S \sim \Acal(W, k)}\left[x \notin S\right] \geq \frac{1}{2}.
$$
Consider any dataset $W^{\star}$  such that $|\bigcup_i W^{\star}_i| = k$ and  $N_{(i)} = b^{\star}$ for all $i \in [k]$. Then $\sum_{i=1}^k N_{(i)} = k b^{\star}.$ But, we  also have that  $\mathbb{E}_{S \sim \Acal(W, k)}\left[\sum_{x \in S} N(x) \right] \leq \frac{k b^{\star}}{2}.$ Hence, 
$$
\mathbb{E}_{S \sim \Acal(W^{\star}, k)}\left[\operatorname{MM}^k(W^{\star}, S)\right]  \geq \frac{k b^{\star}}{2N} = \tilde{\Omega}_{\delta}\left(\frac{k}{\eps N} \right),
$$
completing the proof.
\end{proof}

\subsection{Lower bounds for $k$-Hitting Set.} \label{app:lbhit}

\begin{proof} (of Corollary \ref{lem:lbhit}) Define  $b^{\star} = \frac{1}{\eps}\ln \left(1 + \frac{e^{\eps} - 1}{2\delta} \right)$ as in Lemma \ref{lem:wgmlbhelper}. Recall from Lemma \ref{lem:wgmlbhelper}, that for any dataset $W$ and any item $x \in \bigcup_i W_i$ such that $N(x) \leq b^{\star}$, we have that 
$$\mathbb{P}_{S \sim \Acal(W, k)}\left[x \notin S\right] \geq \frac{1}{2}.$$
This is due to our restriction that $\Acal(W, k) \subseteq \bigcup_i W_i.$ Consider any dataset $W^{\star}$ consisting of $k$ unique items and $n = kb^{\star}$ users such that each item hits a disjoint set of $b^{\star}$ users. Since the frequency of each of the $k$ items is at most $b^{\star}$, $\Acal$ can output each item with probability at most $1/2.$ Hence, in expectation, $\Acal$ outputs at most $k/2$ distinct items, hitting at most $\frac{kb^{\star}}{2}$ users, while the optimal set of items includes all $k$ items and hits all users. 
\end{proof}

\section{Experiment Details} \label{app:expdet}
In this section, we provide details about the datasets used in Section \ref{sec:exp}. Table \ref{tab:stats} provides statistics for the 6 datasets we consider.  Reddit \citep{gopi2020differentially} is a text-data dataset of posts from r/askreddit. For this dataset, each user corresponds to a set of documents, and following prior methodology, we take a user's item set to be the set of tokens used across all documents. Movie Reviews \citep{10.1145/2827872}  is a dataset containing movie reviews from the MovieLens website. Here, we group movie reviews by user-id, and take a user's itemset to be the set of movies they reviewed. Amazon Games, Pantry, and Magazine \citep{ni2019justifying} are review datasets for video games, prime pantry, and magazine subscriptions respectively. Like for Movie Reviews, we group rows by user-id and take a user's item set to the set of items they reviewed. Finally, Steam Games \citep{steam_games_dataset} is a dataset of 200k user interactions (purchases/play) on the Steam PC Gaming Hub. As before, we group the rows by user-id and take a user's itemset to the be the set of games they purchased/played.

\begin{table}[H]
    \centering
    \begin{tabular}{cccc}
        \toprule
        Dataset & No. Users & No. Items & No. Entries\\
        \midrule 
         Reddit & 245103  &  631855 &  18272211\\
         Movie Reviews& 162541  & 59047 &  25000095 \\
         Amazon Games& 1540618 & 71982 &   2489395\\
         Steam Games&  12393&  5155 & 128804\\
         Amazon Magazine&  72098& 2428 &   88318\\
         Amazon Pantry&  247659& 10814 &  447399 \\
         \bottomrule
    \end{tabular}
    \caption{Number of users, items, and entries (user-item pairs) for each dataset }
    \label{tab:stats}
\end{table}

\subsection{Rank vs. Frequency Plots}

In order to get meaningful upper bounds on MM, Theorem \ref{thm:wgmup} requires that  $W$ be $(C, s)$-Zipfian for $s > 1$. Figures \ref{fig:freqranklargedataset} and \ref{fig:freqranksmalldataset} present log-log plots of frequency vs. rank for the large and small datasets respectively. In all cases, we observe that the real-world datasets we consider are $(C, s)$-Zipfian for $s > 1$ and sufficiently large $C$. Note that our definition of a Zipfian dataset only requires the frequency vs. rank plot to be \emph{upper bounded} by a decaying polynomial.

\begin{figure}[H] 
    \centering
    \subfloat[Reddit]{
        \includegraphics[width=0.32\linewidth]{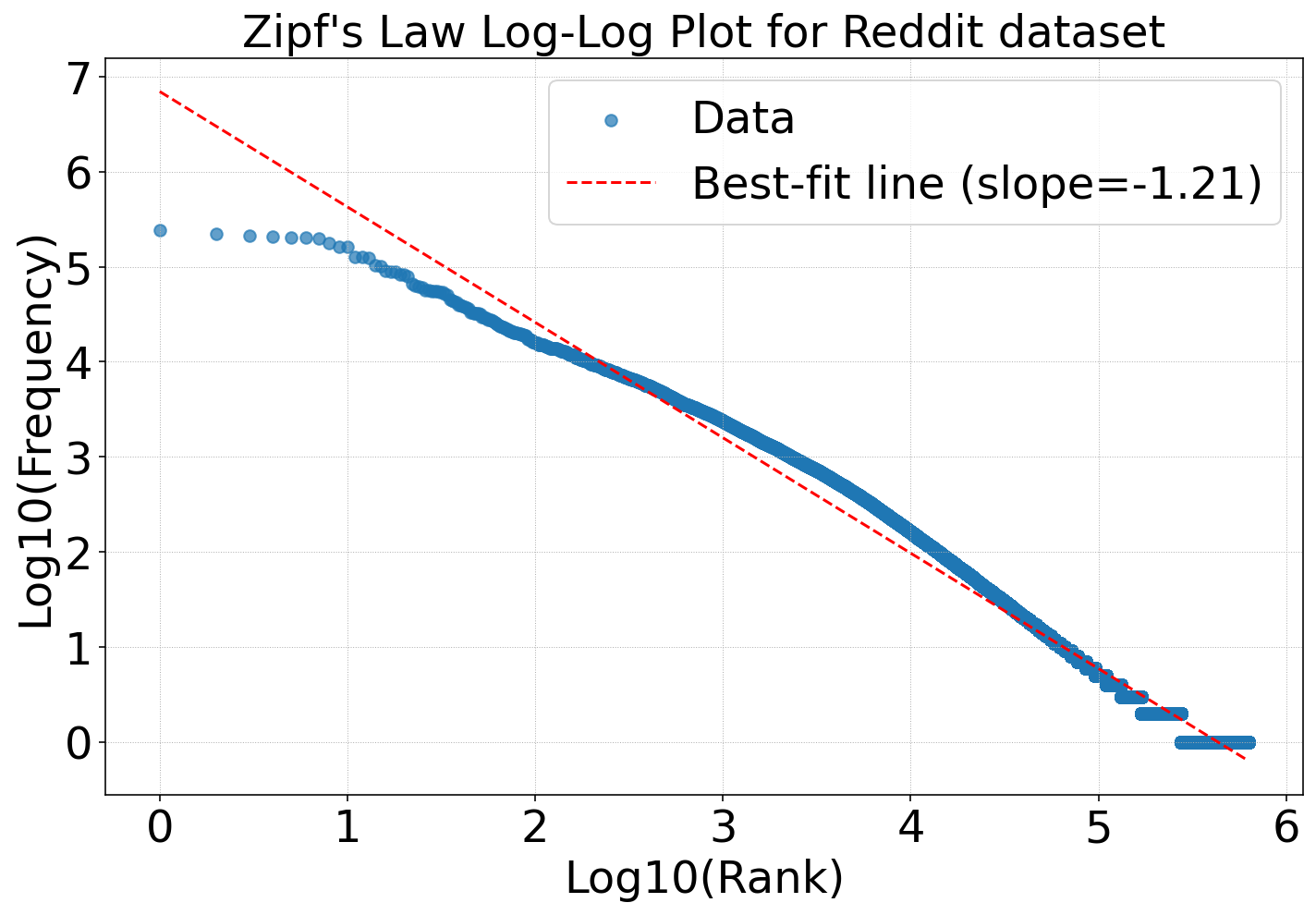} 
    }
    \hfill
    \subfloat[Amazon Games]{
        \includegraphics[width=0.32\linewidth]{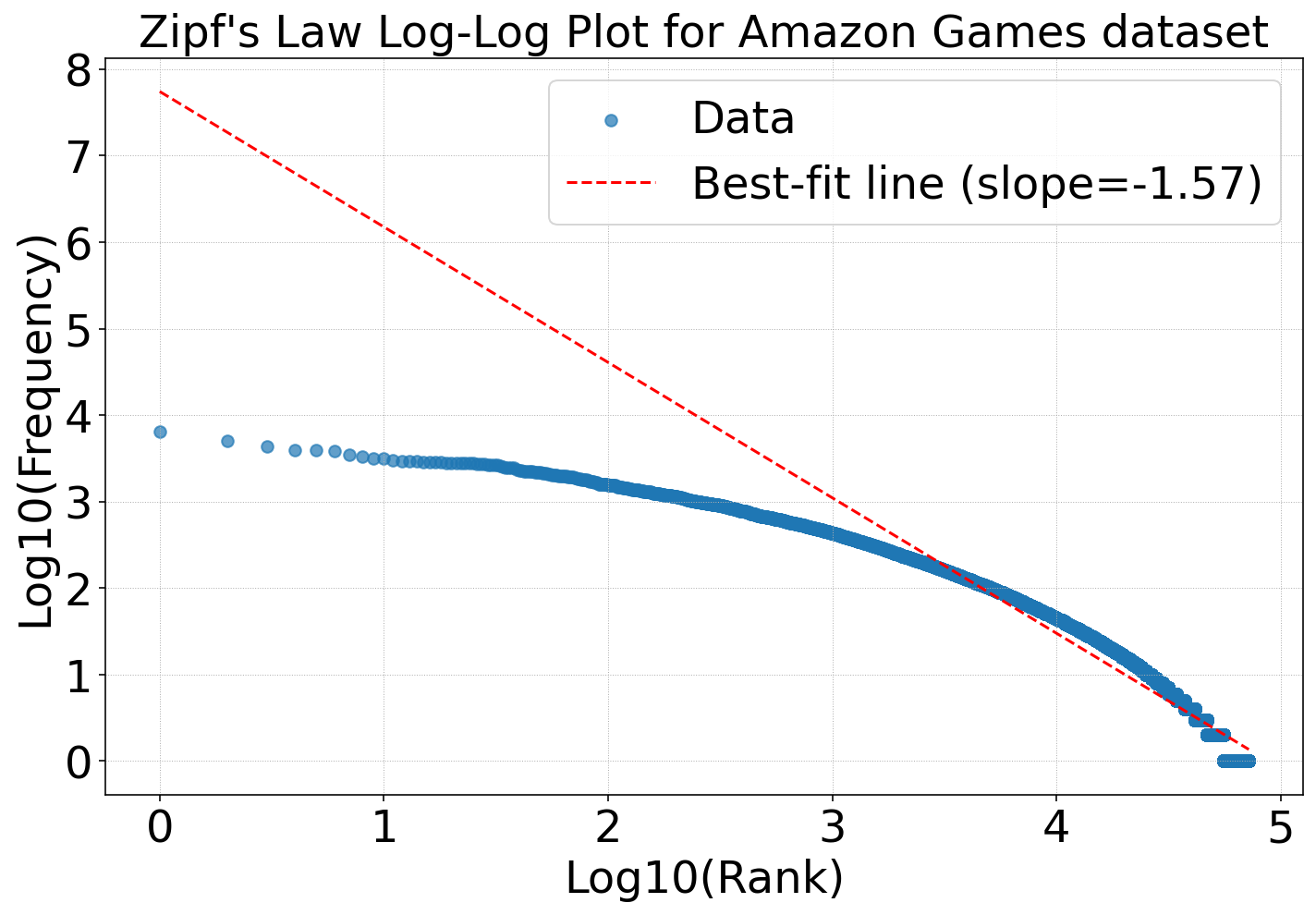} 
    }
    \hfill
    \subfloat[Movie Reviews]{
        \includegraphics[width=0.32\linewidth]{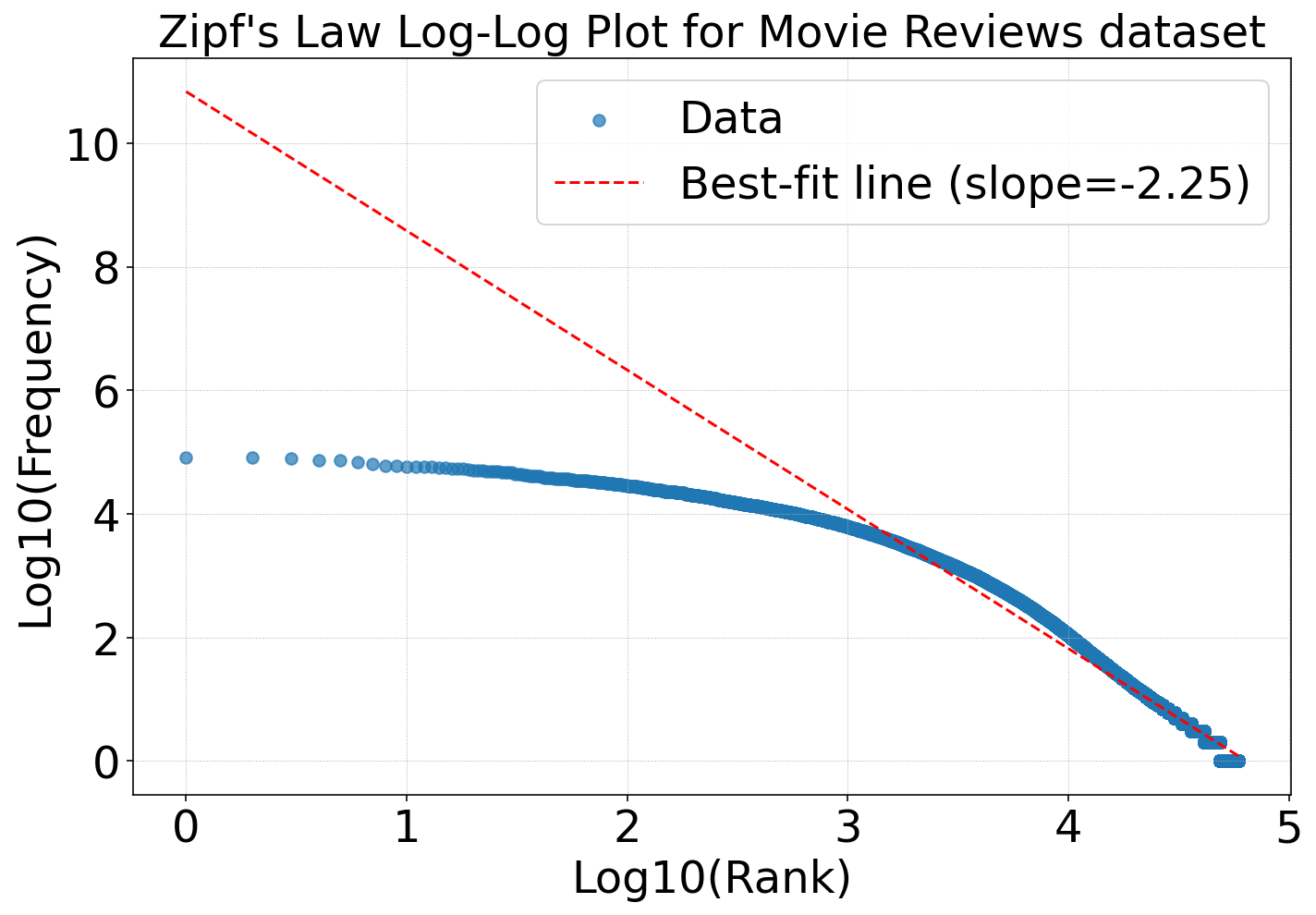} 
    }
    \caption{Log-log plot of frequency vs. rank for large datasets}
    \label{fig:freqranklargedataset}
\end{figure}

\begin{figure}[H] 
    \centering
    \subfloat[Steam Games]{
        \includegraphics[width=0.32\linewidth]{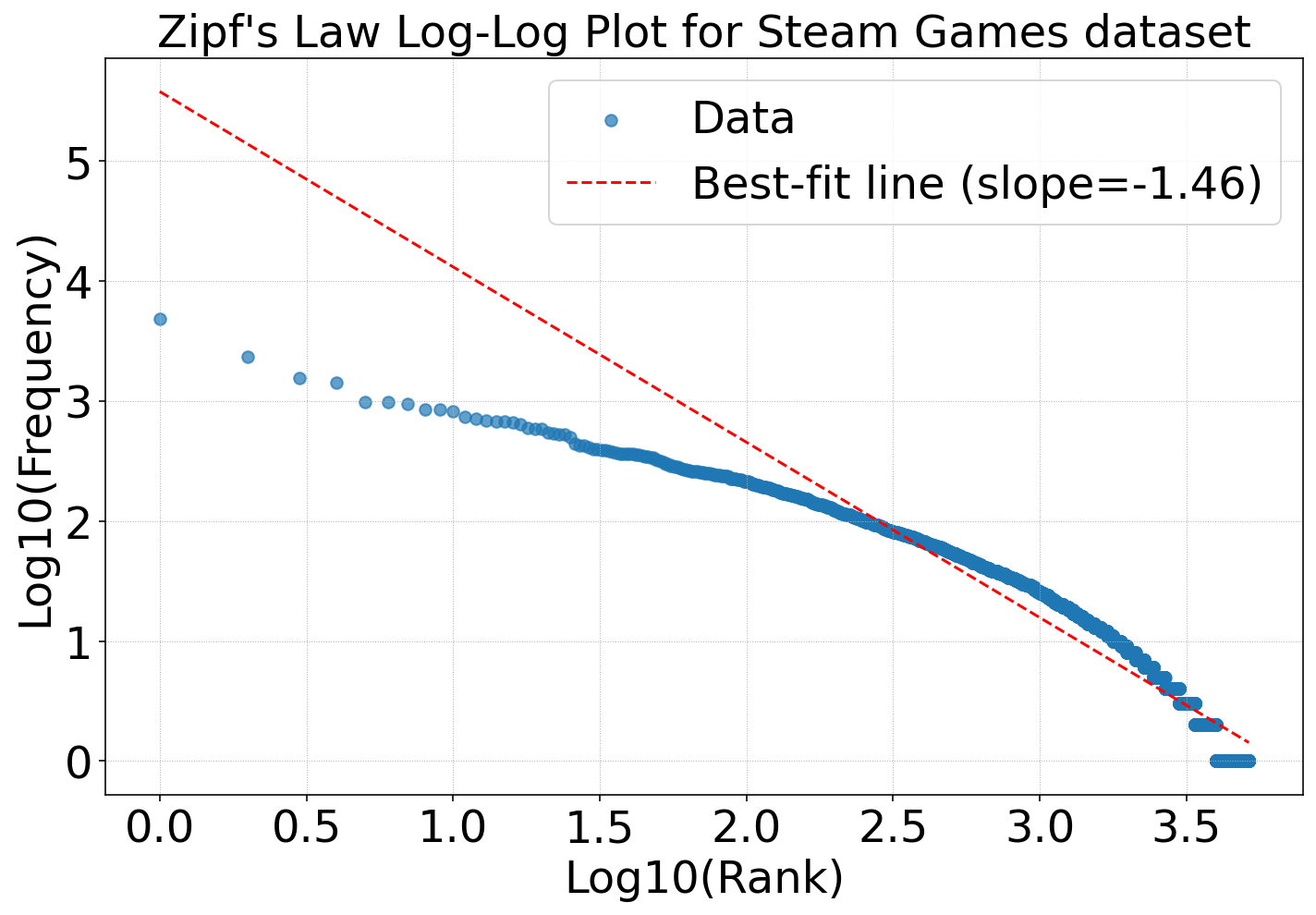} 
    }
    \hfill
    \subfloat[Amazon Magazine]{
        \includegraphics[width=0.32\linewidth]{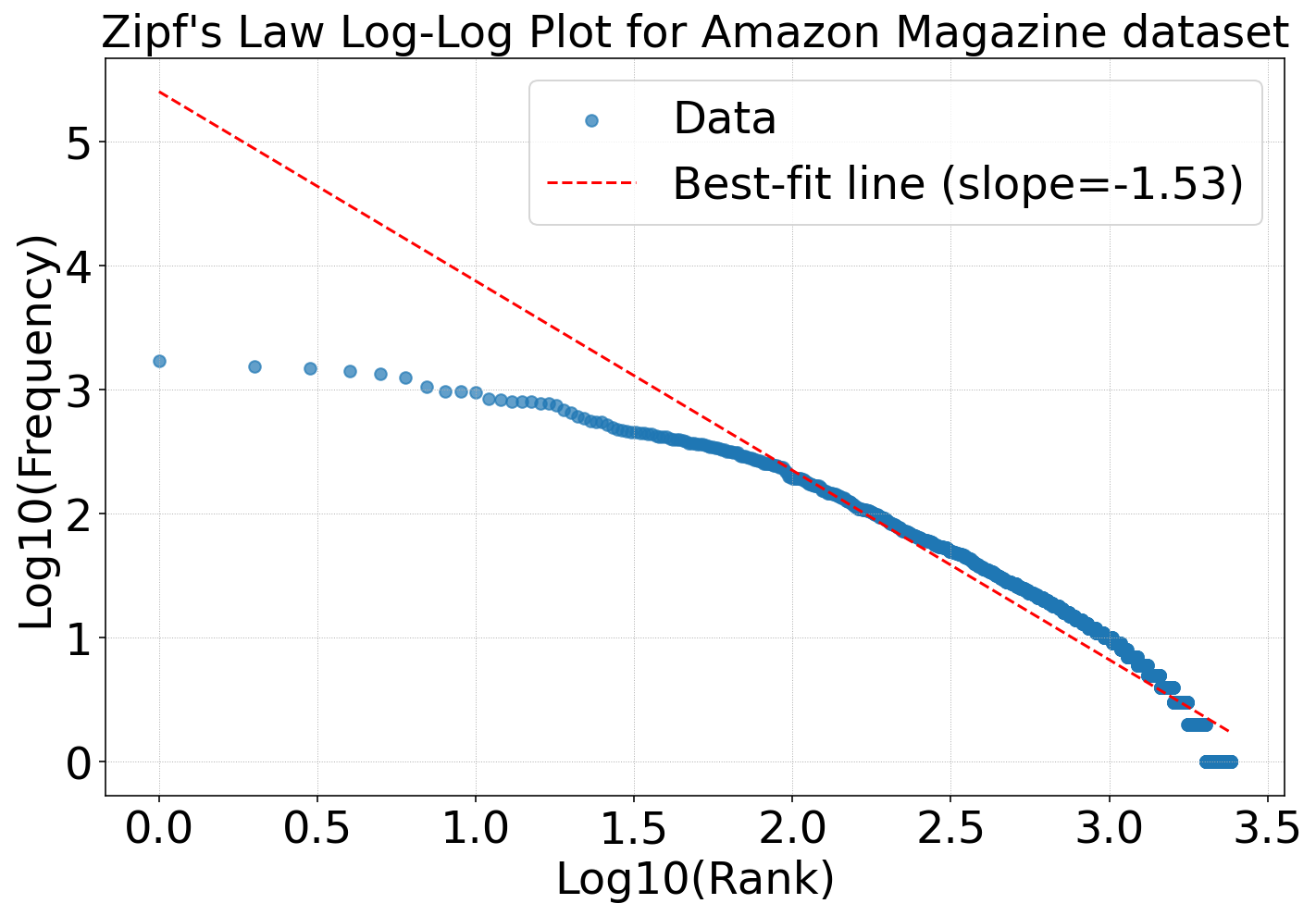} 
    }
    \hfill
    \subfloat[Amazon Pantry]{
        \includegraphics[width=0.32\linewidth]{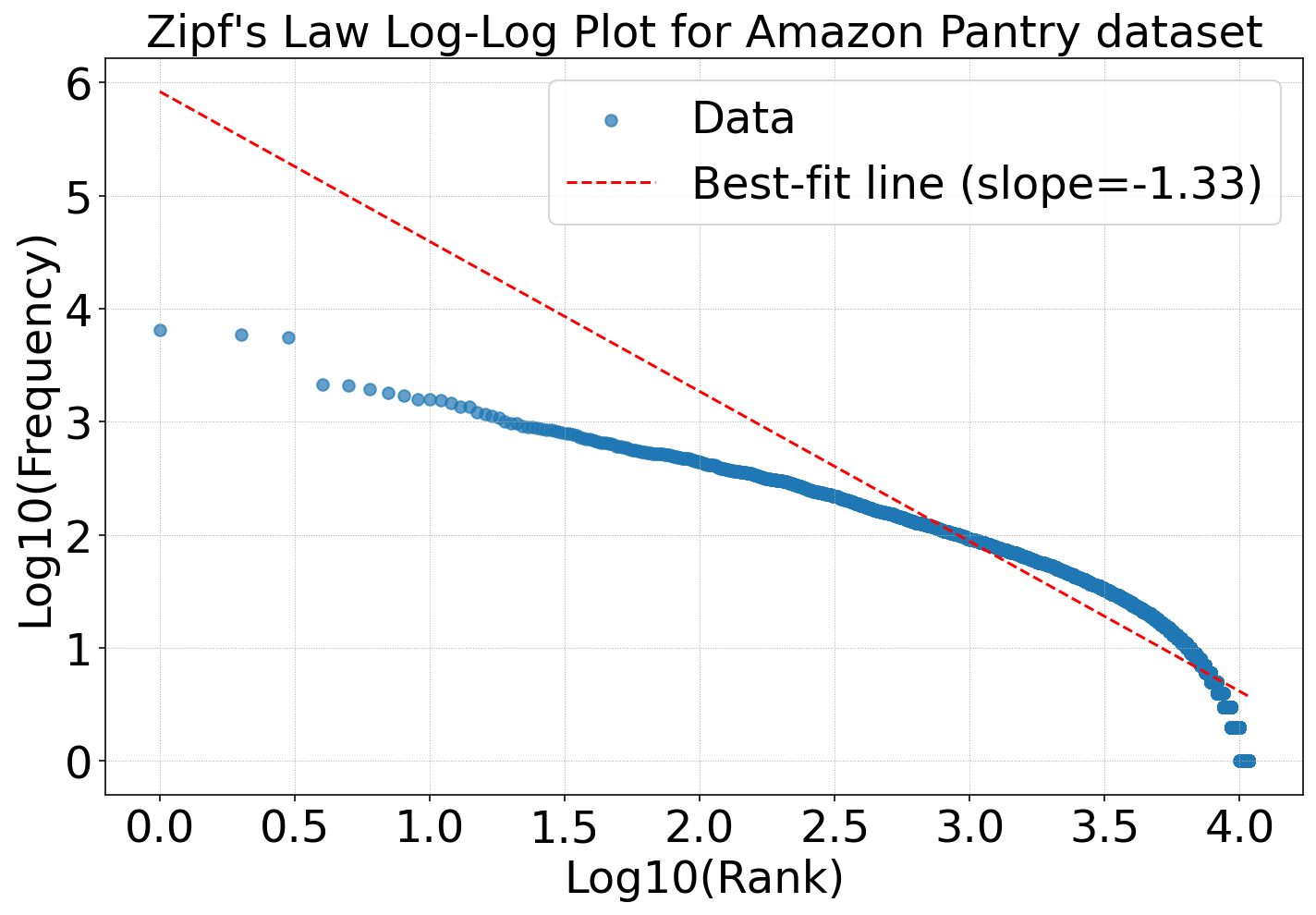} 
    }
    \caption{Log-log plot of frequency vs. rank for small datasets}
    \label{fig:freqranksmalldataset}
\end{figure}

\subsection{User Item Set Size Distributions}

Figures \ref{fig:histlargedatasets} and \ref{fig:histsmalldatasets} plot the Empirical CDF (ECDF) of user item set sizes for the large and small datasets respectively.

\begin{figure}[H] 
    \centering
    \subfloat[Reddit]{
        \includegraphics[width=0.32\linewidth]{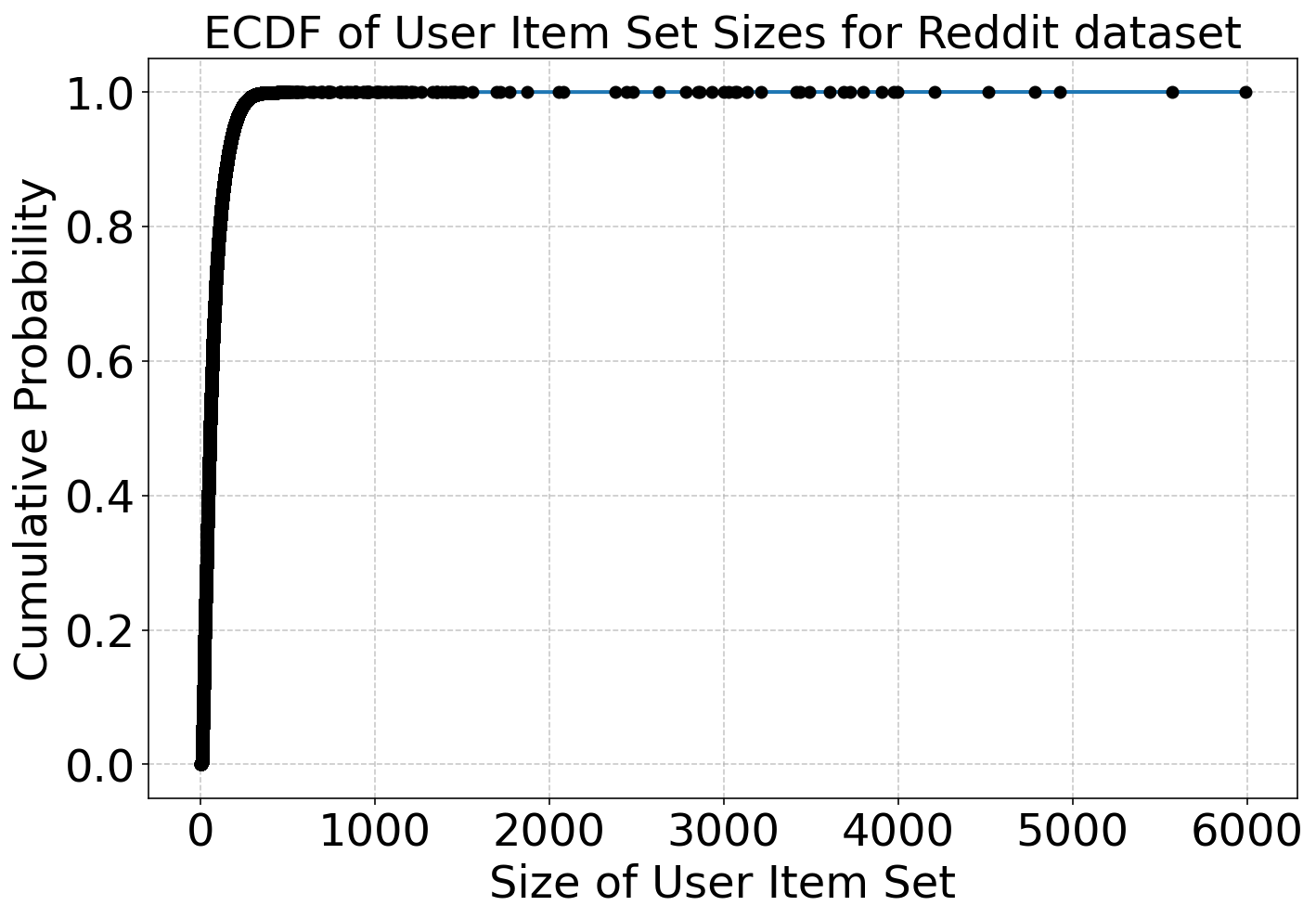} 
    }
    \hfill
    \subfloat[Amazon Games]{
        \includegraphics[width=0.32\linewidth]{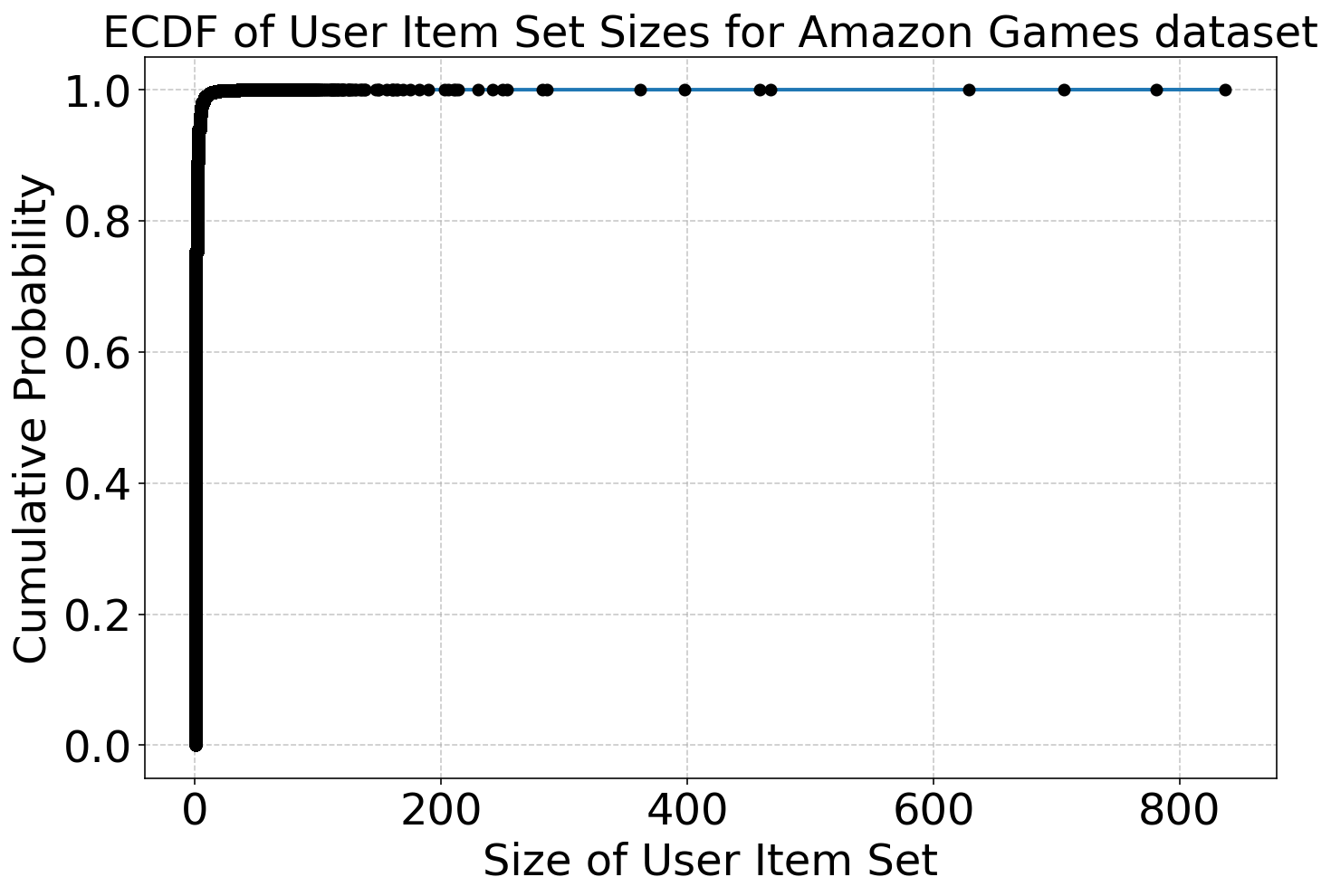} 
    }
    \hfill
    \subfloat[Movie Reviews]{
        \includegraphics[width=0.32\linewidth]{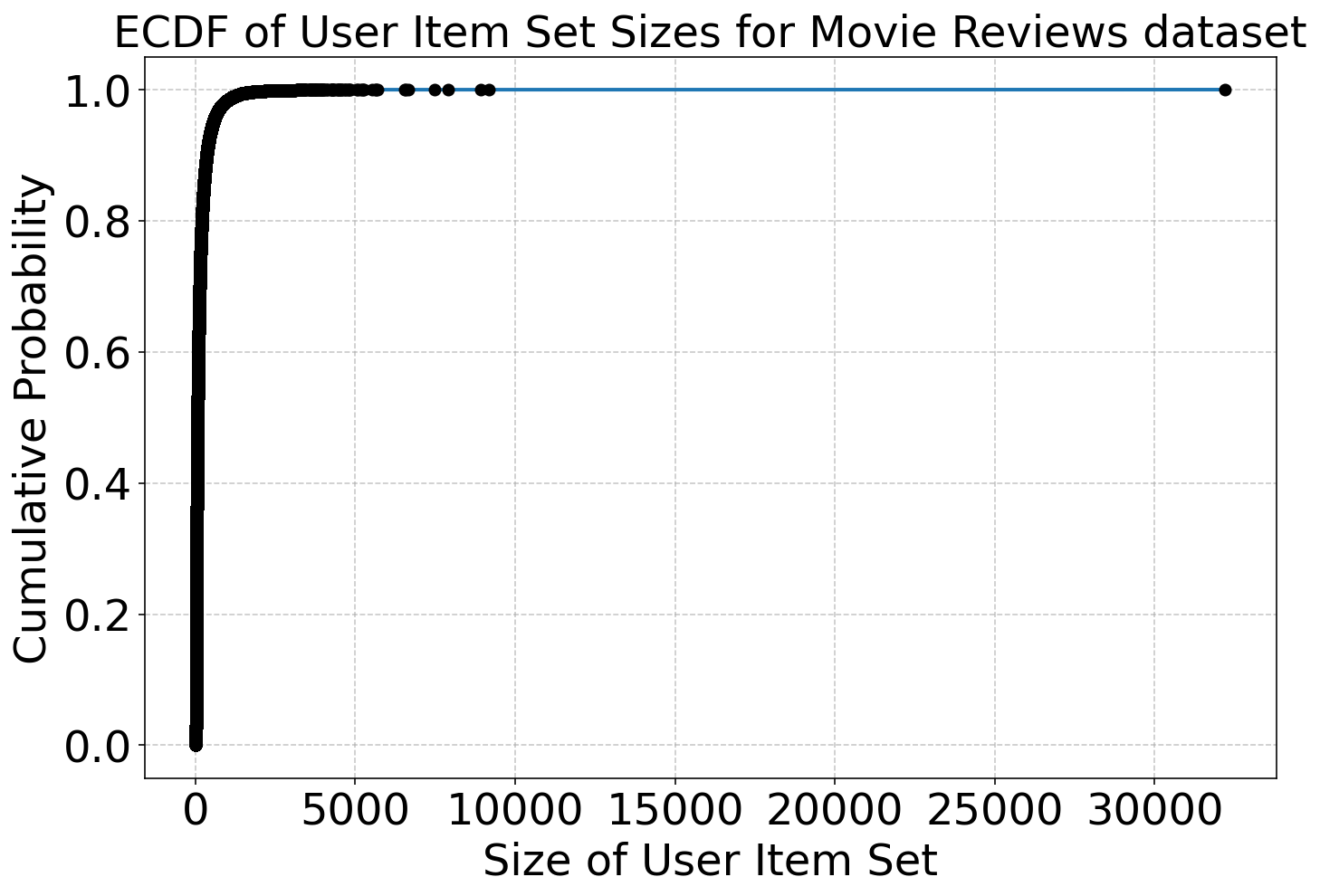} 
    }
    \caption{ECDFs of user item set sizes for large datasets.}
    \label{fig:histlargedatasets}
\end{figure}

\begin{figure}[H] 
    \centering
    \subfloat[Steam Games]{
        \includegraphics[width=0.32\linewidth]{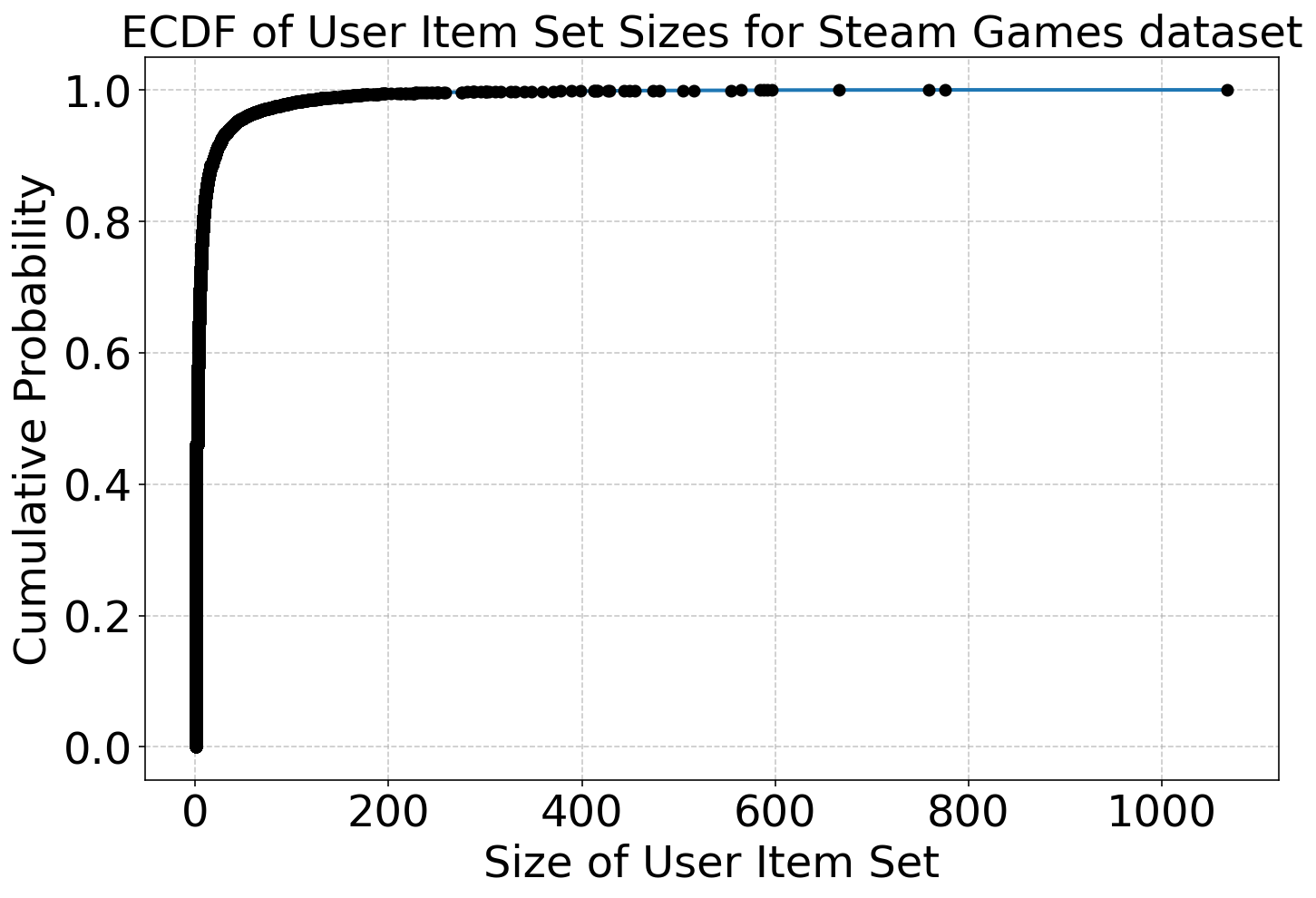} 
    }
    \hfill
    \subfloat[Amazon Magazine]{
        \includegraphics[width=0.32\linewidth]{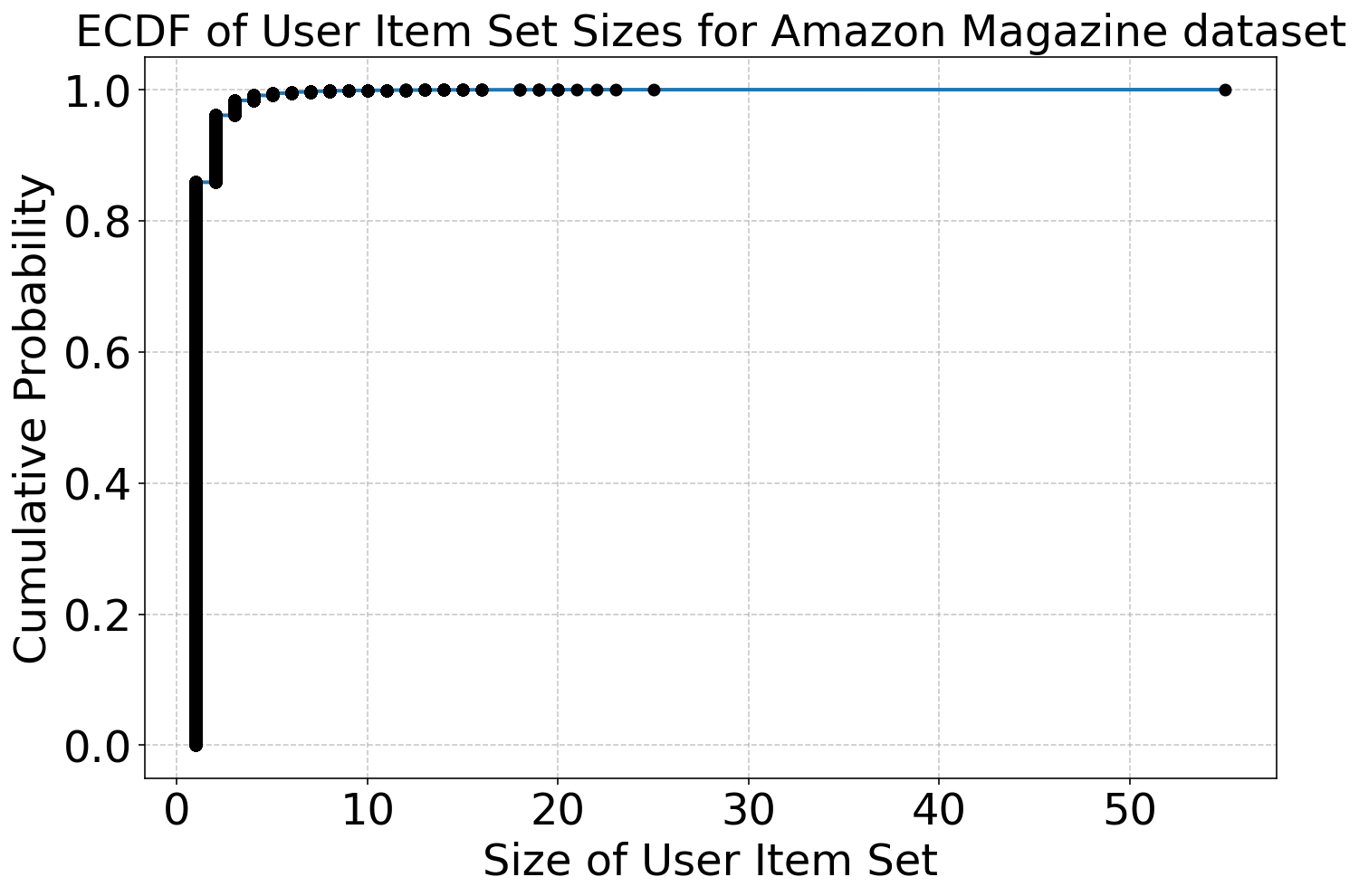} 
    }
    \hfill
    \subfloat[Amazon Pantry]{
        \includegraphics[width=0.32\linewidth]{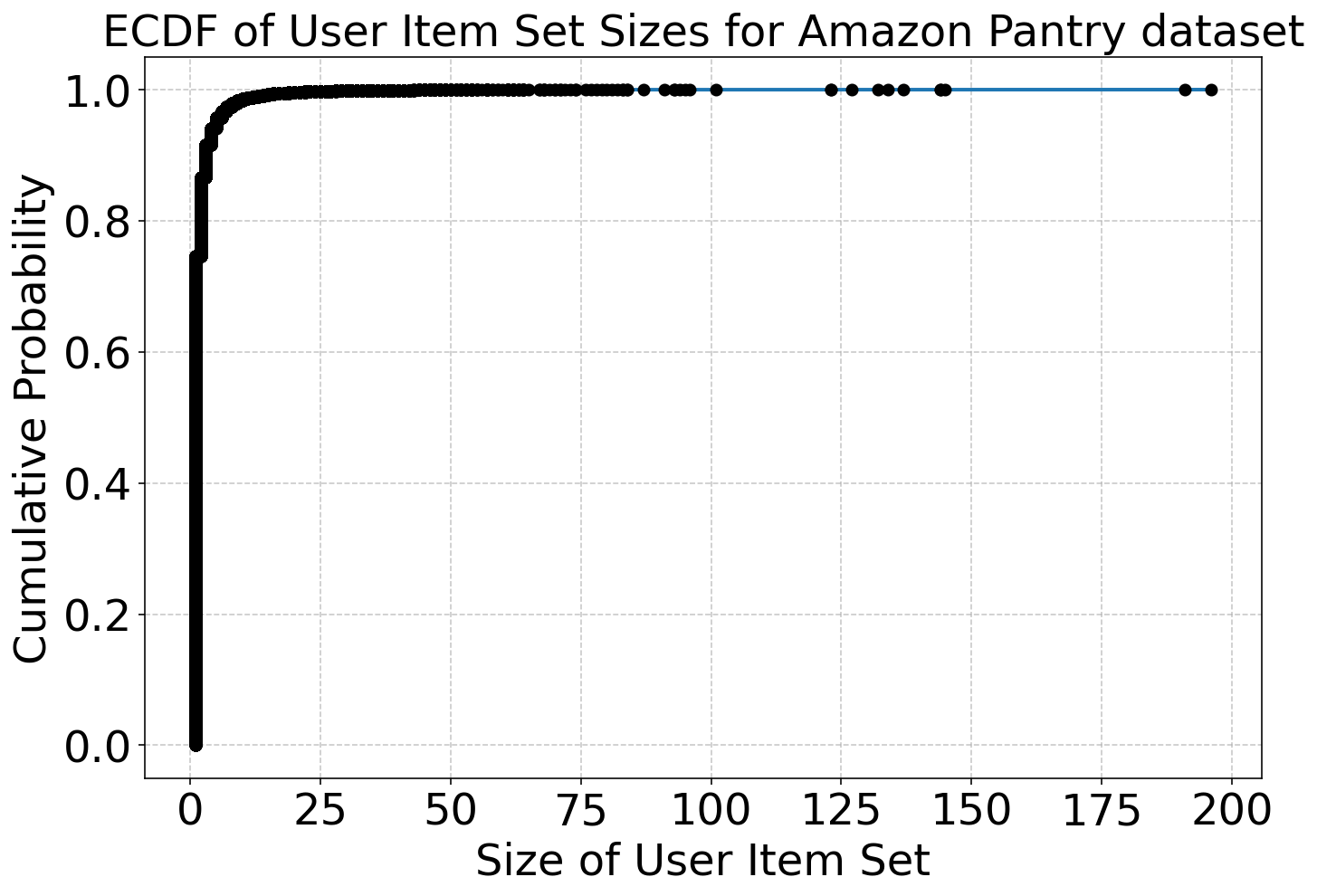} 
    }
    \caption{ECDFs of user item set sizes for small datasets.}
    \label{fig:histsmalldatasets}
\end{figure}

\section{Additional Experimental Results} \label{app:addexp}
\subsection{Private Domain Discovery}

\subsubsection{Results for small datasets} 
\label{app:ddsmalldata}
Figure \ref{fig:ddsmalldatasets} plots the MM as a function of $\Delta_0$ for the small datasets. Again, we find that the WGM achieves comparable performance to the policy mechanism while being significantly more computationally efficient. 

\begin{figure}[H] 
    \centering
    \subfloat[Missing Mass]{
        \includegraphics[width=0.32\linewidth]{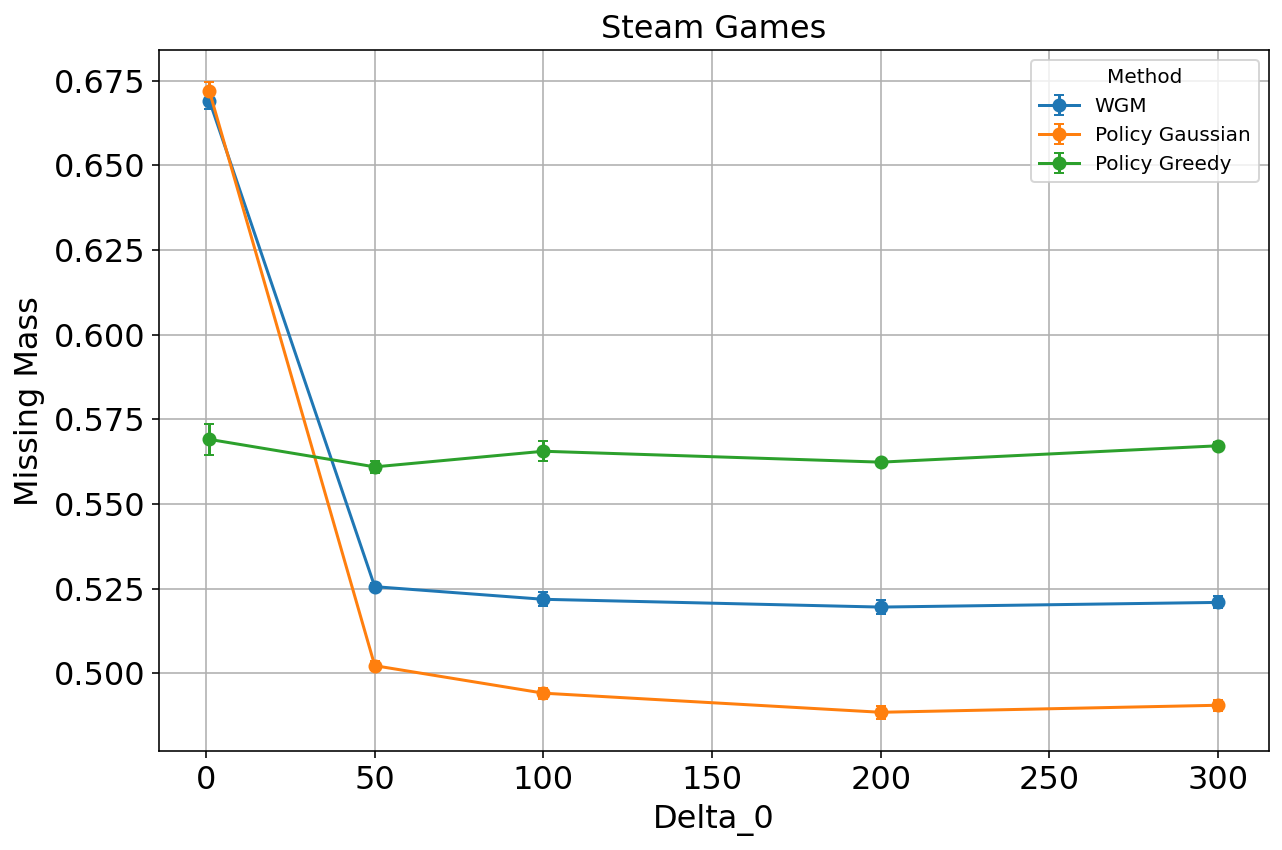} 
    }
    \hfill
    \subfloat[Missing Mass]{
        \includegraphics[width=0.32\linewidth]{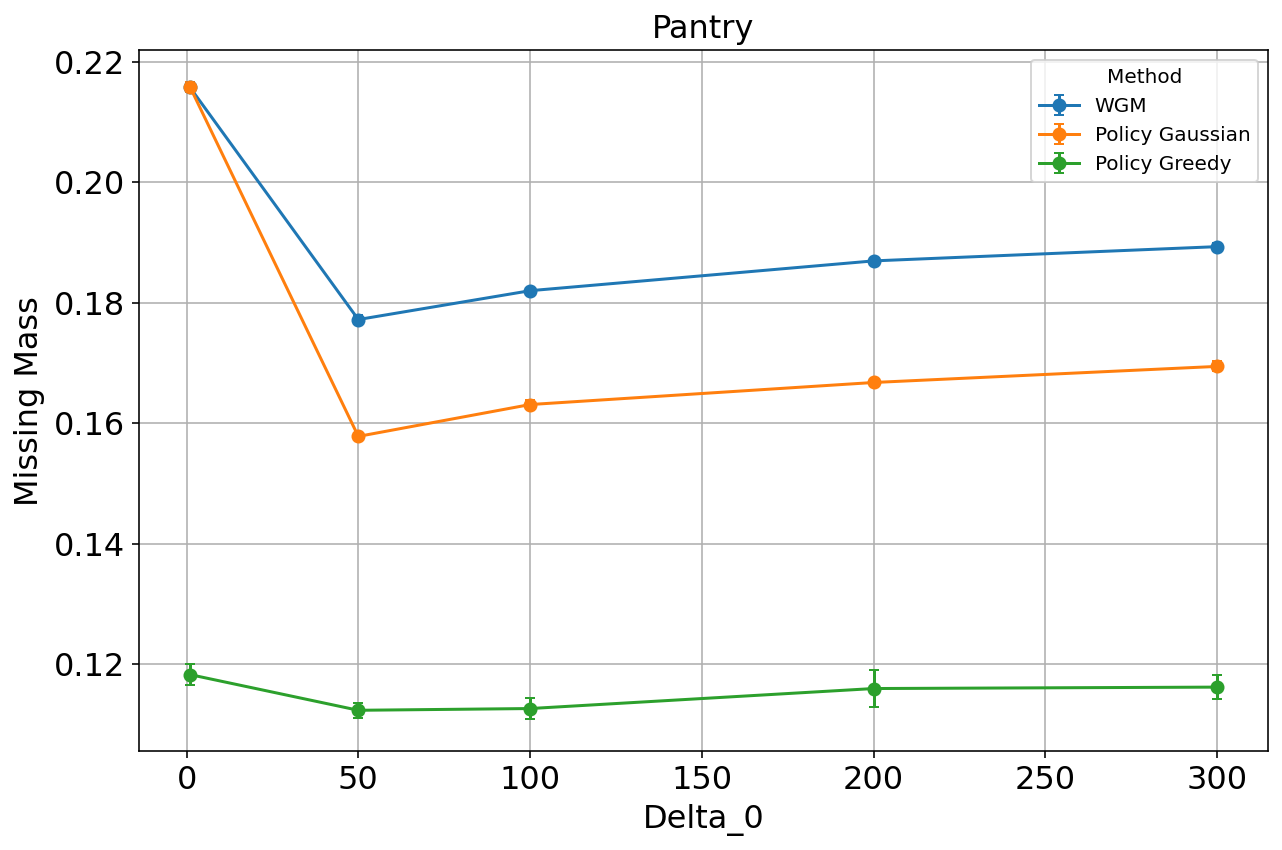} 
    }
    \hfill
    \subfloat[Missing Mass]{
        \includegraphics[width=0.32\linewidth]{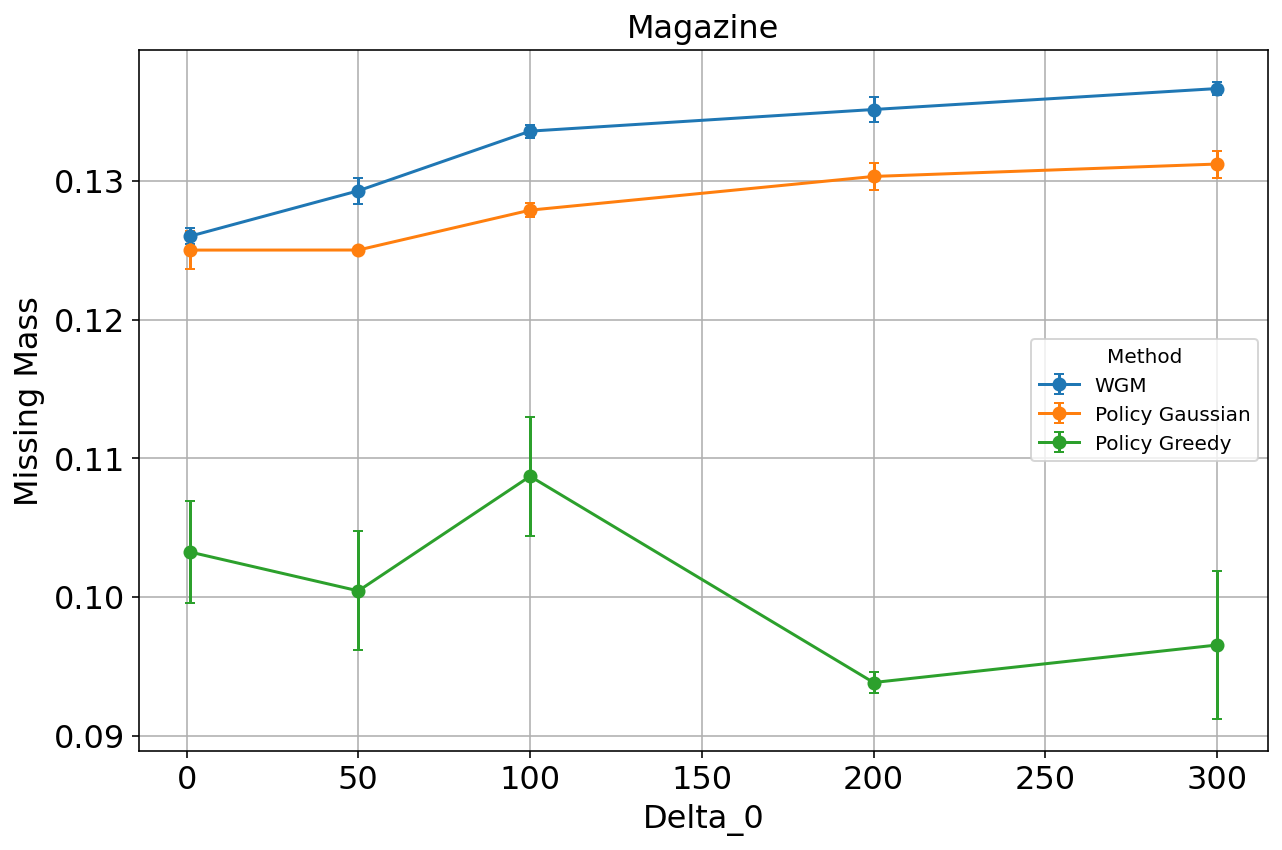} 
    }
    \caption{MM as a function of $\Delta_0 \in \{1, 50, 100, 150, 200, 300\}$ for the small datasets.}
    \label{fig:ddsmalldatasets}
\end{figure}

\subsubsection{Results for $\eps=0.10$}

Figures \ref{fig:ddlargeeps0.1} and \ref{fig:ddsmalleps0.1}  plot the MM for the large and small datasets respectively when $\eps= 0.1$ and $\delta = 10^{-5}.$

\begin{figure}[H] 
    \centering
    \subfloat[Missing Mass]{
        \includegraphics[width=0.32\linewidth]{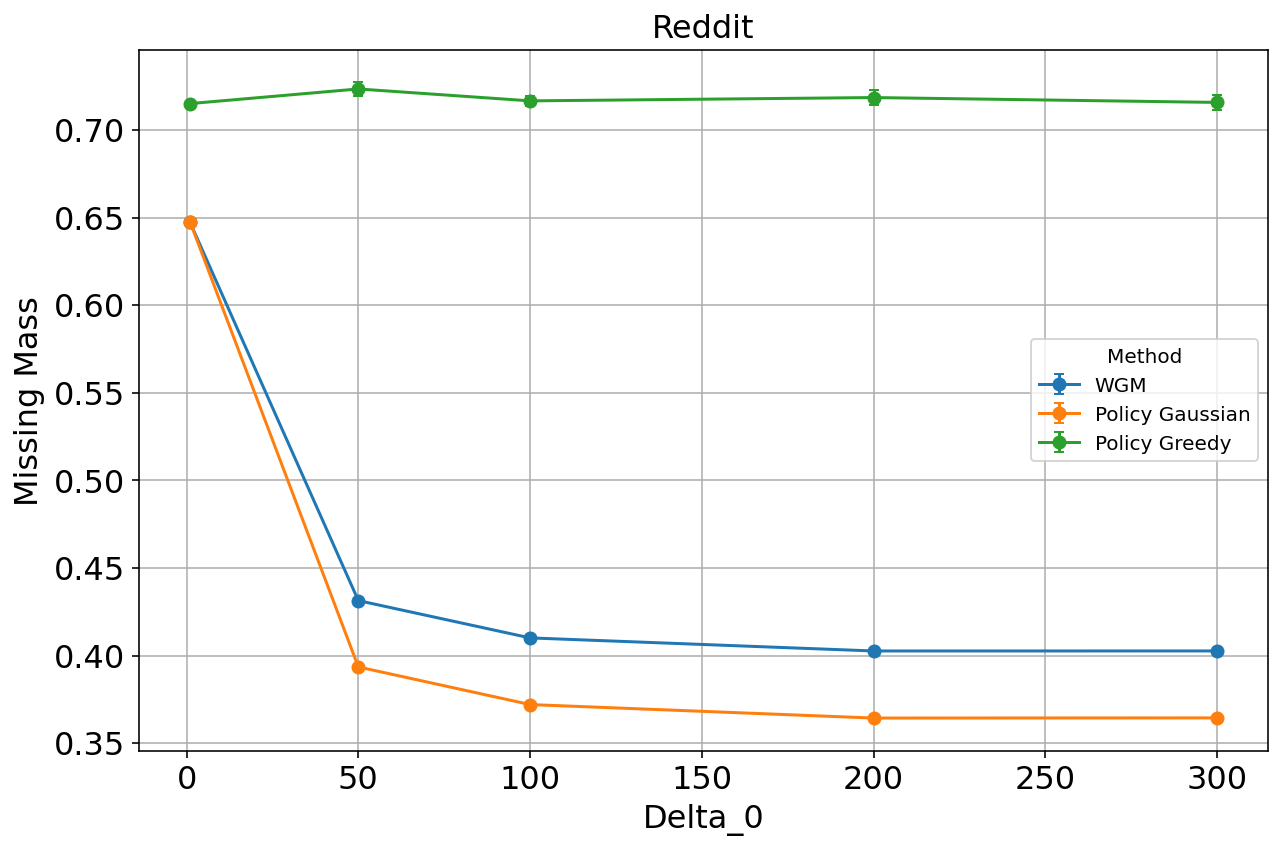} 
    }
    \hfill
    \subfloat[Missing Mass]{
        \includegraphics[width=0.32\linewidth]{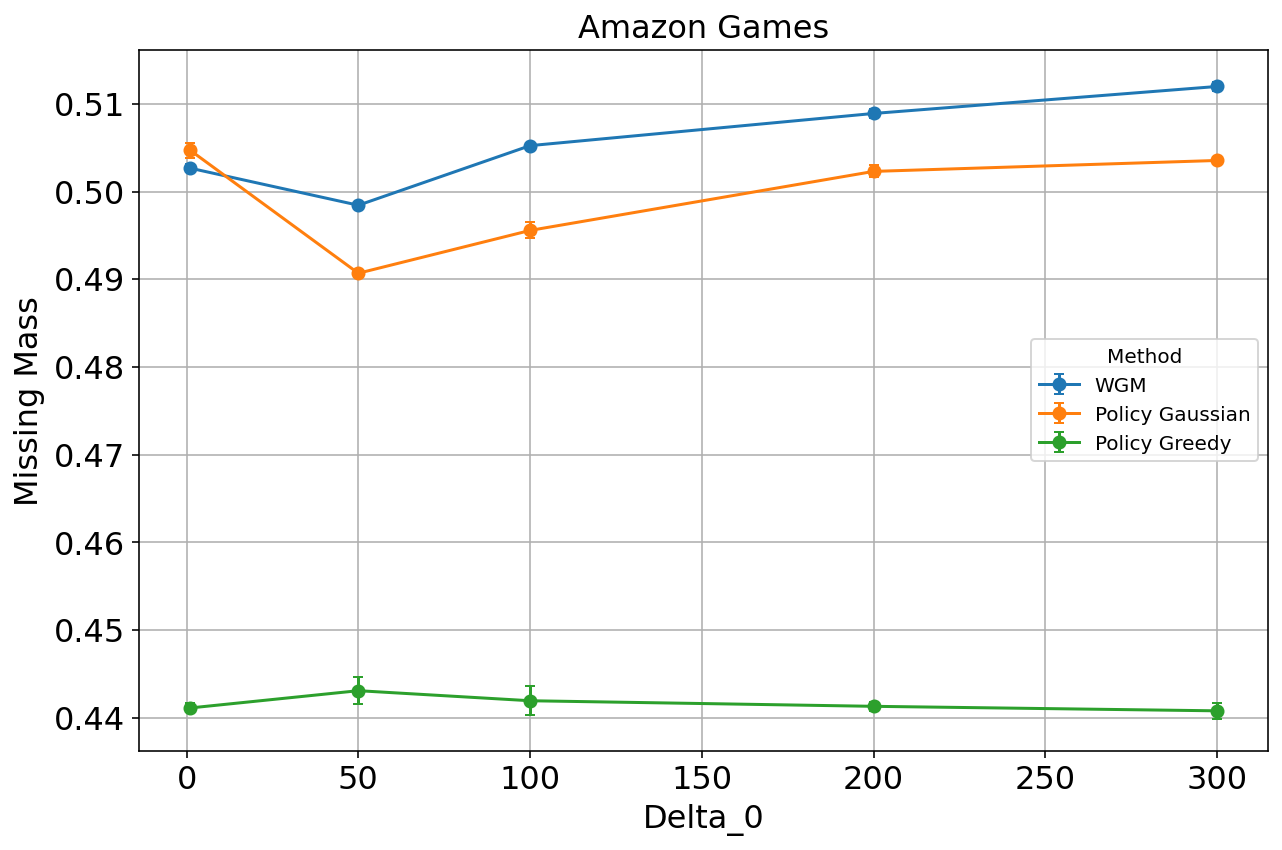} 
    }
    \hfill
    \subfloat[Missing Mass]{
        \includegraphics[width=0.32\linewidth]{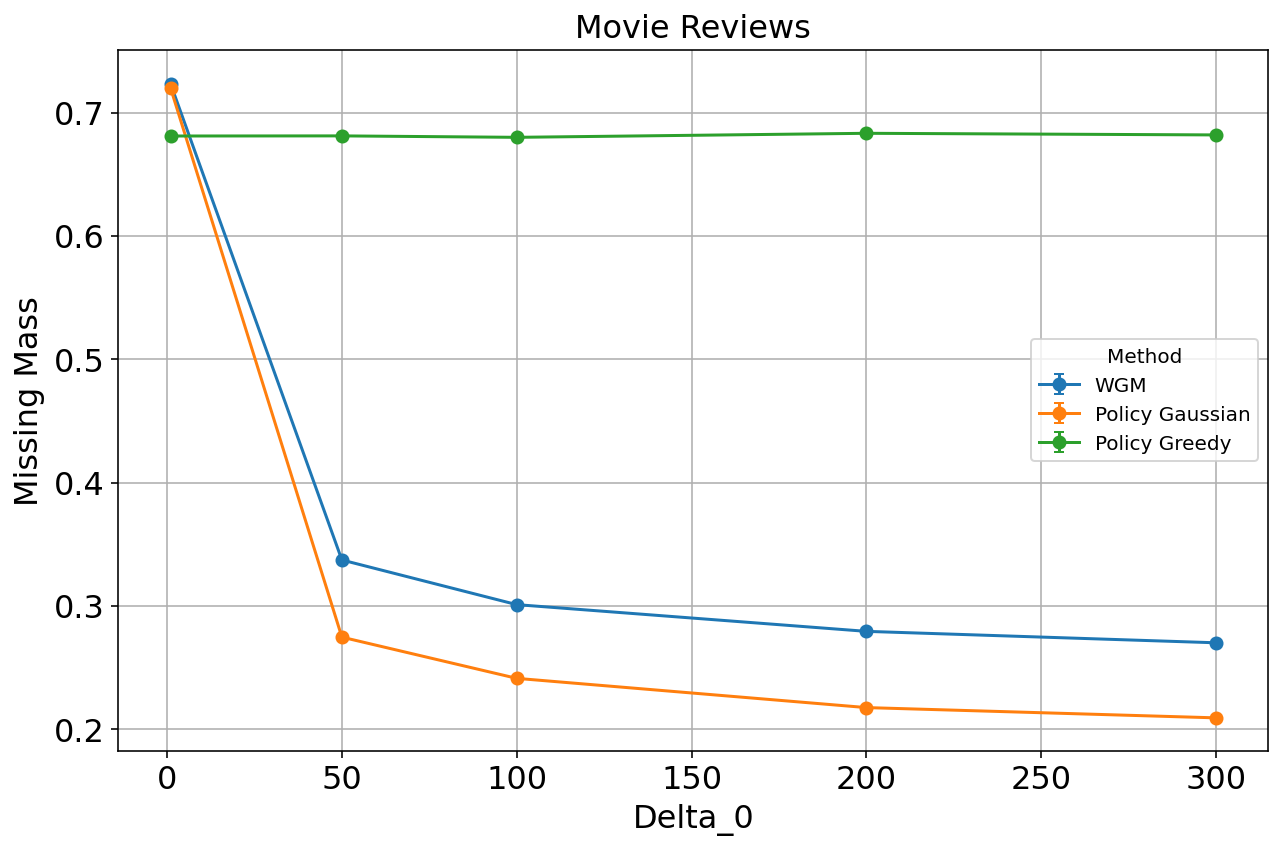} 
    }
    \caption{MM as a function of $\Delta_0 \in \{1, 50, 100, 150, 200, 300\}$ for large datasets when $\eps=0.1$ and $\delta= 10^{-5}$.}
    \label{fig:ddlargeeps0.1}
\end{figure}

\begin{figure}[H] 
    \centering
    \subfloat[Missing Mass]{
        \includegraphics[width=0.32\linewidth]{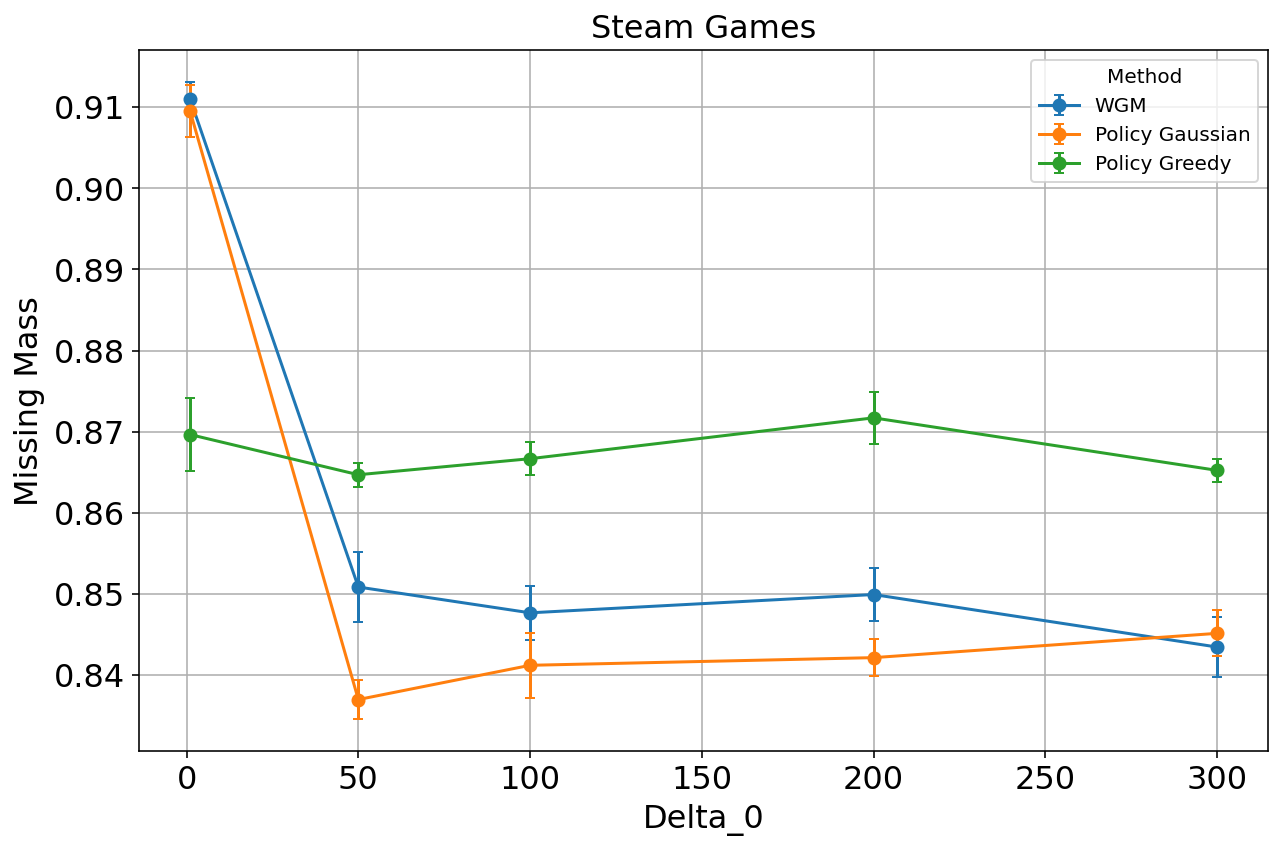} 
    }
    \hfill
    \subfloat[Missing Mass]{
        \includegraphics[width=0.32\linewidth]{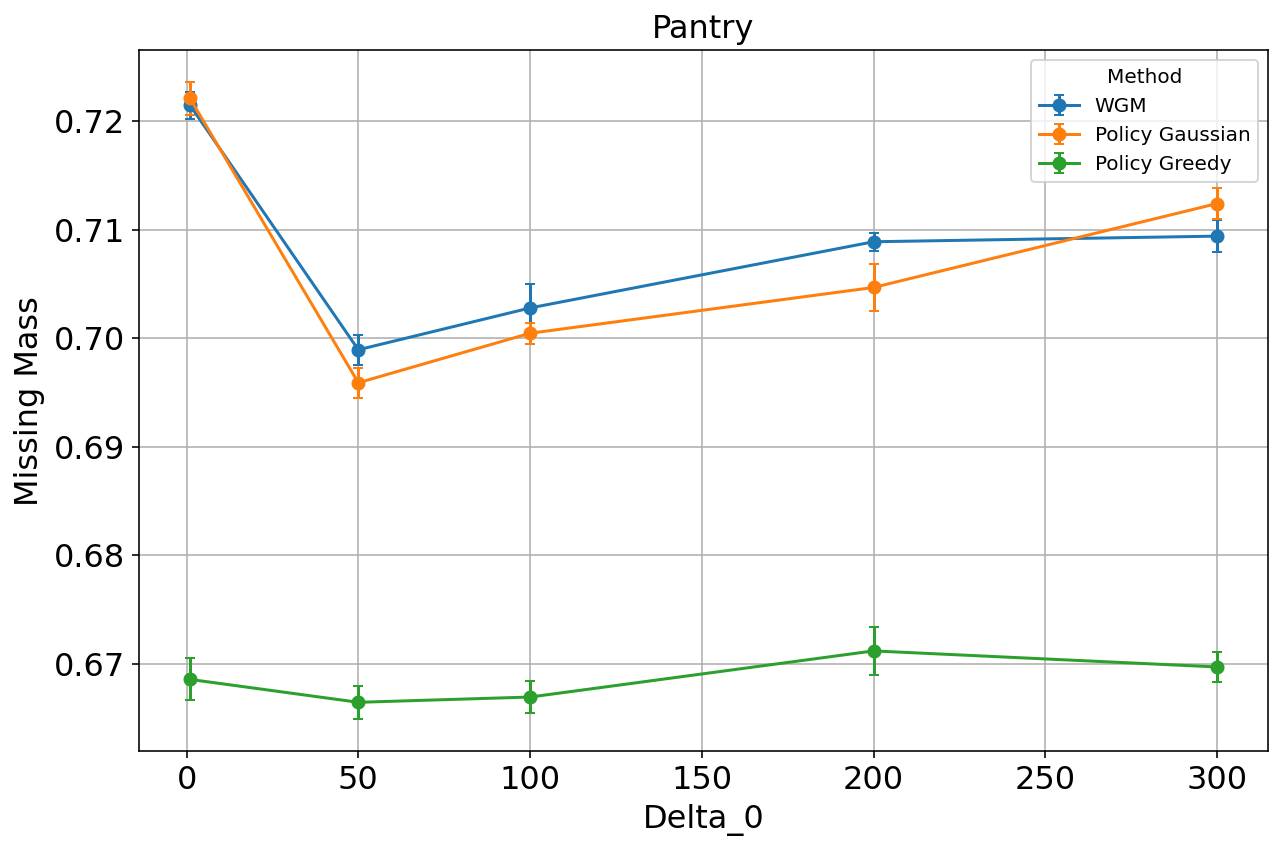} 
    }
    \hfill
    \subfloat[Missing Mass]{
        \includegraphics[width=0.32\linewidth]{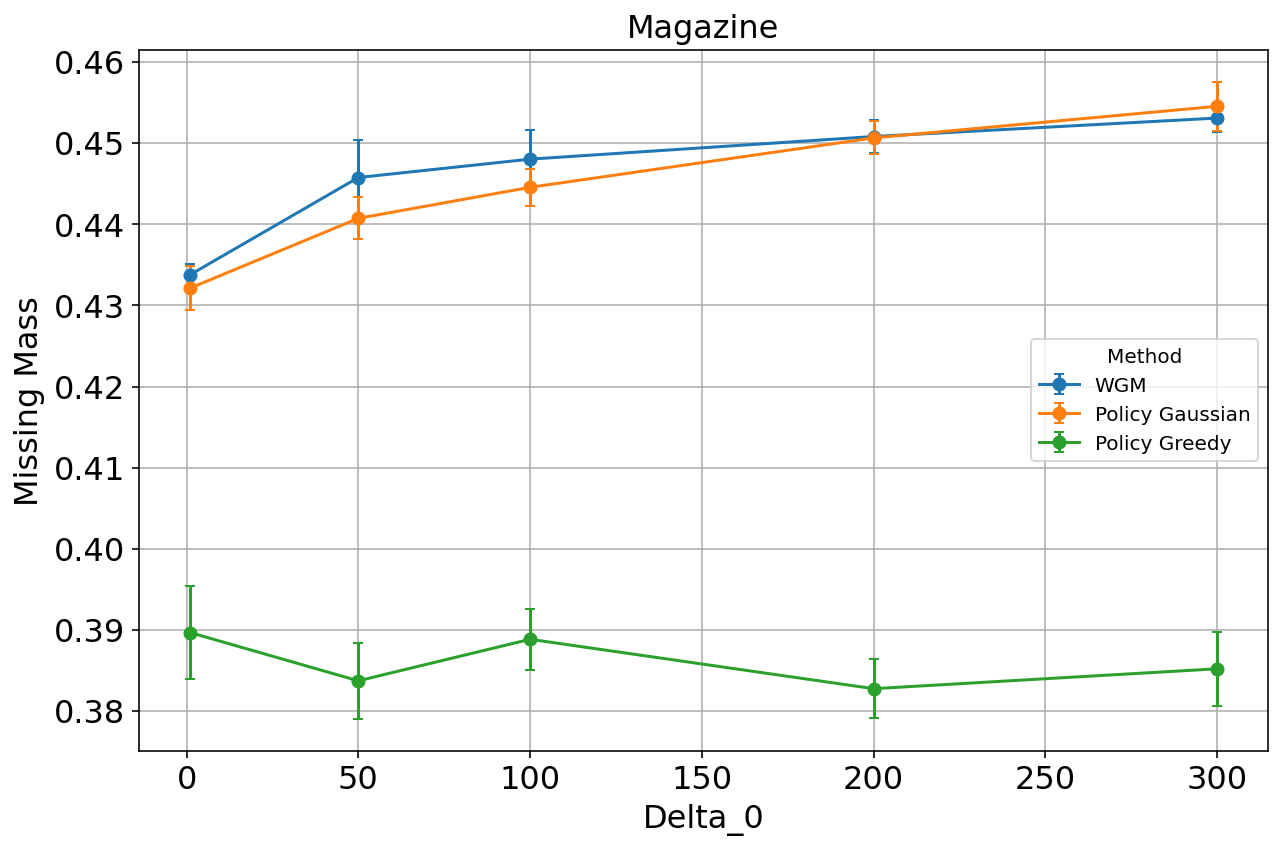} 
    }
    \caption{MM as a function of $\Delta_0 \in \{1, 50, 100, 150, 200, 300\}$ for small datasets when $\eps=0.1$ and $\delta= 10^{-5}$.}
    \label{fig:ddsmalleps0.1}
\end{figure}

\subsection{Top-$k$ Selection} \label{app:exptopk}

The top-$k$ $\ell_1$ loss is defined as 
$$
\ell_1^k (W ,S) = \sum_{i=1}^{\min\{|S|, k\}} |N_{(i)}   -N(S_i)| + \sum_{i = \min\{|S|, k\}}^k N_{(i)},
$$
where $S$ is any \emph{ordered} sequence of items. Unlike the top-$k$ MM, the top-$k$ $\ell_1$ loss cares about the order of items output, and hence is a more stringent measure of utility. 

Figure \ref{fig:topkell_1} plots the top-$k$ $\ell_1$ MM for $\eps=1.0$, $\delta= 10^{-5}$, and $\Delta_0= 100.$ Similar to Figure \ref{fig:topk}, we observe that our method (purple) achieves significantly less Top-$k$ $\ell_1$ loss compared to all baselines.

\begin{figure} 
    \centering
    \subfloat[Steam Games]{
        \includegraphics[width=0.32\linewidth]{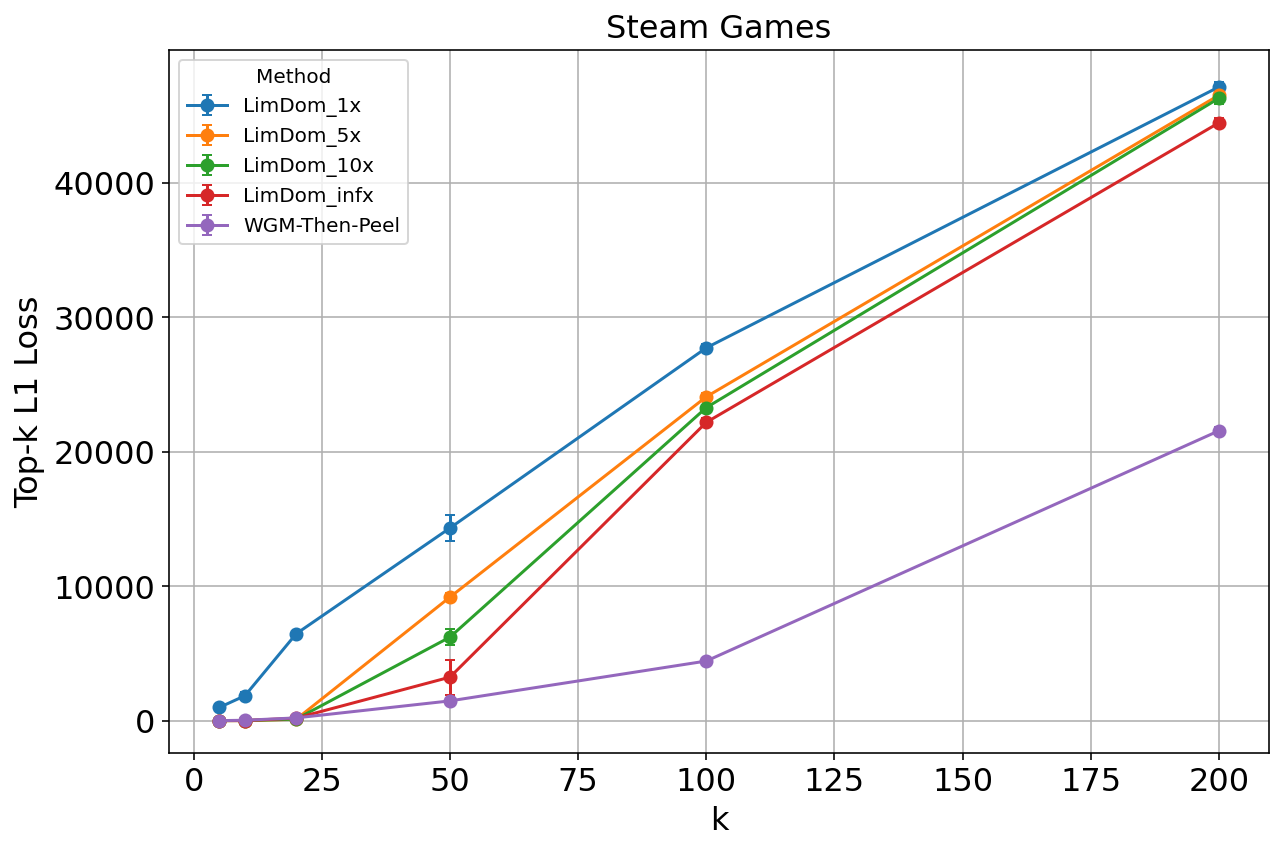} 
    }
    \hfill 
    \subfloat[Amazon Magazine]{
        \includegraphics[width=0.32\linewidth]{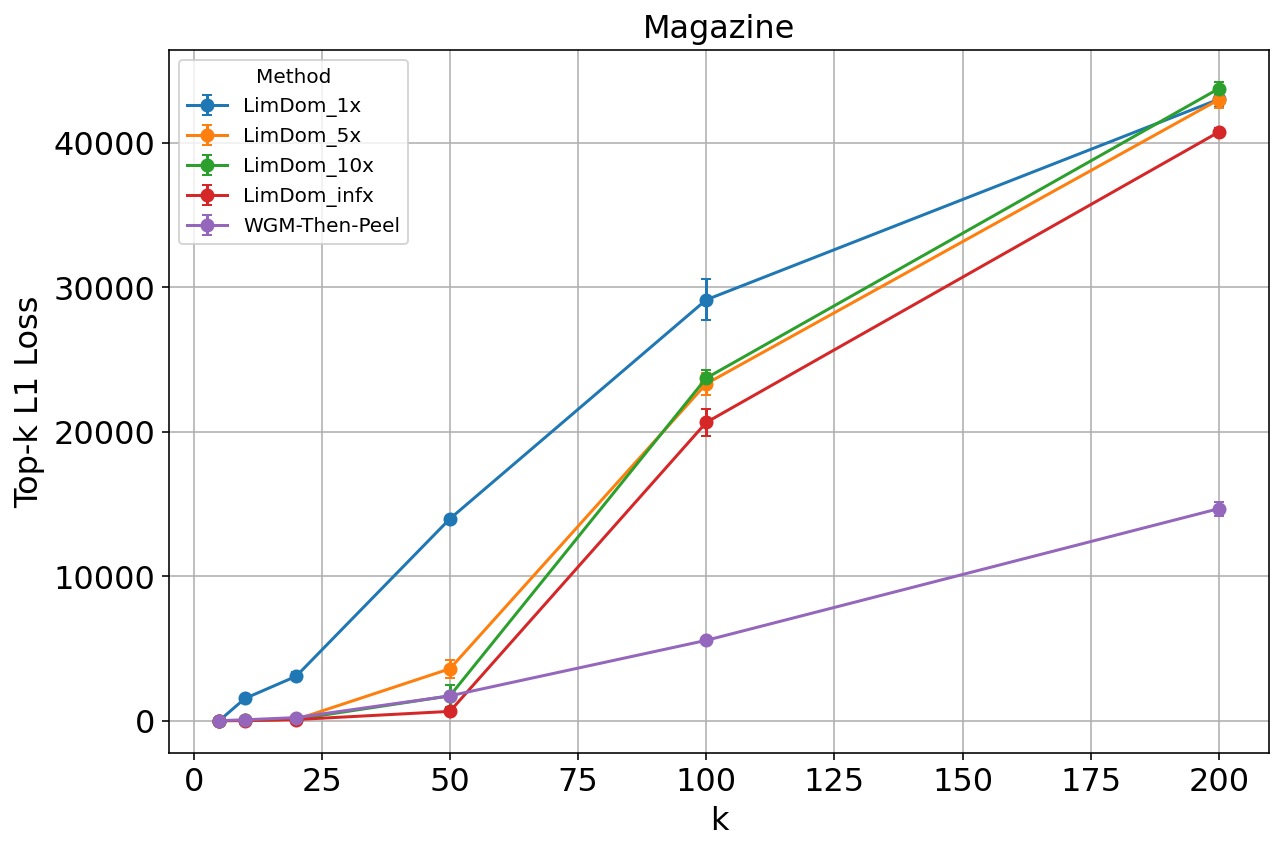} 
    }
        \subfloat[Amazon Pantry]{
        \includegraphics[width=0.32\linewidth]{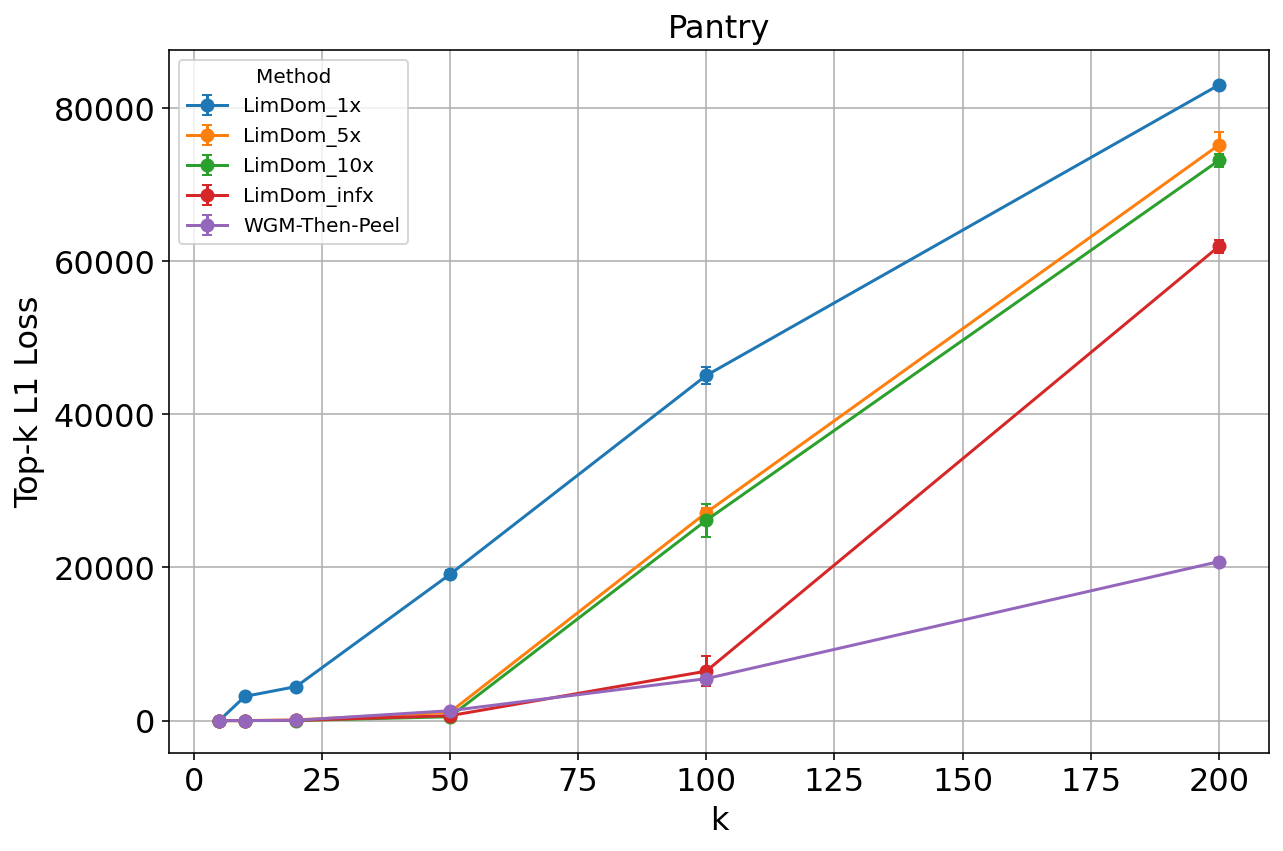} 
    }
    \caption{Top-$k$ $\ell_1$ loss vs. $k$  with $\eps = 1.0$,  $\delta = 10^{-5}$, and $\Delta_0 = 100$.}
    \label{fig:topkell_1}
\end{figure}

Figure \ref{fig:topkeps0.1} plots the top-$k$ MM and top-$k$ $\ell_1$ loss for the small datasets when $\eps= 0.1$ and $\delta = 10^{-5}.$ Like the case when $\eps=1.0$, our method (purple) continues to outperform the baselines across all $k$ values. 

\begin{figure} 
    \centering
    \subfloat[Top-$k$ MM]{
        \includegraphics[width=0.42\linewidth]{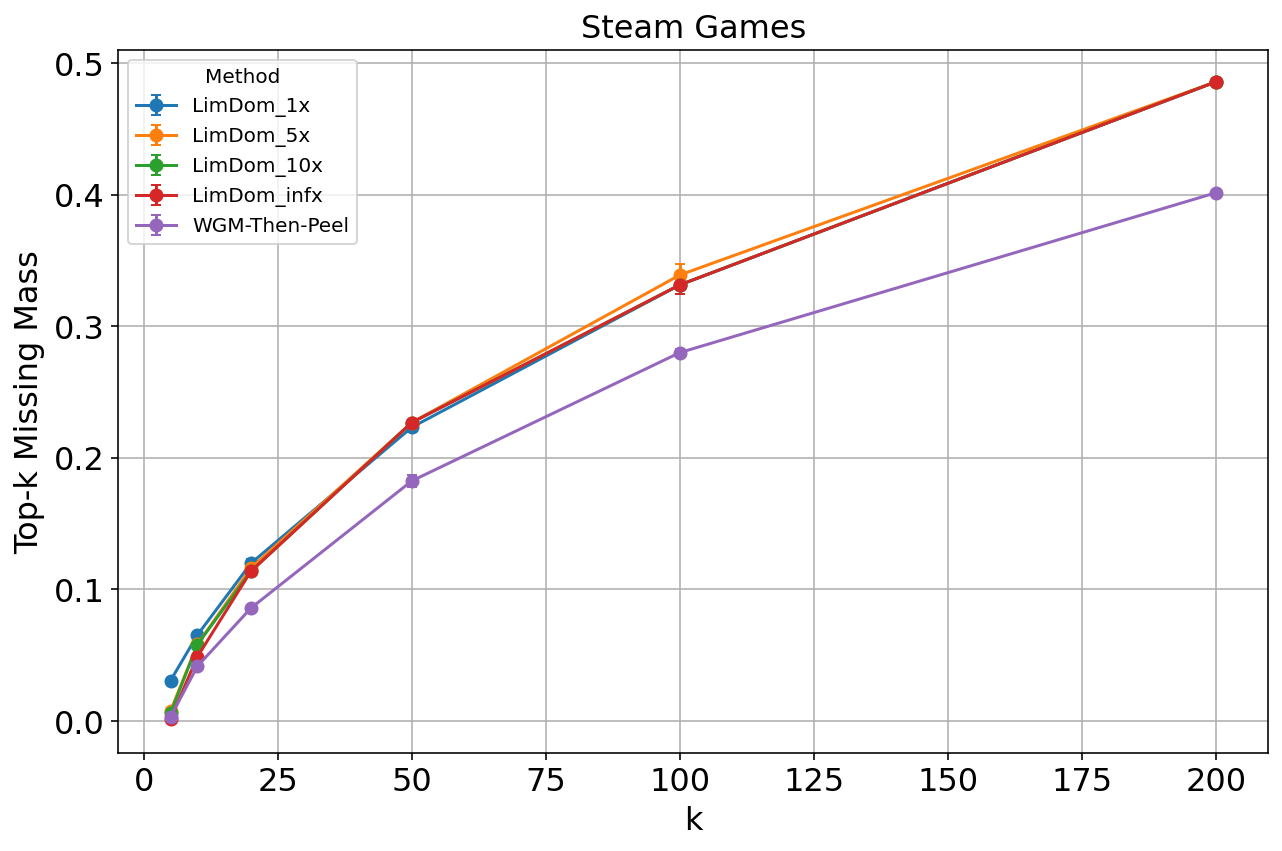} 
    }
    \subfloat[top-$k$ $\ell_1$]{
        \includegraphics[width=0.42\linewidth]{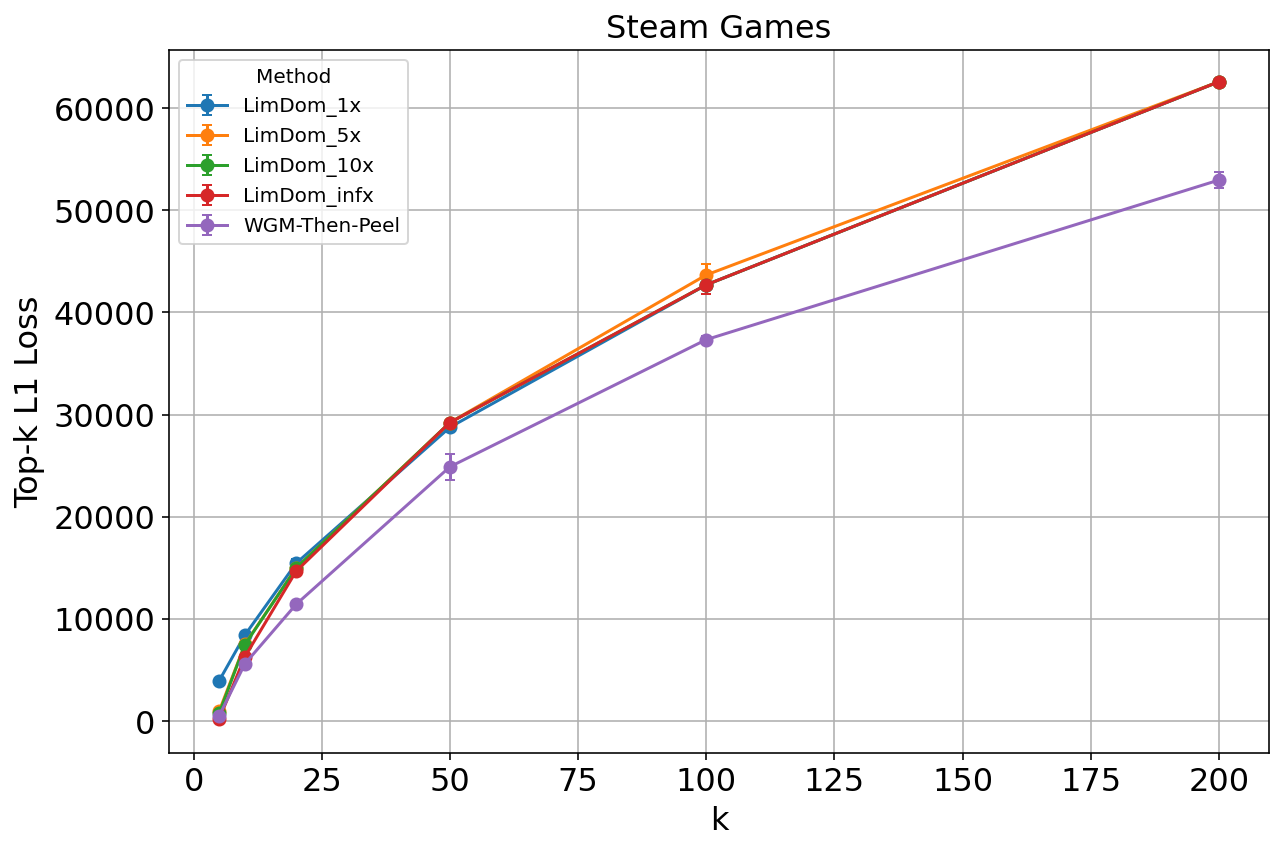} 
    }
    \vfill
    \subfloat[Top-$k$ MM]{
        \includegraphics[width=0.42\linewidth]{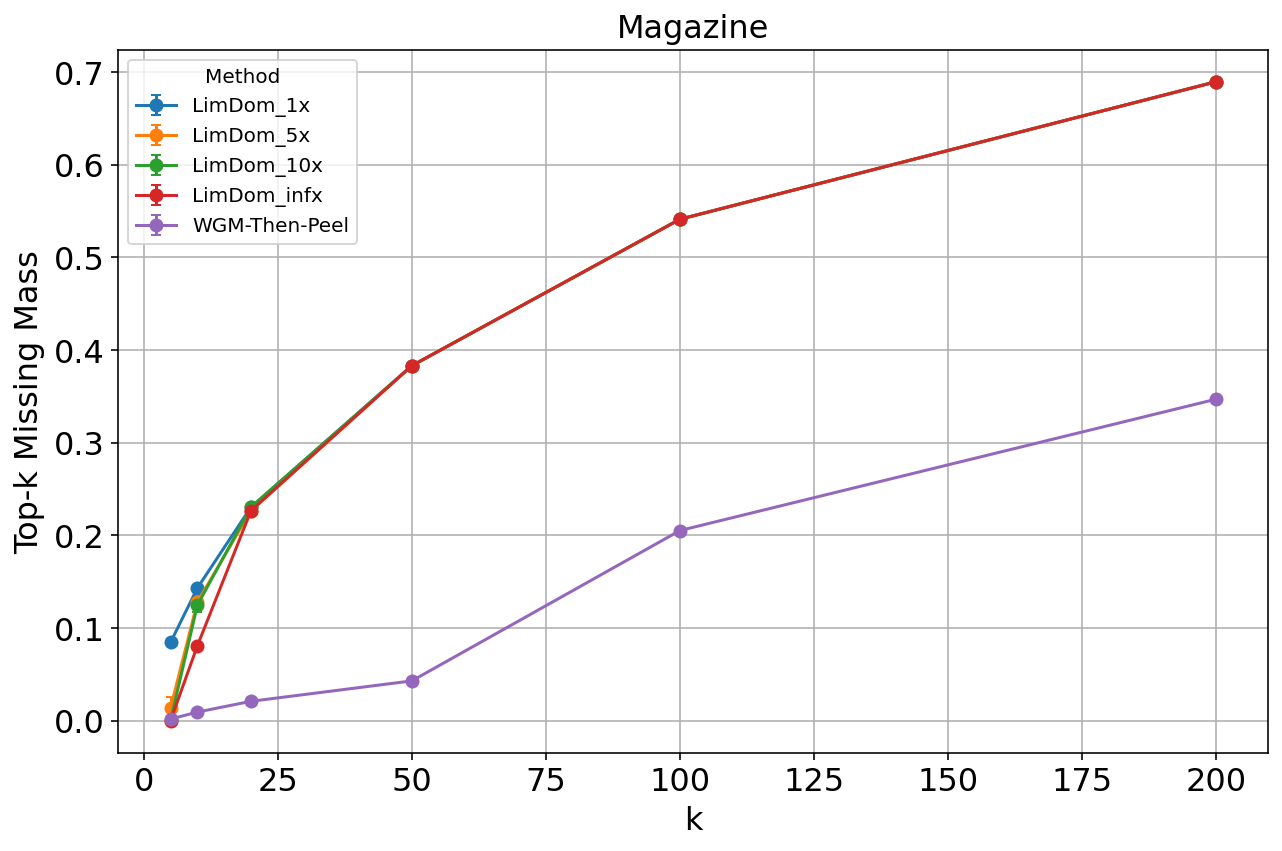} 
    }
    \subfloat[top-$k$ $\ell_1$]{
        \includegraphics[width=0.42\linewidth]{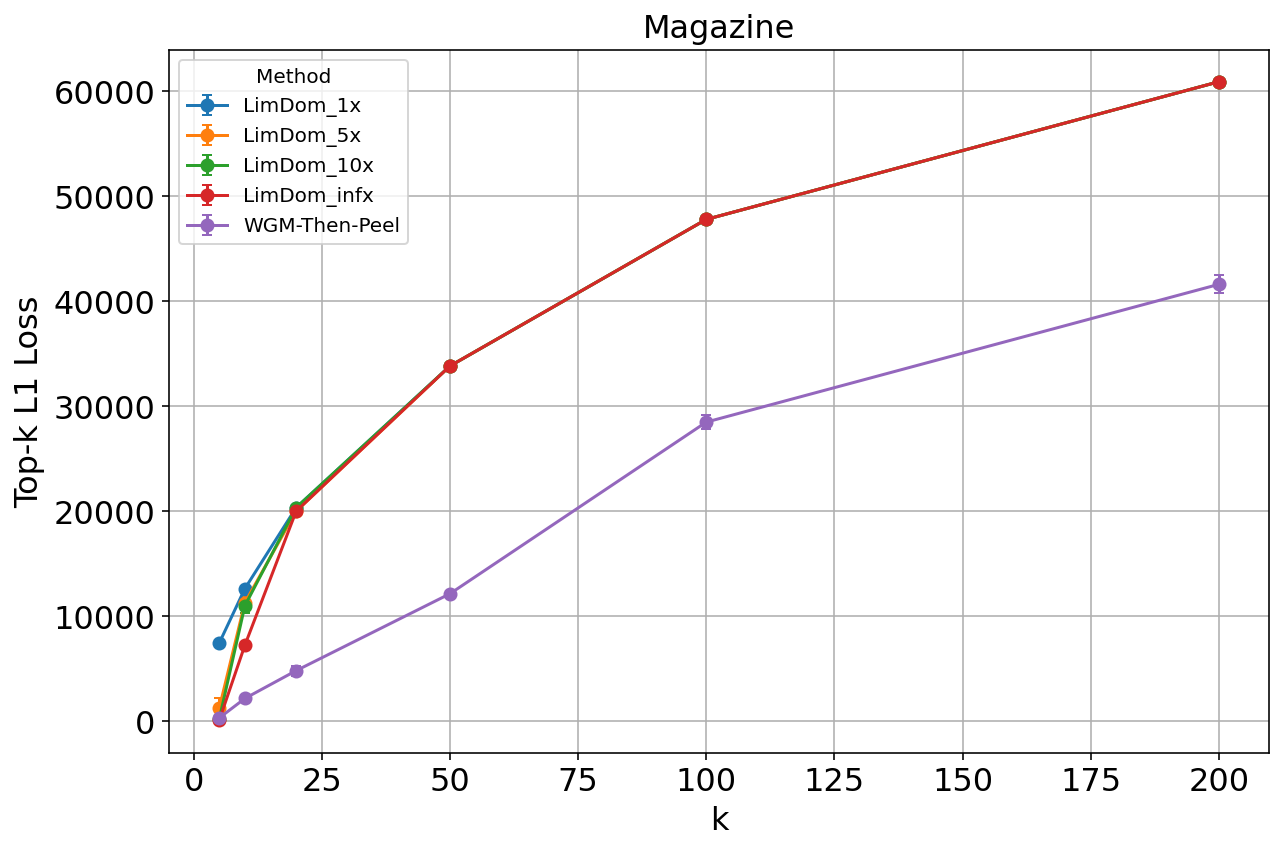} 
    }
    \vfill
    \subfloat[Top-$k$ MM]{
        \includegraphics[width=0.42\linewidth]{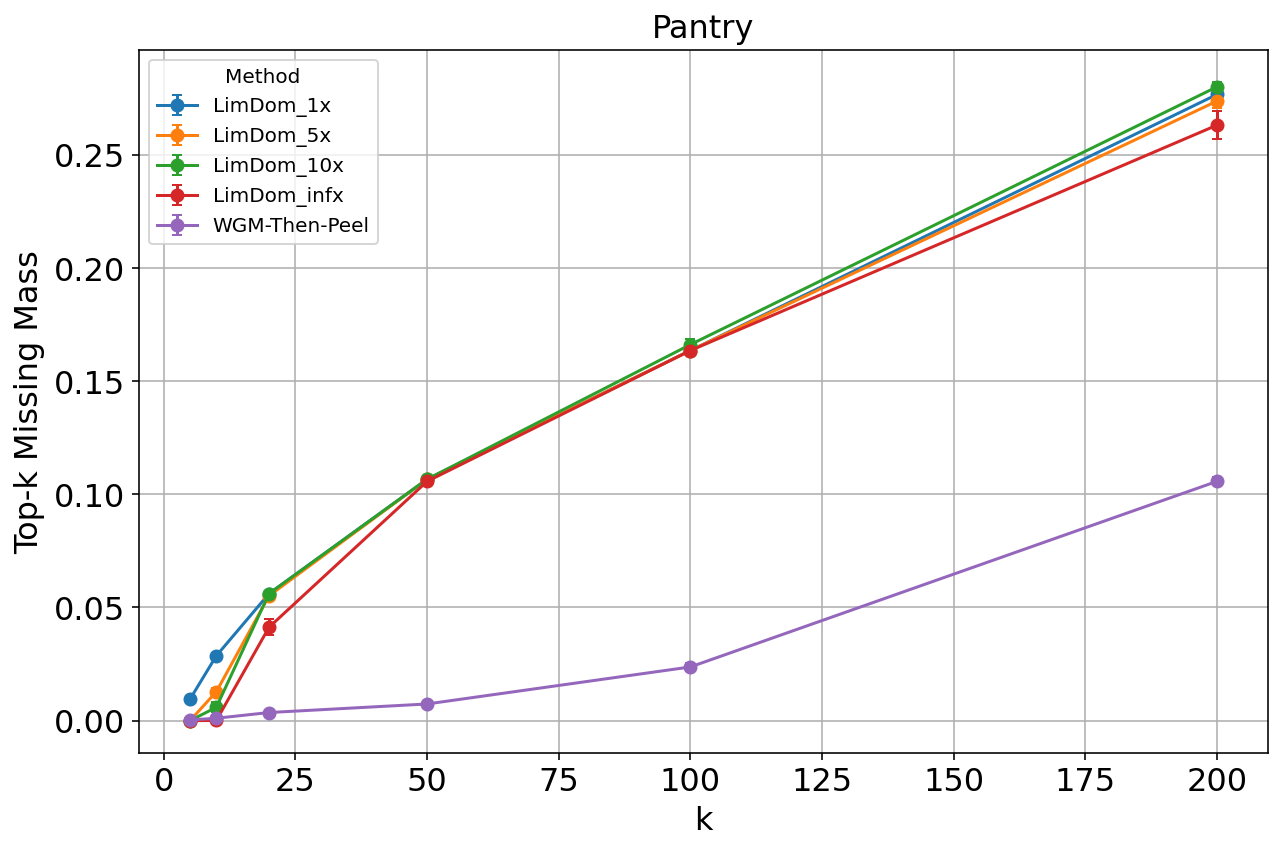} 
    }
    \subfloat[top-$k$ $\ell_1$]{
        \includegraphics[width=0.42\linewidth]{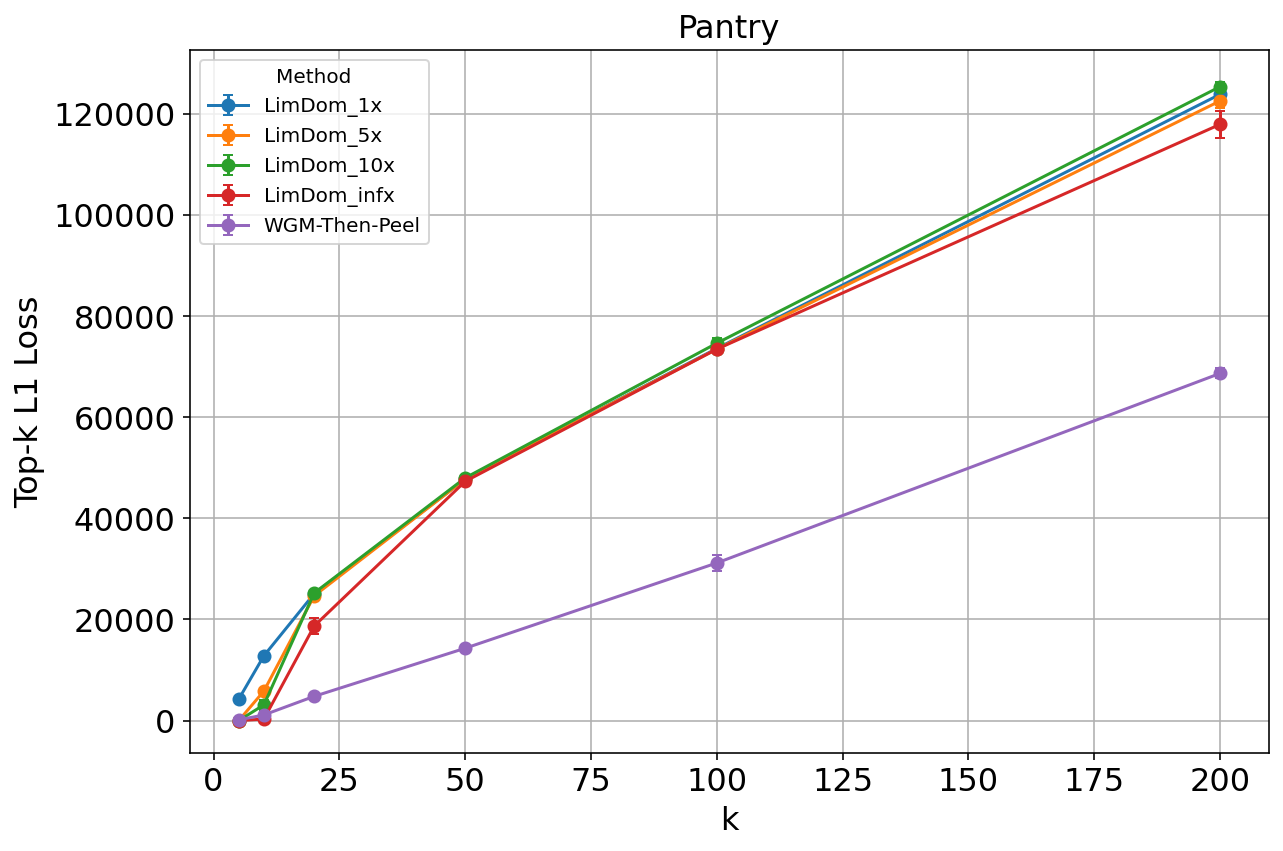} 
    }
    \caption{Top-$k$ MM and Top-$k$ $\ell_1$ vs. $k$ for $k \in \{5, 10, 20, 50, 100, 200\}$,  with $\eps = 0.1$,  $\delta = 10^{-5}$ and $\Delta_0 = 100$.}
    \label{fig:topkeps0.1}
\end{figure}

\subsection{$k$-Hitting Set}

Figure \ref{fig:hiteps0.1} plots the Number of missed users for the small datasets when $\eps= 0.1$, $\delta = 10^{-5}$, and $\Delta_0 = 100.$  We observe that our method (blue) performs comparably and sometimes outperforms the case where the domain $\bigcup_i W_i$ is public.

\begin{figure}[H] 
    \centering
    \subfloat[Steam Games]{
        \includegraphics[width=0.32\linewidth]{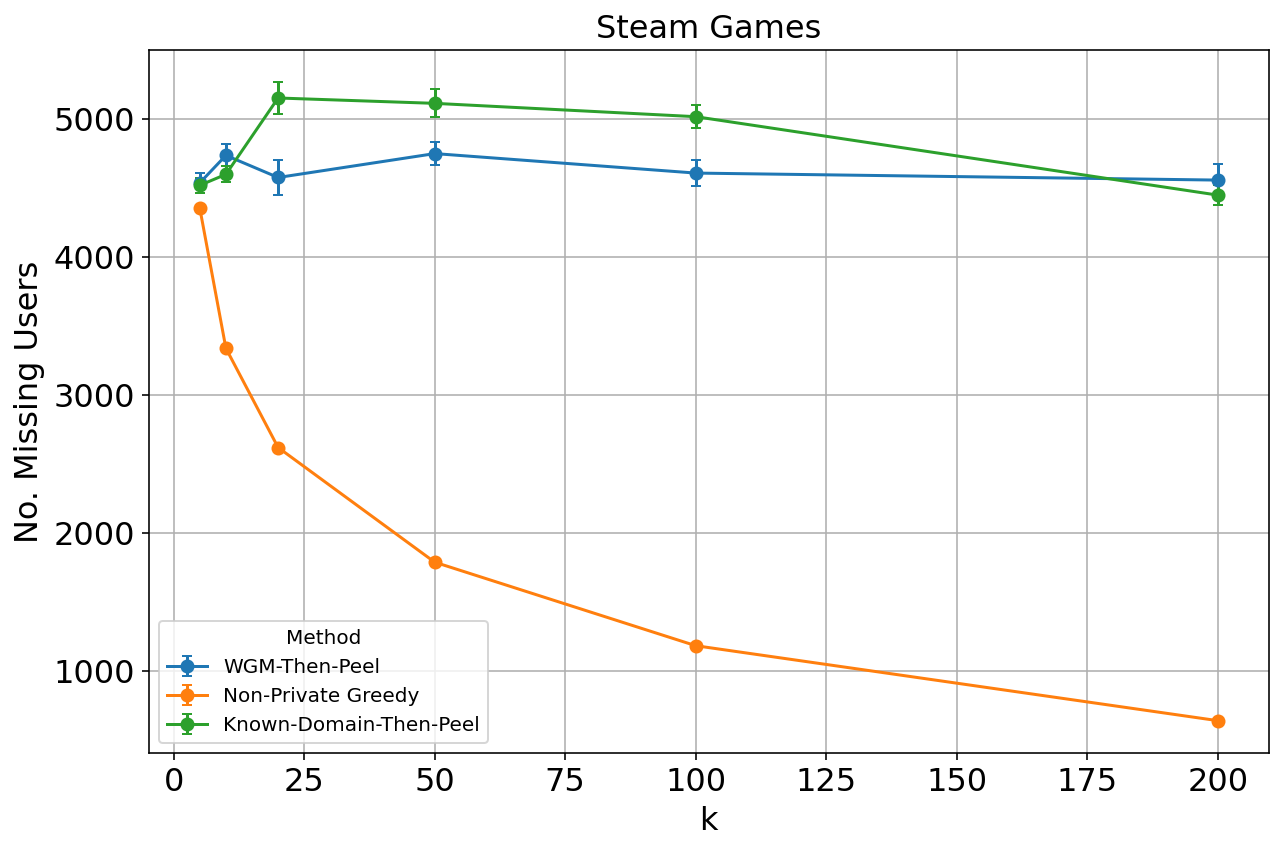} 
    }
    \hfill 
    \subfloat[Amazon Magazine]{
        \includegraphics[width=0.32\linewidth]{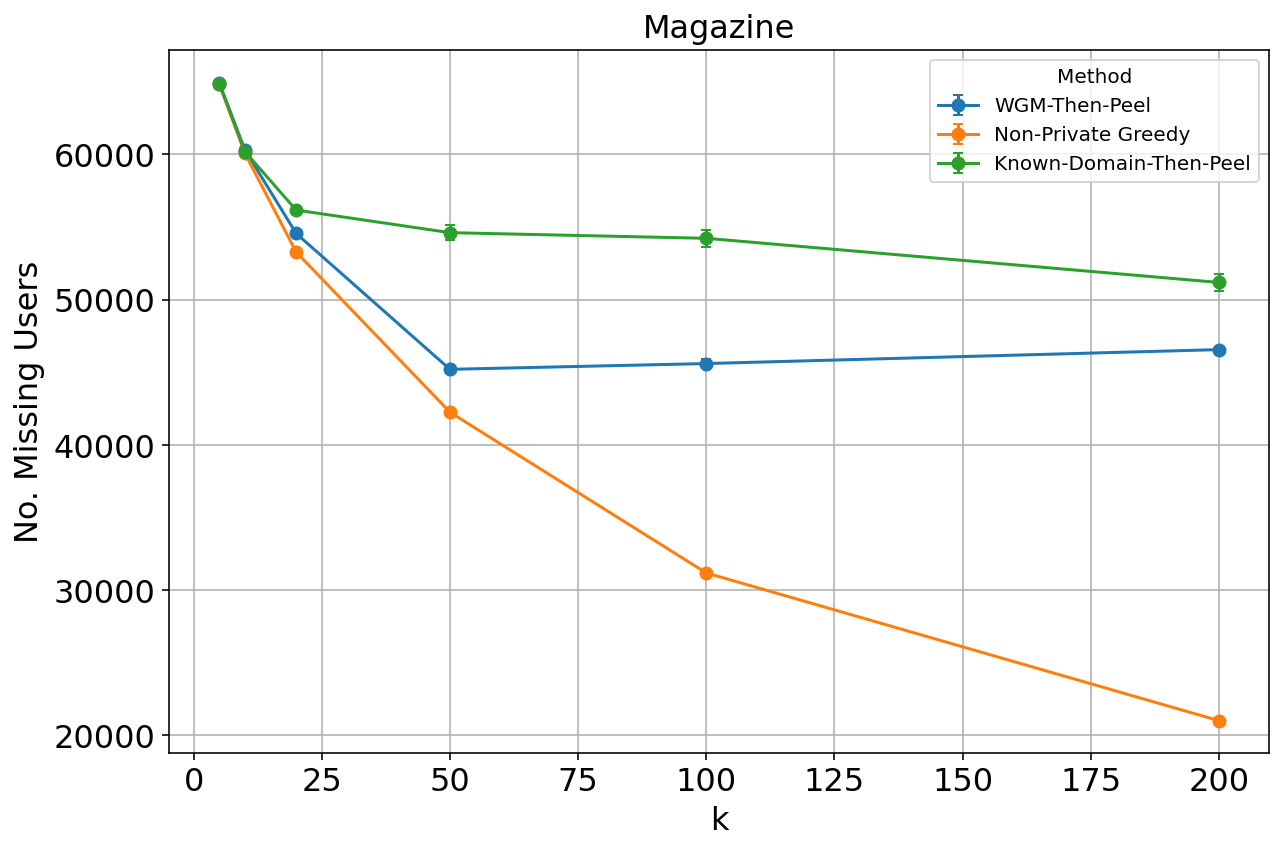} 
    }
        \subfloat[Amazon Pantry]{
        \includegraphics[width=0.32\linewidth]{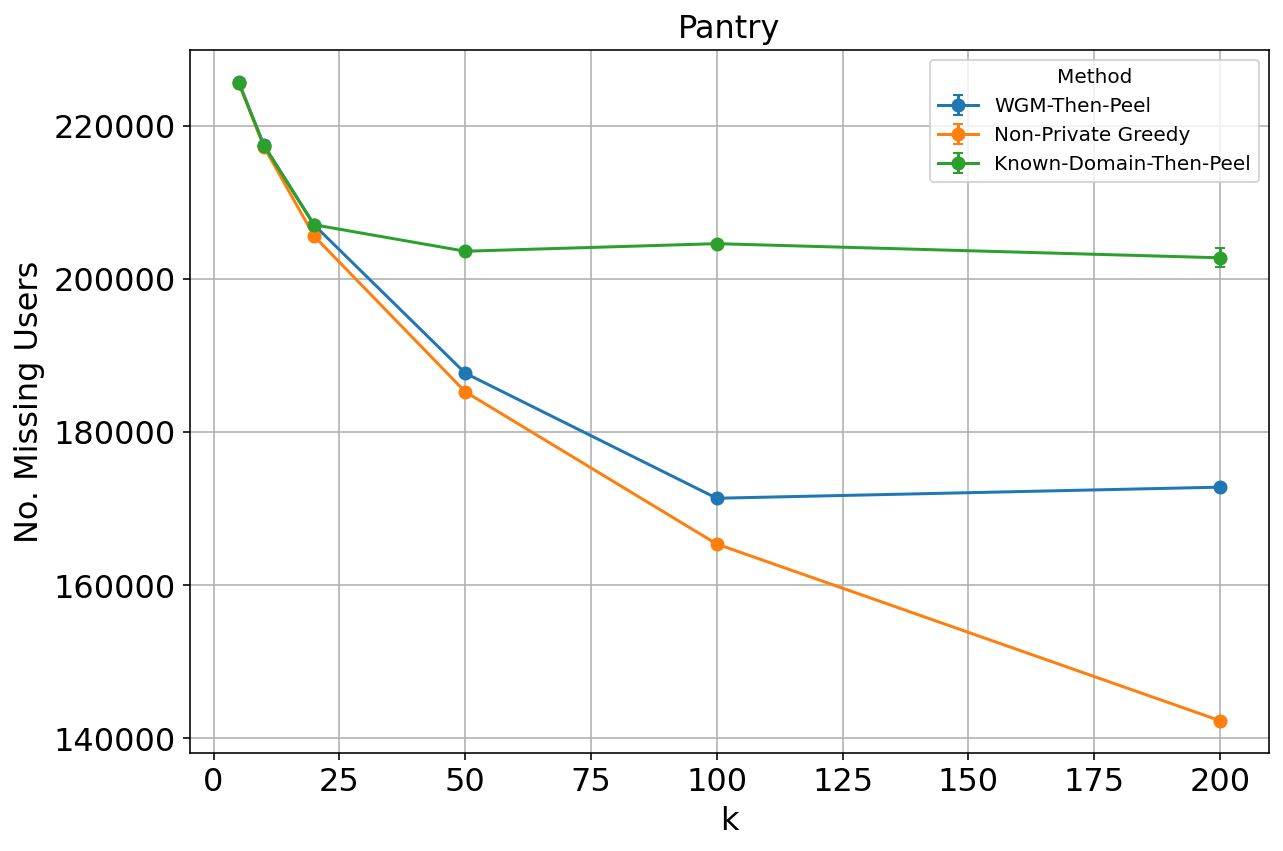} 
    }
    \caption{Number of missed users  vs. $k$ for $k \in \{5, 10, 20, 50, 100, 200\}$ with $\eps = 0.1, \delta = 10^{-5}$, and $\Delta_0 = 100$. }
    \label{fig:hiteps0.1}
\end{figure}

\end{document}